\documentclass[11pt]{article} 
\usepackage{amsmath,amssymb,amsfonts,amsthm}

\usepackage[margin=1in]{geometry}
\usepackage{aliascnt}
\usepackage{hyperref}
\usepackage{enumitem}
\usepackage{float} 
\usepackage{placeins}
\usepackage{mathtools} 
\usepackage{todonotes} 
\usepackage{tikz-cd}
\newcommand{\R}{\mathbb{R}}
\newcommand{\C}{\mathbb{C}}
 
\usepackage{graphicx} 
\theoremstyle{definition}
\newtheorem{theorem}{Theorem}[section] 
\newtheorem{lemma}[theorem]{Lemma} 
\newtheorem{definition}[theorem]{Definition}
\newtheorem{proposition}[theorem]{Proposition}
\newtheorem{corollary}[theorem]{Corollary} 
\newaliascnt{remark}{theorem}
\newtheorem*{theorem*}{Theorem}
\newtheorem{remark}[remark]{Remark}
\aliascntresetthe{remark}

\DeclareMathOperator{\Spec}{Spec}
\DeclareMathOperator{\ind}{ind}

\DeclareMathOperator{\sign}{sign}
 
\DeclareMathOperator{\APS}{APS}

\DeclareMathOperator{\Dom}{Dom} \newcommand{\Z}{\mathbb{Z}}  
\hypersetup{colorlinks,breaklinks, linkcolor = purple, citecolor = teal, urlcolor = purple, }
\numberwithin{equation}{section}

\newcommand{\Id}{\text{Id}}

\DeclareMathOperator{\Tr}{Tr}

\title{Reflection Symmetry, APS Boundary Conditions, and Equivariant Spectral Flow on a Warped Cylinder}
\author{Taro Kimura$^*$, Sanchita Sharma%
\thanks{Université Bourgogne Europe, CNRS, IMB UMR 5584, 21000 Dijon, France}}
\date{}

\begin{document}
\maketitle


\begin{abstract}
We study reflection symmetry and Atiyah-Patodi-Singer (APS) boundary conditions for twisted Dirac operators on a finite warped cylinder. For a complex line twist with holonomy parameter $A$, we show that the reflection 
lifts to a unitary symmetry of the twisted Dirac setting if and only if $2A\in\mathbb Z$. In the resulting reflection-compatible fixed-holonomy case, reflection pairs opposite shifted angular modes, and the paired APS blocks are unitarily equivalent. 
The reflection trace on the APS harmonic space localizes to the unique self-paired zero-mode sector. We then turn to parameter-dependent versions of the model. For fixed gauge-trivial holonomy, the family remains pointwise \(O(2)\)-equivariant, and its spectral flow admits an \(RO(O(2))\)-valued decomposition. For genuinely varying holonomy, pointwise \(O(2)\)-equivariance is lost along the path. The representation-ring-valued invariant is then replaced by a residual sign-level invariant: the mod-two parity of the APS crossing events.

\end{abstract}

\setcounter{tocdepth}{2}
\tableofcontents


\section{Introduction}

The Atiyah-Patodi-Singer (APS) boundary conditions are among the basic nonlocal elliptic boundary conditions for Dirac-type operators in manifolds with boundary \cite{APS1,APS2,APS3}. Defined from the spectral decomposition of the induced self-adjoint boundary Dirac operator, they play a central role in index theory, spectral asymmetry, and spectral flow; see, for example, \cite{LawsonMichelsohn,BGV,BoossWojciechowski1993,BaerBallmann2012}. 

This paper aims to present an explicit analysis of the recently introduced $G$-equivariant spectral flow, which takes a value in the representation ring \cite{Fang,IzydorekJanczewskaWaterstraat2021,IzydorekJanczewskaStarostkaWaterstraat2026}, based on a concrete curved Dirac system.
We work on the finite warped cylinder $M=[0,T]\times S^1$ with the twisted Dirac operator on \(M\), building on the warped-cylinder model from \cite{KS}, and focus on \emph{reflection symmetry}. In this model, the circle action is automatic, but the holonomy of the line twist obstructs reflection symmetry. Thus, the first question is when the reflection map
\begin{equation}
r \colon M \to M, \qquad (t,\theta)\mapsto (t,-\theta),
\end{equation}
lifts to a symmetry of the twisted Dirac setting.

For a line twist with holonomy parameter \(A\), equivalently twist class \([A]\in \R/\Z\), the answer is arithmetic:
\begin{equation}
[A]=[-A]
\qquad\Longleftrightarrow\qquad
2A\in\Z.
\end{equation}
Hence, the reflection-compatible case is exactly the half-integral holonomy case. Once the lift exists, reflection pairs the Fourier modes by
\begin{equation}
k\longmapsto k^\vee=-k-2A,
\qquad\text{equivalently}\qquad
m=k+A\longmapsto -m,
\end{equation}
and this pairing governs the APS boundary structure.

This gives a concrete setting in which one can see simultaneously the reflection lift, the holonomy obstruction, the induced Fourier-block pairing, the symmetry-visible APS harmonic data, the \(RO(O(2))\)-valued refinement of spectral flow, and the residual mod-two invariant that remains once pointwise equivariance is lost. Here, the relevant compact group is \(O(2)\), generated by circle rotations and reflection, and the warped-cylinder model provides an explicit bridge between abstract APS/equivariant-spectral-flow theory and computable Dirac boundary operators with symmetry.

We note several related recent directions that are not used directly in the proofs below, but place the current paper in a broader modern context. For equivariant spectral flow and boundary $\eta$-invariants on manifolds with boundary, see Lim and Wang \cite{LimWang2021EquivariantSpectralFlowEtaBoundary}; for broader family-level equivariant spectral-flow frameworks, see Hochs and Yanes \cite{HochsYanes2024EquivariantSpectralFlowDiracType}. A wider equivariant APS background is provided by Hochs, Wang, and Wang \cite{HochsWangWang2019EquivariantAPSI,HochsWangWang2020EquivariantAPSII}, while related boundary-equivariant Toeplitz index questions are studied in \cite{LimWang2021EquivariantToeplitzBoundary}. We also mention recent perspectives on spectral flow and APS index theory in \cite{vanDenDungen2024DiracSchrodingerIndexSpectralFlow,BaerZiemke2025SpectralFlowAPSIndex}, as well as a further $\mathbb Z_2$-valued spectral-flow viewpoint in \cite{BravermanHajSaeediSadegh2024Z2ToeplitzSpectralFlow}. From the viewpoint of condensed-matter physics, reflection and more general crystalline symmetries play a central role in topological crystalline phases; see, for example, a review by Ando and Fu \cite{Ando_Fu_2015}.

The analysis is split into three cases. In the reflection-compatible fixed-holonomy line-twist case, the APS operator decomposes into reflection-paired Fourier blocks. As a consequence, in the invertible-boundary fixed-holonomy case, the ordinary APS index of the flat line-twist model vanishes. When the self-paired sector is present, the finite-dimensional kernel correction has to be kept separately. In either case, a symmetry-visible quantity remains, namely the trace of the lifted reflection on the APS harmonic space. This trace is supported in the self-paired sector \(m=0\).

For moving families, there are again two distinct possibilities. In the line-twist model, if the holonomy is fixed and gauge-trivial, then after choosing the gauge \(A_0=0\), the resulting APS family is genuinely pointwise \(O(2)\)-equivariant, and the relevant invariant is the \(RO(O(2))\)-valued equivariant spectral flow. This may be viewed as a representation-ring refinement of ordinary scalar spectral flow; related \(K\)-theoretic reformulations of Dirac indices via spectral flow and \(\eta\)-invariants, including recent APS and mod-two extensions, appear in \cite{AokiFukayaFurutaMatsuoOnogiYamaguchi2024IndexKTheory, AokiFukayaFurutaMatsuoOnogiYamaguchi2025EtaWilsonIndex,AokiFujitaFukayaFurutaMatsuoOnogiYamaguchi2026CapturingAPS,AokiFujitaFukayaFurutaMatsuoOnogiYamaguchi2026Generalization}. Away from the self-paired sector, every non-self-paired reflection orbit then contributes an even amount to the ordinary spectral flow. 

By contrast, for non-constant holonomy deformations in the original line-twist model, pointwise \(O(2)\)-equivariance necessarily breaks down along the path. The full \(RO(O(2))\)-valued spectral flow is then unavailable for the family, even when the endpoints admit a reflection lift. What remains is a sign-level \(\mathbb Z_2\)-invariant of the deformation: the parity of the APS crossing events \(k+A(s)=0\). Using the kernel-crossing mechanism established in \cite{KS}, we identify this invariant as a \(\mathbb Z_2\)-valued parity, equivalently, the parity of the events at which one Fourier label changes its APS sign assignment. This is the corresponding sign-level replacement for the full equivariant spectral flow in the non-pointwise-\(O(2)\)-equivariant case. See~\cite{DollSchulzBaldesWaterstraat2019,NicolaescuOT}.

This mod-two invariant is conceptually related to mod-two APS index constructions \cite{FukayaFurutaMatsukiMatsuoOnogiYamaguchiYamashita2022}.%
\footnote{This construction provides the mod-two index based on the domain-wall fermion. There is another formulation of the mod-two index based on the overlap fermion~\cite{Clancy:2023ino,Kimura:2023kqs}.}
In the present paper, however, we do not construct a domain-wall operator or a determinant-line transport for the warped-cylinder holonomy family. Consequently, this comparison is used only as motivation for the sign-level nature of the spectral flow $\operatorname{sf}_{\mathbb Z_2}(A)$ and does not play a role in the proofs below.

\paragraph{Main results.}
We summarize the main results as follows.
\begin{enumerate}[label=(\roman*)]
\item The reflection \(r : (t,\theta) \mapsto (t,-\theta)\) lifts to a unitary symmetry of the twisted warped-cylinder Dirac setting if and only if
\begin{equation}
2A\in\Z.
\end{equation}

\item In the reflection-compatible fixed-holonomy case, reflection pairs the Fourier blocks by
\begin{equation}
k\leftrightarrow k^\vee=-k-2A,
\qquad
m\leftrightarrow -m.
\end{equation}
The paired APS blocks are unitarily equivalent. In the invertible-boundary case, the ordinary APS index of the fixed-holonomy line-twist model with flat twisting connection vanishes; when the self-paired sector is present, the finite-dimensional kernel correction has to be kept separately. The reflection trace in the APS harmonic space is supported in the self-paired sector \(m=0\).

\item For gauge-trivial fixed-holonomy bulk perturbation families in the line-twist model, equivalently for $[A_0]=0$ and after choosing the gauge $A_0=0$, the relevant invariant is the \(RO(O(2))\)-valued equivariant spectral flow. Away from the self-paired sector, every non-self-paired reflection orbit contributes an even amount to the ordinary spectral flow.

\item For non-constant holonomy deformations in the original line-twist model, the remaining invariant is the mod-two crossing parity
\begin{equation}
\operatorname{sf}_{\mathbb Z_2}(A),
\end{equation}
namely, the parity of the crossing events \(k+A(s_*)=0\), equivalently, the parity of the one Fourier label changes its APS sign assignment.
\end{enumerate}

Taken together, these results divide the paper into three cases: the reflection-compatible fixed-holonomy line-twist case, the gauge-trivial fixed-holonomy moving-family case in the line-twist model, and the non-constant holonomy-deformation case.  Sections~\ref{sec:O2lift}-\ref{sec:APSindex} treat the first case, Section~\ref{sec:O2_sf} treats the second, and Section~\ref{sec:Z2_flow_1} treats the third.

\paragraph{Organization of the paper.}
In~\autoref{sec:setup} we fix the warped-cylinder Dirac model, its Fourier decomposition, and the APS boundary operator.  \autoref{sec:O2lift} constructs the lifted reflection operator and proves the holonomy obstruction theorem \(2A\in\mathbb Z\).  \autoref{sec:rigorousSF} studies the resulting static
fixed-holonomy APS structure, and \autoref{sec:APSindex} proves the localization of the reflection trace to the self-paired sector.  \autoref{sec:O2_sf} analyzes the gauge-trivial fixed-holonomy line-twist case, precisely \([A_0]=0\) and the
gauge choice \(A_0=0\), and derives the induced \(RO(O(2))\)-valued spectral-flow decomposition.  Finally, \autoref{sec:Z2_flow_1} returns to the original line-twist model and identifies the \(\mathbb Z_2\) crossing parity that survives
once pointwise \(O(2)\)-equivariance is lost.  \autoref{Appendix:A} contains the numerical single-mode APS illustration.

\paragraph{Acknowledgments.}
This work was supported by EIPHI Graduate School (No.~ANR-17-EURE-0002) and the Bourgogne-Franche-Comté region.


\section{Geometric and analytic setup}\label{sec:setup}

This section fixes the warped-cylinder Dirac model and the notation used throughout. The geometric and analytic setup is adapted from \cite{KS}.

\subsection{Geometry, bundles, and function spaces}
Let \(M\) be the finite warped cylinder
\begin{equation}\label{eq:MT_metric}
M=[0,T]\times S^1,
\qquad
g=dt^2+f(t)^2\,d\theta^2,
\end{equation}
where \(f\in C^\infty([0,T])\) is strictly positive. Thus, \(t\in[0,T]\) is the longitudinal coordinate, \(\theta\) is the angular coordinate on \(S^1\), and the metric \(g\) is the warped product line element \(ds^2=dt^2+f(t)^2\,d\theta^2\), so that the circle fiber at \(t\) has radius \(f(t)\).
Its boundary is
\begin{equation}\label{eq:boundary_decomp}
\partial M=Y_0\sqcup Y_T,
\qquad
Y_0=\{0\}\times S^1,
\qquad
Y_T=\{T\}\times S^1.
\end{equation}

We use the orthonormal coframe
\begin{equation}\label{eq:coframe}
e^1=dt,
\qquad
e^2=f(t)\,d\theta,
\end{equation}
with dual frame
\begin{equation}\label{eq:dual_frame}
e_1=\partial_t,
\qquad
e_2=\frac{1}{f(t)}\partial_\theta.
\end{equation}
We orient \(M\) by \(dt\wedge d\theta\). With this convention, the inward unit normal is
\begin{equation}\label{eq:inward_normal}
N=
\begin{cases}
+\partial_t,& \text{on }Y_0,\\
-\partial_t,& \text{on }Y_T.
\end{cases}
\end{equation}

Fix a spin structure on \(M\), and let $S\to M$ denote the corresponding complex spinor bundle. In dimension two, \(S\) has rank \(2\) and splits chirally as
\begin{equation}\label{eq:chirality_split}
S=S^+\oplus S^-.
\end{equation}
Let \(E\to M\) be a Hermitian complex line bundle with a unitary connection
\begin{equation}\label{eq:connectionE}
\nabla^E=d+iA\,d\theta,
\qquad
A\in\mathbb R.
\end{equation}
Only the class
\begin{equation}\label{eq:A_class}
[A]\in \mathbb R/\mathbb Z
\end{equation}
is gauge invariant: if one changes the local trivialization of the line bundle by a unitary gauge transformation, then the parameter \(A\) may change by an integer. However, its class modulo \(\mathbb Z\) remains unchanged. Equivalently, different real numbers \(A\) and \(A+n\), with \(n\in\mathbb Z\), determine gauge-equivalent connections and hence the same holonomy around the circle. Thus, the twisted Dirac operator acts on sections of the twisted bundle,

\begin{equation}\label{eq:twisted_bundle}
S\otimes E\to M.
\end{equation}
From this point onward, we suppress the twist \(E\) from the notation when no confusion can arise, and write \(S\), \(S^\pm\), and \(S_{Y_{t_0}}\) for the corresponding twisted bundles.

For any vector bundle \(V\to M\), we write \(\Gamma(V)\) for the smooth sections of \(V\). We use the standard Hilbert and Sobolev spaces
\begin{equation}\label{eq:bulk_spaces}
L^2(M;V),
\qquad
H^1(M;V),
\end{equation}
defined using the Riemannian volume form of \(g\) and the Hermitian bundle metric. On each boundary component \(Y_{t_0}\), where \(t_0\in\{0,T\}\), we similarly write
\begin{equation}\label{eq:boundary_spaces}
L^2(Y_{t_0};V|_{Y_{t_0}}),
\qquad
H^1(Y_{t_0};V|_{Y_{t_0}}),
\qquad
H^{1/2}(Y_{t_0};V|_{Y_{t_0}})
\end{equation}
for the corresponding boundary spaces, where \(H^{1/2}\) is the standard trace space. In particular, for the twisted spinor bundle \(S\), the trace map takes the form
\begin{equation}\label{eq:trace_map}
H^1(M;S)\longrightarrow H^{1/2}(Y_{t_0};S|_{Y_{t_0}}).
\end{equation}

\subsection{Clifford conventions and the twisted Dirac operator}

Choose Pauli matrices \(\sigma_1,\sigma_2,\sigma_3\), and set
\begin{equation}\label{eq:gamma_def}
\gamma_1=\sigma_1,
\qquad
\gamma_2=\sigma_2,
\qquad
\gamma_3=i\gamma_1\gamma_2=-\sigma_3.
\end{equation}
Thus the chirality decomposition \eqref{eq:chirality_split} is the \(\pm1\)-eigenspace decomposition of \(\gamma_3\).

Using the tensor product connection $
\nabla^{S\otimes E}=\nabla^S\otimes 1 + 1\otimes \nabla^E$, we define the twisted Dirac operator
\begin{equation}\label{eq:Dirac_def}
D:\Gamma(S)\longrightarrow \Gamma(S).
\end{equation}
In the orthonormal frame \eqref{eq:dual_frame}, one obtains
\begin{equation}\label{eq:Dexplicit}
D
=
i\sigma_1\Bigl(\partial_t+\frac{f'(t)}{2f(t)}\Bigr)
+
i\sigma_2\,\frac{1}{f(t)}\bigl(\partial_\theta+iA\bigr).
\end{equation}
Equivalently,
\begin{equation}\label{eq:Dmatrix}
D
=
i
\begin{pmatrix}
0 &
\partial_t+\dfrac{f'(t)}{2f(t)}-\dfrac{i}{f(t)}(\partial_\theta+iA)
\\
\partial_t+\dfrac{f'(t)}{2f(t)}+\dfrac{i}{f(t)}(\partial_\theta+iA)
&
0
\end{pmatrix}.
\end{equation}

\subsection{Fourier reduction and mode spaces}

Since the metric and connection are \(\theta\)-independent, the operator \(D\) commutes with rotations in the \(\theta\)-variable and therefore preserves the Fourier-mode decomposition in the angular direction. Equivalently, \(D\) acts diagonally with respect to the Fourier basis, so the full operator splits into independent mode operators, one for each allowed angular mode. Depending on the chosen spin structure on \(S^1\), the allowed Fourier modes are
\begin{equation}\label{eq:Kdef}
\mathcal K=\mathbb Z
\quad\text{(periodic case)},
\qquad
\mathcal K=\mathbb Z+\frac12
\quad\text{(anti-periodic case)}.
\end{equation}
For \(k\in\mathcal K\), we write
\begin{equation}\label{eq:fourier_ansatz}
\psi(t,\theta)=e^{ik\theta}\binom{u(t)}{v(t)}.
\end{equation}
Since \((\partial_\theta+iA)e^{ik\theta} = i(k+A)e^{ik\theta}\), it is convenient to introduce the shifted mode parameter
\begin{equation}\label{eq:mdef}
m=k+A.
\end{equation}

For each mode \(k\), the natural radial Hilbert and Sobolev spaces are
\begin{equation}\label{eq:mode_spaces}
H_k=L^2([0,T],f(t)\,dt;\mathbb C^2),
\qquad
W_k=H^1([0,T],f(t)\,dt;\mathbb C^2).
\end{equation}
In the \(k\)-th mode, \(D\) reduces to 
\begin{equation}\label{eq:Dk}
D^{(k)}
=
i
\begin{pmatrix}
0 & A^+\\
A^- & 0
\end{pmatrix},\qquad
A^\pm=\partial_t+\frac{f'(t)}{2f(t)}\pm \frac{m}{f(t)}.
\end{equation}
Hence, the eigenvalue equation
\begin{equation}\label{eq:eigen_eq}
D\psi=\lambda\psi,
\qquad
\lambda\in\mathbb R,
\end{equation}
becomes the modewise first-order system, \(A^+v=-i\lambda u\), \(A^-u=-i\lambda v\). 
Equivalently,
\begin{equation}\label{eq:mode_system_components}
v'(t)+\frac{f'(t)}{2f(t)}v(t)+\frac{m}{f(t)}v(t)=-i\lambda u(t),
\qquad
u'(t)+\frac{f'(t)}{2f(t)}u(t)-\frac{m}{f(t)}u(t)=-i\lambda v(t).
\end{equation}
Eliminating one component yields a second-order scalar equation; in the present paper, the key point is that the boundary conditions and spectral data are organized entirely in terms of \(m\).

\subsection{Boundary operators and APS boundary conditions}

Let \(Y_{t_0}\) be one of the boundary circles. The induced boundary spinor bundle is \(S_{Y_{t_0}}=S|_{Y_{t_0}}\).
For the unit tangent vector \(U=e_2|_{t=t_0}\), define the induced boundary Clifford action by
\begin{equation}\label{eq:boundary_clifford}
c(U)=-\,i\,\gamma(N)\gamma(U).
\end{equation}
With the inward normal convention \eqref{eq:inward_normal}, this gives
\begin{equation}\label{eq:cU_formula}
c(U)=
\begin{cases}
\ \sigma_3,& \text{on }Y_0,\\
-\sigma_3,& \text{on }Y_T.
\end{cases}
\end{equation}
The intrinsic tangential operator is
\begin{equation}\label{eq:Dt0}
D_{t_0}
=
\frac{1}{f(t_0)}(-i\partial_\theta+A),
\end{equation}
and the self-adjoint boundary Dirac operators entering the APS projector are
\begin{equation}\label{eq:B0BT}
B_0=\frac{1}{f(0)}\,\sigma_3\,(-i\partial_\theta+A),
\qquad
B_T=-\frac{1}{f(T)}\,\sigma_3\,(-i\partial_\theta+A).
\end{equation}
On the \(k\)-th Fourier mode, these reduce to
\begin{equation}\label{eq:Bk_mode}
B_0^{(k)}=\frac{m}{f(0)}\sigma_3,
\qquad
B_T^{(k)}=-\frac{m}{f(T)}\sigma_3.
\end{equation}

Let \(P_{>0}(B_{t_0})\) denote the orthogonal projection onto the strictly positive spectral subspace of \(B_{t_0}\). For \(m\neq 0\), the APS boundary condition is
\begin{equation}\label{eq:APSabstract}
P_{>0}(B_0)\bigl(\psi|_{Y_0}\bigr)=0,
\qquad
P_{>0}(B_T)\bigl(\psi|_{Y_T}\bigr)=0.
\end{equation}
Since \eqref{eq:Bk_mode} is diagonal, this becomes, modewise,
\begin{subequations}
\begin{equation}\label{eq:APS_mpos}
m>0
\quad\Longrightarrow\quad
u(0)=0,
\qquad
v(T)=0,
\end{equation}
\begin{equation}\label{eq:APS_mneg}
m<0
\quad\Longrightarrow\quad
v(0)=0,
\qquad
u(T)=0.
\end{equation}
\end{subequations}
The global APS convention handles the kernel \(m=0\) stated in Remark~\ref{rem:global_APS_convention}.

Accordingly, the APS domain of the mode operator \(D^{k}\) is
\begin{equation}\label{eq:DomDkAPS}
\Dom(D_{k,\text{APS}})
=
\begin{cases}
\displaystyle\left\{\binom{u}{v}\in W_k:\ u(0)=0,\ v(T)=0\right\},& m>0,\\[1em]
\displaystyle\left\{\binom{u}{v}\in W_k:\ v(0)=0,\ u(T)=0\right\},& m<0.
\end{cases}
\end{equation}

\begin{remark}[APS convention and the self-paired sector]
\label{rem:global_APS_convention}
Throughout the paper, the APS condition is understood in the following sense. On every
nonzero boundary mode, we impose the standard APS projection
\(P_{>0}(B_t)=1_{(0,\infty)}(B_t)\).

If \(B_t\) has a nontrivial kernel, then the APS boundary condition is understood to be completed
by a fixed self-adjoint kernel condition on \(\ker B_t\). In the reflection-symmetric parts of the
paper, this kernel completion is assumed to be invariant under the boundary reflection, denoted later by \(\mathcal R_t\).
Equivalently, the completed APS projector will still be denoted by \(P_t^{\APS}\), and it is assumed
to define a self-adjoint Fredholm APS realization of the Dirac operator.

For \(m\neq 0\), the boundary operators are invertible, so no kernel completion is needed. In
those sectors, the APS condition reduces exactly to
\begin{subequations}
\begin{equation}
m>0
\quad\Longrightarrow\quad
u(0)=0,\qquad v(T)=0,
\end{equation}
and
\begin{equation}
m<0
\quad\Longrightarrow\quad
v(0)=0,\qquad u(T)=0.
\end{equation}
\end{subequations}
Only the self-paired sector \(m=0\) depends on the auxiliary kernel completion. With this convention fixed, we continue to write \(P_t^{APS}\) and \(D^{APS}\) without explicitly
recording the auxiliary kernel completion in the notation.
\end{remark}

\section{$O(2)$-compatible reflection lift: reflection symmetry and the holonomy obstruction}
\label{sec:O2lift}

In this section, we analyze the symmetry of the twisted Dirac problem beyond the automatic
\(S^1\)-rotation symmetry of the warped cylinder. Throughout, \(D\) denotes the twisted Dirac
operator from \autoref{sec:setup}, acting on sections of the twisted spinor bundle \(S\),
with twist class
\begin{equation}\label{eq:twist_class_repeat}
[A]\in \mathbb R/\mathbb Z.
\end{equation}
The rotation symmetry is present for every holonomy class \eqref{eq:twist_class_repeat},
whereas the reflection symmetry is compatible with the twisting only when
\begin{equation}\label{eq:2A_intro}
2A\in\mathbb Z.
\end{equation}

\subsection{The rotation symmetry}

Let
\begin{equation}\label{eq:rotation_map}
R_\varphi:M\longrightarrow M,
\qquad
(t,\theta) \mapsto(t,\theta+\varphi),
\qquad
\varphi\in\mathbb R/2\pi\mathbb Z,
\end{equation}
be the standard rotation of the angular variable. Since both the warped metric $g=dt^2+f(t)^2\,d\theta^2$ and the connection form $ iA\,d\theta $
are independent of \(\theta\), the operator \(D\) is invariant under this action.

We denote by
\begin{equation}\label{eq:rotation_rep}
\rho(\varphi):\Gamma(S)\longrightarrow \Gamma(S)
\end{equation}
the induced unitary action on sections, given in the chosen trivialization by pullback:
\begin{equation}\label{eq:rotation_pullback}
(\rho(\varphi)\psi)(t,\theta)=\psi(t,\theta+\varphi).
\end{equation}
Then
\begin{equation}\label{eq:D_rotation_commute}
\rho(\varphi)\,D\,\rho(\varphi)^{-1}=D.
\end{equation}
Equivalently, \(D\) commutes with the circle action and therefore decomposes into Fourier modes, as in~\autoref{sec:setup}.

\subsection{The reflection map and its effect on the angular operator}

We now introduce the reflection
\begin{equation}\label{eq:reflection_map}
r:M\longrightarrow M,
\qquad
(t,\theta) \longmapsto (t,-\theta).
\end{equation}
This map is an isometry of the warped metric, but it reverses the angular coordinate.

In the chosen trivialization, we write $r^* $
for pullback on coefficient functions induced by \(r\). Since \(r\) reverses \(\theta\), one has
\begin{equation}\label{eq:reflection_on_angular_momentum}
r^*\,(-i\partial_\theta)\,(r^*)^{-1}=+\,i\partial_\theta.
\end{equation}
Thus, reflection reverses the angular momentum operator.

It is convenient to write the angular part of \(D\) using
\begin{equation}\label{eq:Ta_def}
T_A=-i\partial_\theta+A.
\end{equation}
Then the explicit formula for the twisted Dirac operator from \autoref{sec:setup} becomes
\begin{equation}\label{eq:D_as_TA}
D
=
i\sigma_1\Bigl(\partial_t+\frac{f'(t)}{2f(t)}\Bigr)
-
\frac{1}{f(t)}\,\sigma_2\,T_A.
\end{equation}
Under reflection, one has
\begin{equation}\label{eq:rstar_TA}
r^*\,T_A\,(r^*)^{-1}
=
-\,T_{-A}.
\end{equation}
Indeed, since \(r(t,\theta)=(t,-\theta)\) acts only on the angular variable, it reverses \(\partial_\theta\) but acts trivially on the constant scalar parameter \(A\). Thus,
\begin{equation}\label{eq:rstar_TA_calc}
r^*\,T_A\,(r^*)^{-1}
=
r^*(-i\partial_\theta+A)(r^*)^{-1}
=
+i\partial_\theta+A
=
-(-i\partial_\theta-A)
=
-\,T_{-A}.
\end{equation}

Under the reflection \eqref{eq:reflection_map}, the connection term \(A\,d\theta\) is sent to \(-A\,d\theta\), so the holonomy class transforms as
\begin{equation}\label{eq:holonomy_reflection}
[A]\longmapsto [-A]
\qquad
\text{in }\mathbb R/\mathbb Z.
\end{equation}
Hence, reflection preserves the twisted conditions only when the reflected twist class is gauge-equivalent to the original one.

\subsection{Gauge transformations and the arithmetic condition}

For each integer \(n\in\mathbb Z\), we define
\begin{equation}\label{eq:gn_def}
g_n:S^1\longrightarrow U(1),
\qquad
g_n(\theta)=e^{in\theta}.
\end{equation}
In the chosen trivialization of the line bundle, this acts on sections by multiplication by \(e^{in\theta}\). Conjugating \(T_A\) by \(g_n\) gives
\begin{equation}\label{eq:gauge_shift}
g_n\,T_A\,g_n^{-1}
=
T_{A-n}.
\end{equation}
Indeed,
\begin{equation}\label{eq:gauge_shift_calc}
g_n(-i\partial_\theta+A)g_n^{-1}
=
-i\partial_\theta+(A-n).
\end{equation}
Thus, the parameter \(A\) is defined only modulo integers, and the invariant datum is the class \([A]\in\mathbb R/\mathbb Z\).

Reflection is compatible with the twisting exactly when the original and reflected classes agree:
\begin{equation}\label{eq:AeqminusA}
[A]=[-A]
\qquad
\text{in }\mathbb R/\mathbb Z.
\end{equation}
This is equivalent to the condition
\begin{equation}\label{eq:2Acond}
2A\in\mathbb Z.
\end{equation}
Indeed, \eqref{eq:AeqminusA} means that \(A-(-A)=2A\) is an integer.

\subsection{The spinor reflection and the lifted reflection operator}\label{sec:spinor_reflection_line_reflection}

To lift the base reflection to the twisted spinor bundle, we first choose a matrix acting on the spinor factor. Let
\begin{equation}\label{eq:Ur_def}
U_r=\sigma_1.
\end{equation}
Since the Clifford matrices were fixed in~\autoref{sec:setup} as \eqref{eq:gamma_def}, one computes 
\begin{equation}\label{eq:Ur_gamma}
U_r\gamma_1U_r^{-1}=\gamma_1, \qquad
U_r\gamma_2U_r^{-1}=-\gamma_2.
\end{equation}
Thus, \(U_r\) fixes the radial Clifford generator and reverses the angular one, exactly as required by the reflection \(\theta\mapsto -\theta\).

Assume now that \eqref{eq:2Acond} holds. Then \(-2A\in\mathbb Z\), so the gauge factor
\begin{equation}\label{eq:gminus2A_def}
g_{-2A}(\theta)=e^{-2iA\theta}
\end{equation}
is single-valued on \(S^1\). We define the lifted reflection operator on sections by
\begin{equation}\label{eq:Ucalr_def}
\mathcal U_r=U_r\,g_{-2A}\,r^*.
\end{equation}
Here, \(r^*\) implements the base reflection, \(U_r\) supplies the spinor reflection, and \(g_{-2A}\) corrects the reflected twist \(-A\) back to \(A\).

\begin{remark}[Bundle-theoretic form of the reflection lift]
It is conceptually useful to separate the bundle-level lift from the induced operator on sections. Let \(\pi:S \to M\) denote the twisted spinor bundle projection, and let
\begin{equation}\label{eq:bundle_projection_pir}
\pi_r:r^*(S)\longrightarrow M
\end{equation}
denote the projection of the pullback bundle under the reflection \(r:M\to M\).

Assuming \eqref{eq:2Acond}, define the fiberwise bundle isomorphism
\begin{equation}\label{eq:Phi_r_def}
\Phi_r=U_r\,g_{-2A}:r^*(S)\longrightarrow S.
\end{equation}
Then \(\Phi_r\) is the bundle-level reflection lift associated with \(r\), covering the identity on \(M\), in the sense that
\begin{equation}\label{eq:Phi_r_covers_id}
\pi\circ \Phi_r=\pi_r.
\end{equation}
Equivalently, \(\Phi_r\) fits into the commutative diagram
\begin{equation}
\begin{tikzcd}[column sep=large,row sep=large]
r^*(S) \arrow[r,"\Phi_r"] \arrow[d,"\pi_r"'] &
S \arrow[d,"\pi"] \\
M \arrow[r,"\text{id}_M"'] & M
\end{tikzcd}
\end{equation}
The corresponding operator in sections is
\begin{equation}\label{eq:Ucalr_from_Phi}
\mathcal U_r=\Gamma(\Phi_r)\circ r^*,
\end{equation}
where \(\Gamma(\Phi_r)\) denotes the map induced by \(\Phi_r\) in sections.
\end{remark}

\begin{lemma}[Intertwining of the reflected twisted operator]
\label{lem:reflection_intertwining}
Assume \eqref{eq:2Acond}. Then the operator \(\mathcal U_r\) defined in \eqref{eq:Ucalr_def}
is unitary in \(L^2(M;S)\) and satisfies
\begin{equation}\label{eq:Ucalr_intertwine}
\mathcal U_r\,D\,\mathcal U_r^{-1}=D
\end{equation}
on \(\Gamma(S)\). Equivalently, the following diagram commutes:
\begin{equation}
\begin{tikzcd}[column sep=large,row sep=large]
\Gamma(S) \arrow[r,"\mathcal U_r"] \arrow[d,"D"'] &
\Gamma(S) \arrow[d,"D"] \\
\Gamma(S) \arrow[r,"\mathcal U_r"'] &
\Gamma(S)
\end{tikzcd}
\end{equation}
\end{lemma}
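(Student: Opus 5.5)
The plan is a direct conjugation computation, using the decomposition \eqref{eq:D_as_TA}
\[
D=i\sigma_1\Bigl(\partial_t+\tfrac{f'}{2f}\Bigr)-\tfrac{1}{f}\,\sigma_2\,T_A ,
\]
and conjugating each of the two terms separately by the three factors of \(\mathcal U_r=U_r\,g_{-2A}\,r^*\). Since \(\mathcal U_r^{-1}=(r^*)^{-1}g_{-2A}^{-1}U_r^{-1}\), I will conjugate first by \(r^*\), then by \(g_{-2A}\), then by \(U_r\).

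First I record the structural facts. The factors \(r^*\) and \(g_{-2A}\) act on coefficient functions and commute with the constant matrices \(\sigma_j\). The matrix \(U_r=\sigma_1\) commutes with every scalar operator, including \(\partial_t\), \(f(t)\) and \(T_A\). Neither \(r^*\) nor \(g_{-2A}\) involves \(t\), so both commute with \(\partial_t+f'/(2f)\) and with multiplication by \(1/f(t)\).

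With these facts, the radial term \(i\sigma_1(\partial_t+f'/2f)\) is fixed by \(r^*\) and by \(g_{-2A}\). It is then fixed by \(U_r\) because \(\sigma_1\sigma_1\sigma_1^{-1}=\sigma_1\), in accordance with \eqref{eq:Ur_gamma}.

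For the angular term, \eqref{eq:rstar_TA} gives \(r^*T_A(r^*)^{-1}=-T_{-A}\). Next, \eqref{eq:gauge_shift} with \(n=-2A\) gives \(g_{-2A}T_{-A}g_{-2A}^{-1}=T_{-A+2A}=T_A\). This step is legitimate precisely because \eqref{eq:2Acond} makes \(-2A\) an integer, so \(g_{-2A}\) is single-valued. Hence \((g_{-2A}r^*)\,T_A\,(g_{-2A}r^*)^{-1}=-T_A\). Finally, \(\sigma_1\sigma_2\sigma_1^{-1}=-\sigma_2\) by \eqref{eq:Ur_gamma}, so the term \(-\frac1f\sigma_2T_A\) is sent to \(-\frac1f(-\sigma_2)(-T_A)=-\frac1f\sigma_2T_A\). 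Combining the two terms gives \eqref{eq:Ucalr_intertwine} on \(\Gamma(S)\), which is the commutativity of the diagram.

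For unitarity I argue factor by factor.
\begin{itemize}
\item The map \(r\) is an isometry of \(g\): the volume form \(f(t)\,dt\,d\theta\) is invariant under \(\theta\mapsto-\theta\) up to orientation, which does not affect integration of densities. Hence \(r^*\) preserves the \(L^2\) norm, and it is invertible since \((r^*)^2=1\).
\item Multiplication by \(e^{-2iA\theta}\) is pointwise unitary.
\item The matrix \(\sigma_1\) is a unitary constant matrix.
\end{itemize}
Thus \(\mathcal U_r\) is a composition of unitaries.

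The main point needing care is that each factor maps sections of the twisted spinor bundle to sections again, so that \(\mathcal U_r\) is well defined on \(\Gamma(S)\) and on \(L^2(M;S)\). In particular this must hold for the chosen spin structure \(\mathcal K\) in \eqref{eq:Kdef}. I would check this on Fourier modes. The factor \(r^*\) sends \(e^{ik\theta}\) to \(e^{-ik\theta}\), and \(-\mathcal K=\mathcal K\) in both the periodic and anti-periodic cases. The factor \(g_{-2A}\) shifts \(k\) by the integer \(-2A\), which preserves \(\mathcal K\). Hence \(\mathcal U_r\) sends the mode \(k\) to \(-k-2A\in\mathcal K\), consistent with the pairing \(k\mapsto k^\vee\) announced in the introduction. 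Apart from this, the only risk is sign and ordering bookkeeping in the angular term, which the computation above tracks explicitly.
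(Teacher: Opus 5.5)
Your proposal is correct and is essentially the paper's proof: it proves unitarity factor by factor, then conjugates the two terms of $D=i\sigma_1(\partial_t+\tfrac{f'}{2f})-\tfrac{1}{f}\sigma_2T_A$ using $r^*T_A(r^*)^{-1}=-T_{-A}$, the gauge shift $g_{-2A}T_{-A}g_{-2A}^{-1}=T_A$ (valid because $2A\in\mathbb Z$), and $\sigma_1\sigma_2\sigma_1=-\sigma_2$. The differences are cosmetic: you conjugate by $g_{-2A}$ before $U_r$, which is harmless because the scalar multiplier commutes with the constant matrix, and you add a check that $\mathcal U_r$ preserves the Fourier lattice $\mathcal K$ via $k\mapsto -k-2A$, which the paper records only afterwards in its mode-pairing subsection.
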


\begin{proof}
The operator \(r^*\) is unitary because \(r\) is an isometry of the warped metric, \(U_r=\sigma_1\) is unitary on the spinor factor, and \(g_{-2A}\) is unitary as a \(U(1)\)-valued multiplication operator. Hence \(\mathcal U_r\) is unitary.

Using \eqref{eq:D_as_TA}, \eqref{eq:rstar_TA}, \eqref{eq:Ur_gamma}, we obtain
\begin{equation}\label{eq:Ur_rstar_conj_D}
U_r\,r^*\,D\,(r^*)^{-1}U_r^{-1}
=
i\sigma_1\Bigl(\partial_t+\frac{f'(t)}{2f(t)}\Bigr)
-
\frac{1}{f(t)}\,\sigma_2\,T_{-A}.
\end{equation}
Next, by \eqref{eq:gauge_shift} with \(n=-2A\),
\begin{equation}\label{eq:gauge_compensation}
g_{-2A}\,T_{-A}\,g_{-2A}^{-1}
=
T_{-A-(-2A)}
=
T_A.
\end{equation}
Therefore
\begin{equation}\label{eq:full_conj_D}
\mathcal U_r\,D\,\mathcal U_r^{-1}
=
i\sigma_1\Bigl(\partial_t+\frac{f'(t)}{2f(t)}\Bigr)
-
\frac{1}{f(t)}\,\sigma_2\,T_A
=
D.
\end{equation}
This proves \eqref{eq:Ucalr_intertwine}.
\end{proof}

\begin{remark}
\label{rem:lifted_reflection_notation}
Whenever \eqref{eq:2Acond} holds, we keep the notation
\begin{equation}\label{eq:Ucalr_repeat}
\mathcal U_r=U_r\,g_{-2A}\,r^*
\end{equation}
for the lifted reflection operator on sections.
\end{remark}

\subsection{The holonomy obstruction to a reflection lift}

We now state the precise obstruction theorem.

\begin{theorem}[Holonomy obstruction to a reflection lift]
\label{thm:O2lift}
Let \(D\) be the twisted warped-cylinder Dirac operator with constant holonomy parameter \(A\).
Then the reflection map
\begin{equation}\label{eq:reflection_map_repeat}
r:(t,\theta)\longmapsto(t,-\theta)
\end{equation}
lifts to a unitary symmetry of the twisted Dirac problem if and only if
\begin{equation}\label{eq:O2_if_and_only_if}
2A\in\mathbb Z.
\end{equation}
\end{theorem}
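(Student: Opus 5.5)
The plan is to prove the two implications separately. The ``if'' direction is essentially Lemma~\ref{lem:reflection_intertwining}, and the ``only if'' direction is a spectral invariant argument.

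\emph{Sufficiency.} Assume $2A\in\mathbb Z$. Then $g_{-2A}$ is single-valued, and Lemma~\ref{lem:reflection_intertwining} shows that $\mathcal U_r=U_r\,g_{-2A}\,r^*$ is unitary on $L^2(M;S)$ and commutes with $D$. It remains to check that $\mathcal U_r$ genuinely covers $r$ and is compatible with the rest of the structure:
\begin{itemize}
\item it maps functions supported near $(t,\theta)$ to functions supported near $(t,-\theta)$;
\item it preserves $H^1(M;S)$;
\item it commutes with restriction to $Y_0$ and $Y_T$, since $r$ fixes $t$.
\end{itemize}
So $\mathcal U_r$ is a unitary lift of $r$ preserving the twisted Dirac problem.

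\emph{Necessity.} Suppose $\mathcal V$ is any unitary lift of $r$ commuting with $D$. By a lift I mean an operator of the form $\mathcal V=\Phi\circ r^*$, with $\Phi$ a unitary bundle map covering the identity. The idea is to find a quantity that $\mathcal V$ must preserve but that detects $[A]$. I would use the boundary data. Because $\mathcal V$ covers $r$ and $r$ fixes $t$, $\mathcal V$ restricts to a unitary $\mathcal V_0$ on $L^2(Y_0;S_{Y_0})$. Two facts are then needed:
\begin{itemize}
\item $\mathcal V_0$ preserves the Clifford action $c(U)$ up to the sign reversal forced by $r_*e_2=-e_2$;
\item $\mathcal V_0$ intertwines the tangential part of $D$.
\end{itemize}
The tangential part of $D$ can be read off from the $\theta$-derivative terms in \eqref{eq:D_as_TA}. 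Together these give $\mathcal V_0 D_0\mathcal V_0^{-1}=-D_0$, with $D_0=f(0)^{-1}T_A$.

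It follows that $\Spec(D_0)$ is symmetric under $\lambda\mapsto-\lambda$. But
\[
\Spec(D_0)=f(0)^{-1}\{k+A:k\in\mathcal K\},
\]
so the shifted set $\mathcal K+A$ must equal $-\mathcal K-A$. Since $\mathcal K=-\mathcal K$ and $\mathcal K$ is a coset of $\mathbb Z$, this forces $2A\in\mathbb Z$. Equivalently, $[A]=[-A]$ as in \eqref{eq:AeqminusA}. Alternatively, one can phrase this gauge-theoretically, as \eqref{eq:holonomy_reflection} suggests: $\Phi$ would be a gauge equivalence between $r^*\nabla^E$, whose holonomy is $e^{-2\pi iA}$, and $\nabla^E$, whose holonomy is $e^{2\pi iA}$. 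Holonomy is gauge invariant, so the two must agree.

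\emph{Main obstacle.} The hard step is making ``lift'' precise enough that the necessity argument is rigorous. In particular, I must rule out exotic unitaries commuting with $D$ that do not act fiberwise after composing with $r^*$. I would first fix the definition $\mathcal V=\Phi\circ r^*$ with $\Phi$ a bundle map, exactly mirroring \eqref{eq:Ucalr_from_Phi}. Then I would show that $\Phi$ must conjugate $\gamma_1$ to $\gamma_1$ and $\gamma_2$ to $-\gamma_2$ by comparing principal symbols. Once the symbol is controlled, the zeroth-order part of the identity $\mathcal V D\mathcal V^{-1}=D$ forces $\Phi$ to carry $r^*\nabla^E$ to $\nabla^E$, and the holonomy comparison finishes the argument.
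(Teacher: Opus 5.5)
Your proof is correct. Sufficiency is exactly the paper's argument (it just invokes Lemma~\ref{lem:reflection_intertwining}); for necessity you are more careful than the paper and partly take a different route.

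\textbf{The paper's necessity argument.} It argues in one line: conjugation by a lift identifies the reflected twisted operator with the original, so $[-A]$ must define the same twisted problem as $[A]$, hence $[A]=[-A]$. This is your ``gauge-theoretic'' alternative. The paper states it without justifying why an operator identification forces gauge equivalence of the connections.

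\textbf{What your version supplies.} Making ``lift'' precise as $\mathcal V=\Phi\circ r^*$, as in \eqref{eq:Ucalr_from_Phi}, is genuinely needed. A bare unitary commuting with $D^{\APS}$ carries no information about $[A]$, since any unitary preserving its eigenspaces qualifies. Your principal-symbol comparison then correctly forces $\Phi=e^{i\alpha}\sigma_1$, which is the missing step.

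\textbf{Your primary route.} You package the obstruction as the manifestly gauge-invariant statement that $\Spec(D_0)=f(0)^{-1}(\mathcal K+A)$ must be symmetric. This has the advantage of explaining the pairing $m\mapsto -m$ directly.

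\textbf{A caveat on that route.} The spectral step is a repackaging rather than a shortcut. Your second bullet, that $\mathcal V_0$ intertwines the tangential part on its own, can only be justified by computing the zeroth-order terms of $\mathcal V D\mathcal V^{-1}=D$. The $\sigma_1$- and $\sigma_2$-components separate only once $\Phi$ is known, and they give $\partial_t\alpha=0$ and $\partial_\theta\alpha=-2A$. Single-valuedness of $e^{i\alpha}$ on $S^1$, i.e.\ $\oint\partial_\theta\alpha\,d\theta=-4\pi A\in 2\pi\mathbb Z$, already yields $2A\in\mathbb Z$ before $\Spec(D_0)$ enters.
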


\begin{proof}
Assume first that \eqref{eq:O2_if_and_only_if} holds. Then the operator \(\mathcal U_r\) from
\eqref{eq:Ucalr_def} is globally defined and unitary, and by Lemma~\ref{lem:reflection_intertwining}
it satisfies
\begin{equation}\label{eq:reflection_symmetry_proof}
\mathcal U_r\,D\,\mathcal U_r^{-1}=D.
\end{equation}
Thus, reflection lifts to a unitary symmetry of the twisted problem.

Conversely, suppose a unitary reflection lift exists. Then conjugation by that lift identifies the reflected twisted operator with the original one. In particular, the reflected connection class \([-A]\in \mathbb R/\mathbb Z\) must define the same twisted problem as the original class \([A]\in \mathbb R/\mathbb Z\). Hence
\begin{equation}\label{eq:proof_holonomy_classes}
[A]=[-A]
\qquad
\text{in }\mathbb R/\mathbb Z.
\end{equation}
By \eqref{eq:AeqminusA}, this is equivalent to
\begin{equation}\label{eq:proof_2A_integer}
2A\in\mathbb Z.
\end{equation}
This proves the theorem.
\end{proof}

\subsection{Compatibility with rotations}

We now compute the relation between the lifted reflection and the rotation action. Recall that \(\rho(\varphi)\) was defined in \eqref{eq:rotation_rep} by pullback under \(R_\varphi\).

\begin{proposition}[Twisted semidirect relation with rotations]
\label{prop:semidirect_relation}
Assume \eqref{eq:2Acond}. Then
\begin{equation}\label{eq:semidirect_relation}
\mathcal U_r\,\rho(\varphi)\,\mathcal U_r^{-1}
=
e^{-2iA\varphi}\,\rho(-\varphi).
\end{equation}
\end{proposition}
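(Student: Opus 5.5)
Since both sides of \eqref{eq:semidirect_relation} are bounded operators, I will verify the identity by applying each side to an arbitrary section \(\psi\in\Gamma(S)\) and evaluating pointwise at \((t,\theta)\). The plan uses only the explicit factorization \(\mathcal U_r=U_r\,g_{-2A}\,r^*\) from \eqref{eq:Ucalr_def}, the pullback formula \eqref{eq:rotation_pullback} for \(\rho(\varphi)\), and \((r^*\psi)(t,\theta)=\psi(t,-\theta)\).

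The first step is to record the inverse. Because \(r^2=\mathrm{id}\), we have \((r^*)^{-1}=r^*\); also \(U_r^{-1}=\sigma_1\), and \(g_{-2A}^{-1}\) is multiplication by \(e^{2iA\theta}\). This gives
\[
(\mathcal U_r^{-1}\psi)(t,\theta)=r^*\bigl(e^{2iA\,\cdot}\,\sigma_1\psi\bigr)(t,\theta)=e^{-2iA\theta}\sigma_1\psi(t,-\theta).
\]
The second step applies \(\rho(\varphi)\), which shifts \(\theta\mapsto\theta+\varphi\):
\[
e^{-2iA(\theta+\varphi)}\sigma_1\psi(t,-\theta-\varphi).
\]
The third step applies \(r^*\), which sends \(\theta\mapsto-\theta\):
\[
e^{2iA\theta-2iA\varphi}\sigma_1\psi(t,\theta-\varphi).
\]
Finally, multiplying by \(e^{-2iA\theta}\) and then by \(\sigma_1\) leaves
\[
e^{-2iA\varphi}\psi(t,\theta-\varphi)=e^{-2iA\varphi}\bigl(\rho(-\varphi)\psi\bigr)(t,\theta).
\]

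The main point requiring care is bookkeeping rather than any genuine obstacle. Each phase factor must be evaluated at the correct, already transformed argument. The constant matrix \(\sigma_1\) commutes with all the scalar multiplication and pullback operators, which is what lets \(\sigma_1^2=1\) cancel. I will also note that \(g_{-2A}\) is single-valued precisely because \(2A\in\Z\), by \eqref{eq:2Acond}. The residual factor \(e^{-2iA\varphi}\) is the nontrivial output: it is a character of the circle, since \(2A\in\Z\). Conjugation by \(\mathcal U_r\) therefore reproduces the \(O(2)\) relation \(r R_\varphi r^{-1}=R_{-\varphi}\) only up to this character, which is why the relation is called a twisted semidirect relation.
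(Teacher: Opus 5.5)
Your proof is correct: the pointwise computation is right at every step, including the inverse $\mathcal U_r^{-1}=r^*g_{2A}\sigma_1$ and the final phase $e^{-2iA\varphi}$. It is essentially the paper's argument written out on a single section; the paper instead conjugates factor by factor, using that $U_r$ commutes with $\rho(\varphi)$, that $r^*\rho(\varphi)(r^*)^{-1}=\rho(-\varphi)$, and that conjugating $\rho(-\varphi)$ by $g_{-2A}$ produces the scalar $g_{-2A}(\theta)\,g_{-2A}(\theta-\varphi)^{-1}=e^{-2iA\varphi}$.
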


\begin{proof}
Since \(U_r\) is constant, it commutes with \(\rho(\varphi)\). Also, on the base manifold, we have \(r\circ R_\varphi\circ r^{-1}=R_{-\varphi}\), hence \(r^*\,\rho(\varphi)\,(r^*)^{-1}
=
\rho(-\varphi)\).
Therefore
\begin{equation}\label{eq:semidirect_calc_start}
\mathcal U_r\,\rho(\varphi)\,\mathcal U_r^{-1}
=
g_{-2A}\,\rho(-\varphi)\,g_{-2A}^{-1}.
\end{equation}
Now \(g_{-2A}(\theta)=e^{-2iA\theta}\), so
\begin{equation}\label{eq:gauge_rotation_commute}
g_{-2A}(\theta)\,g_{-2A}(\theta-\varphi)^{-1}
=
e^{-2iA\varphi}.
\end{equation}
Hence conjugating \(\rho(-\varphi)\) by \(g_{-2A}\) produces the scalar factor \(e^{-2iA\varphi}\), giving \eqref{eq:semidirect_relation}.
\end{proof}

\begin{lemma}[Involution relation]
\label{lem:Ur_involution}
Assume \eqref{eq:2Acond}. Then
\begin{equation}\label{eq:Ur_squared}
\mathcal U_r^2=\Id.
\end{equation}
\end{lemma}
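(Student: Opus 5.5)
We prove \eqref{eq:Ur_squared} by computing $\mathcal U_r^2$ directly from the definition \eqref{eq:Ucalr_def}, $\mathcal U_r=U_r\,g_{-2A}\,r^*$, and commuting the three factors past each other. The ingredients are these. First, $U_r=\sigma_1$ is a constant matrix acting only on the spinor factor, so it commutes with the scalar multiplication operator $g_{-2A}$ and with the pullback $r^*$. Second, $U_r^2=\sigma_1^2=\Id$. Third, $(r^*)^2=\Id$, because $r\circ r=\mathrm{id}_M$.

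The only nontrivial step is moving $r^*$ past $g_{-2A}$. For a multiplication operator by a function $h$ we have $r^*\,h\,(r^*)^{-1}=(h\circ r)$, acting by multiplication. With $h=g_{-2A}$, this gives
\begin{equation*}
r^*\,g_{-2A}\,(r^*)^{-1}=g_{-2A}\circ r=e^{+2iA\theta}=g_{2A}=g_{-2A}^{-1}.
\end{equation*}
Here we use that $2A\in\mathbb Z$ by \eqref{eq:2Acond}, so both $g_{\pm 2A}$ are single-valued on $S^1$.

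Combining these, we compute
\begin{equation*}
\mathcal U_r^2
=U_r^2\,g_{-2A}\,r^*\,g_{-2A}\,r^*
=g_{-2A}\,\bigl(r^*g_{-2A}(r^*)^{-1}\bigr)\,(r^*)^2
=g_{-2A}\,g_{-2A}^{-1}
=\Id .
\end{equation*}
As a pointwise check, $(\mathcal U_r^2\psi)(t,\theta)=\sigma_1 e^{-2iA\theta}\,\sigma_1 e^{2iA\theta}\psi(t,\theta)=\psi(t,\theta)$.

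The main point needing care is the convention for $r^*$ in the chosen trivialization: $(r^*\psi)(t,\theta)=\psi(t,-\theta)$, so it acts only on the argument of coefficient functions. One must then verify that the gauge factor is evaluated at $-\theta$ after the second reflection, which is what produces exact cancellation rather than a leftover factor $e^{-4iA\theta}$.

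It is also worth confirming that on the antiperiodic sector, $\mathcal K=\mathbb Z+\tfrac12$, the map $r^*$ still preserves the function space and squares to the identity. It does, since $\theta\mapsto-\theta$ preserves antiperiodicity, and the pointwise formula above holds regardless of the spin structure.
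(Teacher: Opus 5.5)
Your proposal is correct and follows essentially the same argument as the paper. Both use $U_r^2=\Id$, $(r^*)^2=\Id$, and the fact that the constant matrix $\sigma_1$ commutes with the scalar gauge factor and with the pullback, together with the key identity $r^*\,g_{-2A}\,(r^*)^{-1}=g_{2A}$ that cancels the gauge factors; your pointwise check and your remark on the antiperiodic sector spell out details the paper leaves implicit.
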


\begin{proof}
Since \(U_r=\sigma_1\), one has \(U_r^2=\Id\).
Also, \(r^2=\Id\) and \(r^*\,g_{-2A}\,(r^*)^{-1}=g_{2A}\).
Therefore,
\begin{equation}\label{eq:Ur_square_calc}
\mathcal U_r^2
=
U_r g_{-2A} r^* U_r g_{-2A} r^*
=
U_r^2\, g_{-2A}\, g_{2A}\, (r^*)^2
=
\Id.
\end{equation}
\end{proof}

\begin{remark}[Reflection lift versus an honest \texorpdfstring{$O(2)$}{O(2)}-action]
\label{rem:central_character}
The relation in Proposition~\ref{prop:semidirect_relation} shows that, in a fixed
representative \(A\), the lifted reflection and the rotation action satisfy
\begin{equation}
\mathcal U_r\,\rho(\varphi)\,\mathcal U_r^{-1}
=
e^{-2iA\varphi}\rho(-\varphi).
\end{equation}
Thus they satisfy the usual semidirect-product relation for an honest pointwise
\(O(2)\)-action precisely when the scalar factor is trivial for all \(\varphi\). In the
chosen representative this means
\begin{equation}
A=0.
\end{equation}
Equivalently, in gauge-invariant language, the obstruction vanishes precisely when the
holonomy class is gauge-trivial,
\begin{equation}
[A]=0\in \mathbb R/\mathbb Z.
\end{equation}
In that case one may make an integer gauge transformation and work in the representative
\(A=0\).

Thus, Section~\ref{sec:O2lift} gives a reflection lift whenever \(2A\in\mathbb Z\), but the
standard pointwise \(O(2)\)-equivariant spectral-flow framework used later in
Section~\ref{sec:O2_sf} requires the stronger gauge-trivial condition \([A]=0\), together
with the gauge choice \(A=0\). For the sectorwise arguments below, the important point is
that a reflection lift pairs the shifted angular parameter \(m\) with \(-m\).
\end{remark}

\subsection{Mode pairing induced by reflection}\label{sec:Mode_paring_induced_reflection}

Assume \eqref{eq:2Acond}. For a Fourier mode
\begin{equation}
\psi_k(t,\theta)=e^{ik\theta}\binom{u(t)}{v(t)},
\qquad k\in\mathcal K,
\end{equation}
equation \eqref{eq:Ucalr_def} gives
\begin{equation}\label{eq:reflection_on_mode}
\mathcal U_r\psi_k(t,\theta)
=
U_r\,e^{-2iA\theta}\,e^{-ik\theta}\binom{u(t)}{v(t)}
=
e^{-i(k+2A)\theta}\,U_r\binom{u(t)}{v(t)}.
\end{equation}
Hence reflection sends
\begin{equation}\label{eq:kpair}
k\longmapsto k^\vee= -k-2A.
\end{equation}
Because \(2A\in\mathbb Z\), this preserves the Fourier lattice \(\mathcal K\). Writing \(m=k+A\), one obtains
\begin{equation}\label{eq:m_pairing}
m^\vee=k^\vee+A=-m.
\end{equation}
Thus, reflection exchanges the modes labeled by \(m\) and \(-m\).

\begin{proposition}[Reflection exchanges boundary blocks]
\label{prop:blockpair}
Assume \eqref{eq:2Acond}. For each \(k\in\mathcal K\),
\begin{equation}\label{eq:Bk_again}
B_0^{(k)}=\frac{m}{f(0)}\sigma_3,
\qquad
B_T^{(k)}=-\frac{m}{f(T)}\sigma_3
\end{equation}
is paired by reflection with the \(k^\vee\)-block
\begin{equation}\label{eq:Bkvee}
B_0^{(k^\vee)}=-\frac{m}{f(0)}\sigma_3,
\qquad
B_T^{(k^\vee)}=\frac{m}{f(T)}\sigma_3.
\end{equation}
Equivalently, reflection sends \(m\mapsto -m\), so the APS sign choice is reversed.
\end{proposition}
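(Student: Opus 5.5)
The plan is to combine the explicit mode formula \eqref{eq:reflection_on_mode} with the modewise boundary operators \eqref{eq:Bk_mode}. The boundary operators are then conjugated by the boundary restriction of \(\mathcal U_r\). Fix \(k\in\mathcal K\) and set \(m=k+A\). By \eqref{eq:kpair} and \eqref{eq:m_pairing}, the reflected label \(k^\vee=-k-2A\) again lies in \(\mathcal K\), because \(2A\in\mathbb Z\), and its shifted parameter is \(m^\vee=-m\). Substituting \(m^\vee\) into \eqref{eq:Bk_mode} gives at once
\[
B_0^{(k^\vee)}=\frac{-m}{f(0)}\sigma_3,\qquad B_T^{(k^\vee)}=\frac{m}{f(T)}\sigma_3 ,
\]
which is \eqref{eq:Bkvee}. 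This is the purely arithmetic part of the proof.

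Next I will show that this identification is implemented by the reflection rather than being only a coincidence of labels. Let \(\mathcal R_{t_0}=\sigma_1\,g_{-2A}\,r^*\) be the restriction of \(\mathcal U_r\) to \(Y_{t_0}\); this makes sense because \(r\) preserves each boundary circle. By \eqref{eq:reflection_on_mode}, \(\mathcal R_{t_0}\) maps the \(k\)-th Fourier summand of \(L^2(Y_{t_0};S_{Y_{t_0}})\) unitarily onto the \(k^\vee\)-th summand, acting on coefficients by \(\sigma_1\). I then conjugate \(B_0=f(0)^{-1}\sigma_3T_A\) by \(\mathcal R_0\). The computations \eqref{eq:rstar_TA} and \eqref{eq:gauge_compensation} give \(g_{-2A}r^*T_A(r^*)^{-1}g_{-2A}^{-1}=-T_A\). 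The spinor factor gives \(\sigma_1\sigma_3\sigma_1=-\sigma_3\). The two signs cancel, so \(\mathcal R_0B_0\mathcal R_0^{-1}=B_0\), and the same argument gives \(\mathcal R_TB_T\mathcal R_T^{-1}=B_T\).

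Restricted to modes, this says that \(\sigma_1\) intertwines \(B_0^{(k)}\) and \(B_0^{(k^\vee)}\): indeed \(\sigma_1\,\tfrac{m}{f(0)}\sigma_3\,\sigma_1=\tfrac{-m}{f(0)}\sigma_3\), and similarly at \(Y_T\). So the blocks are paired through a unitary map. The label \(m\) flips to \(-m\), while the spinor components \(u\) and \(v\) are swapped at the same time.

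The remaining claim is the statement about the APS sign choice, and I expect this to be the only delicate point, because of how it is phrased. From \eqref{eq:APS_mpos}–\eqref{eq:APS_mneg}, sending \(m\mapsto-m\) switches the label-based rule: for \(m>0\) one imposes \(u(0)=0,\ v(T)=0\), whereas for \(-m<0\) one imposes \(v(0)=0,\ u(T)=0\). The swap \(u\leftrightarrow v\) produced by \(\sigma_1\) then carries the first set of conditions exactly onto the second. Hence \(\sigma_1\) maps \(\Dom(D_{k,\APS})\) onto \(\Dom(D_{k^\vee,\APS})\), and it maps the positive spectral subspace of \(B_{t_0}^{(k)}\) onto that of \(B_{t_0}^{(k^\vee)}\). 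I will state explicitly that the reversal concerns the sign of the label \(m\), and that it is compensated by the chirality swap. This is consistent with \(\mathcal R_{t_0}\) commuting with \(P_{>0}(B_{t_0})\). For \(m=0\) the block is self-paired (\(k^\vee=k\)), and there is nothing to prove beyond the reflection-invariance of the kernel completion assumed in Remark~\ref{rem:global_APS_convention}.
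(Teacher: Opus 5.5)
Your proposal is correct, and its first paragraph is exactly the paper's proof: substitute \(m^\vee=-m\) from \eqref{eq:m_pairing} into the modewise boundary formulas, then read off that the sign-based APS rule flips. The rest is not needed for this statement but is correct, namely the conjugation \(\mathcal R_t B^{(k)}_t\mathcal R_t^{-1}=B^{(k^\vee)}_t\) and the observation that the \(u\leftrightarrow v\) swap by \(\sigma_1\) carries one APS domain onto the other; the paper proves these later as Lemma~\ref{lem:sec4_reflection_conjugates_boundary_blocks}, Corollary~\ref{cor:sec4_fixed_projector_pairing}, and Proposition~\ref{prop:sec5_reflection_preserves_APS_zero_mode}.
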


\begin{proof}
Substitute \(m^\vee=-m\) from \eqref{eq:m_pairing} into the boundary formulas
\eqref{eq:Bk_again}. This gives \eqref{eq:Bkvee}. Since the APS boundary condition depends
only on the sign of \(m\), the sign choice is reversed accordingly.
\end{proof}

\section{Static fixed-holonomy APS structure}
\label{sec:rigorousSF}

If reflection is required at every parameter value in the line-twist model, then the
holonomy must be constant. We therefore fix
\begin{equation}\label{eq:sec4_fixed_holonomy_value}
A=A_0,
\qquad
2A_0\in\mathbb Z,
\end{equation}
and record the resulting static consequences: the pairing \(m\leftrightarrow -m\), the
unitary equivalence of the paired APS blocks, and the vanishing of the boundary
\(\eta\)-terms away from \(m=0\). The genuinely moving \(O(2)\)-equivariant family is
treated later in \autoref{sec:O2_sf}, and non-equivariant holonomy deformations in
\autoref{sec:Z2_flow_1}.

\subsection{Pointwise reflection compatibility forces constant holonomy}

\begin{proposition}[Pointwise reflection compatibility forces constant holonomy]
\label{prop:sec4_constant_holonomy}
Let \(A:[0,1]\to\R\) be continuous. If \(2A(s)\in\Z\) for all \(s\in[0,1]\), then \(A\) is constant.
\end{proposition}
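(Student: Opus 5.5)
The plan is to use the elementary topological fact that a continuous map from a connected space into a discrete subset of \(\R\) is constant. Consider the function \(h:[0,1]\to\R\), \(h(s)=2A(s)\). It is continuous because \(A\) is continuous, and by hypothesis it takes values in \(\Z\).

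First, we show \(h\) is locally constant. Fix \(s_0\in[0,1]\). By continuity, there is \(\delta>0\) such that \(|h(s)-h(s_0)|<1/2\) for all \(s\in[0,1]\) with \(|s-s_0|<\delta\). Since \(h(s)\) and \(h(s_0)\) are both integers whose distance is less than \(1/2\), they are equal. Hence \(h\) is constant on a neighborhood of every point.

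Next, we pass from local to global constancy. Put \(n_0=h(0)\) and consider \(S=\{s\in[0,1]: h(s)=n_0\}\). This set is nonempty since it contains \(0\). It is open by the previous step. It is also closed, being the preimage of the point \(n_0\) under the continuous map \(h\). Since \([0,1]\) is connected, \(S=[0,1]\).

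As an alternative to this step, we could apply the intermediate value theorem directly. If \(h(s_1)\neq h(s_2)\), then \(h\) would attain a non-integer value strictly between them, contradicting \(h([0,1])\subset\Z\).

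It follows that \(A(s)=n_0/2\) for all \(s\), so \(A\) is constant. There is no real obstacle here. The only point deserving care is that the hypothesis concerns the representative \(A(s)\in\R\) of a continuous path, not merely its class in \(\R/\Z\); this is precisely what makes \(2A\) a continuous integer-valued function. We would remark that the conclusion concerns \(A\) itself, and that its class \([A]\in\{0,\tfrac12\}\) in \(\R/\Z\) is then constant as well.
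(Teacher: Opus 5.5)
Your proof is correct and takes the same route as the paper: a continuous map from the connected interval $[0,1]$ into the discrete set $\tfrac12\Z$ must be constant. The paper states this in one line (the image is connected inside a discrete set, hence a singleton). Your local-constancy and clopen-set argument, or alternatively the intermediate value theorem, just spells out that fact.
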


\begin{proof}
The image of \(A\) lies in the discrete set \(\frac12\Z\). Since \([0,1]\) is connected and
\(A\) is continuous, \(A([0,1])\) must be connected, hence a singleton.
\end{proof}

\subsection{The fixed-holonomy APS operator}

Fix \(A_0\) as in \eqref{eq:sec4_fixed_holonomy_value}. The twisted Dirac operator is
\begin{equation}\label{eq:sec4_fixed_Dirac}
D_{A_0}
=
i\sigma_1\Bigl(\partial_t+\frac{f'(t)}{2f(t)}\Bigr)
+
i\sigma_2\,\frac{1}{f(t)}(\partial_\theta+iA_0),
\end{equation}
with boundary operators
\begin{equation}\label{eq:sec4_fixed_boundary_ops}
B_{0,A_0}
=
\frac{1}{f(0)}\,\sigma_3(-i\partial_\theta+A_0),
\qquad
B_{T,A_0}
=
-\frac{1}{f(T)}\,\sigma_3(-i\partial_\theta+A_0).
\end{equation}

For \(t\in\{0,T\}\), let \(P_{t,A_0}^{\APS}\) denote the completed APS projector in the sense of
Remark~\ref{rem:global_APS_convention}. Thus, on every nonzero boundary mode it agrees with
\(\mathbf 1_{(0,\infty)}(B_{t,A_0})\), while in the boundary kernel it includes the fixed auxiliary self-adjoint kernel completion. With this convention, the APS domain is
\begin{equation}\label{eq:sec4_fixed_APS_domain}
\Dom(D_{A_0}^{\APS})
=
\Bigl\{
\psi\in H^1(M;S):
P_{0,A_0}^{\APS}(\psi|_{Y_0})=0,\ 
P_{T,A_0}^{\APS}(\psi|_{Y_T})=0
\Bigr\}.
\end{equation}

Setting \(m=k+A_0\), the modewise boundary operators are
\begin{equation}\label{eq:sec4_fixed_boundary_mode_ops}
B_{0,A_0}^{(k)}=\frac{m}{f(0)}\sigma_3,
\qquad
B_{T,A_0}^{(k)}=-\frac{m}{f(T)}\sigma_3.
\end{equation}
For \(m\neq 0\), the APS condition becomes
\begin{subequations}\label{eq:sec4_fixed_APS_bc}
\begin{equation}\label{eq:sec4_fixed_APS_bc_positive}
m>0
\quad\Longrightarrow\quad
u(0)=0,\qquad v(T)=0,
\end{equation}
\begin{equation}\label{eq:sec4_fixed_APS_bc_negative}
m<0
\quad\Longrightarrow\quad
v(0)=0,\qquad u(T)=0.
\end{equation}
\end{subequations}
The self-paired sector is \(m=0\).

\subsection{Reflection on boundary blocks}

Because \(2A_0\in\Z\), the lifted reflection is
\(\mathcal U_r=U_r\,g_{-2A_0}\,r^*\), with boundary restriction
\begin{equation}\label{eq:sec4_boundary_reflection_operator_fixed}
\mathcal R_t:L^2(Y_t;S)\to L^2(Y_t;S).
\end{equation}
For each \(k\in\mathcal K\), define
\begin{equation}
k^\vee=-k-2A_0.
\end{equation}
Then
\begin{equation}
m^\vee=k^\vee+A_0=-m,
\end{equation}
as discussed in \autoref{sec:Mode_paring_induced_reflection}.

\begin{lemma}[Reflection exchanges the boundary labels]
\label{lem:sec4_fixed_boundary_pairing}
For each \(k\in\mathcal K\),
\begin{equation}
B_{0,A_0}^{(k^\vee)}=-B_{0,A_0}^{(k)},
\qquad
B_{T,A_0}^{(k^\vee)}=-B_{T,A_0}^{(k)}.
\end{equation}
\end{lemma}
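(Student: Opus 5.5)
The plan is to compute directly from the modewise formula \eqref{eq:sec4_fixed_boundary_mode_ops} together with the label identity \(m^\vee=-m\) established in \autoref{sec:Mode_paring_induced_reflection}. This is essentially the same computation as Proposition~\ref{prop:blockpair}, now phrased as an identity between the two boundary blocks.

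\textbf{Step 1: the paired block is a legitimate label.} Since \(2A_0\in\Z\), the shift \(k\mapsto k^\vee=-k-2A_0\) preserves the Fourier lattice \(\mathcal K\). This holds in both the periodic and the anti-periodic case, because \(-2A_0\) is an integer and \(k\mapsto -k\) preserves \(\Z\) and \(\Z+\tfrac12\). Hence \(B^{(k^\vee)}_{t,A_0}\) is a block of the same boundary operator.

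\textbf{Step 2: compute the shifted parameter.} The shifted parameter attached to \(k^\vee\) is
\[
k^\vee+A_0=-k-2A_0+A_0=-(k+A_0)=-m .
\]

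\textbf{Step 3: substitute into the block formulas.} Putting \(-m\) into \eqref{eq:sec4_fixed_boundary_mode_ops} gives
\[
B_{0,A_0}^{(k^\vee)}=\frac{-m}{f(0)}\sigma_3=-B_{0,A_0}^{(k)},
\qquad
B_{T,A_0}^{(k^\vee)}=-\frac{-m}{f(T)}\sigma_3=-B_{T,A_0}^{(k)} .
\]
The warping factors \(f(0)\) and \(f(T)\) are unaffected because \(r\) acts only on \(\theta\).

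\textbf{Optional consistency check via the boundary reflection.} One can also check the result against the boundary reflection \(\mathcal R_t\). The restriction of \(\mathcal U_r\) maps the \(k\)-mode to the \(k^\vee\)-mode by \eqref{eq:reflection_on_mode}, and applies \(U_r=\sigma_1\) to the spinor part. Since \(\sigma_1\sigma_3\sigma_1=-\sigma_3\), the conjugate \(\mathcal R_t B^{(k)}_{t,A_0}\mathcal R_t^{-1}\), viewed on the \(k^\vee\)-mode, is \(-(\pm m/f(t))\sigma_3\) with \(m\) unchanged. This is \(+B^{(k^\vee)}_{t,A_0}\), consistent with \(\mathcal R_t\) commuting with the full boundary operator. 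That commutation is expected from Lemma~\ref{lem:reflection_intertwining}, since \(\mathcal U_r\) fixes \(\gamma_1\) and hence the normal.

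\textbf{Main obstacle.} There is essentially no obstacle. The one point needing care is notational: the lemma compares \(B^{(k^\vee)}\) and \(B^{(k)}\) as \(2\times2\) matrices, identified through the standard basis of each Fourier mode. It is not a statement about conjugation by \(\mathcal R_t\), which includes the \(\sigma_1\) factor and intertwines the blocks without a sign. I would state this identification explicitly to avoid an apparent sign conflict with the consistency check.
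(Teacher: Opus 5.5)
Your proposal is correct and follows the paper's proof, which simply substitutes \(m^\vee=k^\vee+A_0=-m\) into the modewise formulas \(B^{(k)}_{0,A_0}=\frac{m}{f(0)}\sigma_3\) and \(B^{(k)}_{T,A_0}=-\frac{m}{f(T)}\sigma_3\). Your lattice check in Step 1 and the consistency check via \(\sigma_1\sigma_3\sigma_1=-\sigma_3\) are both sound; the latter is precisely the paper's next result, Lemma~\ref{lem:sec4_reflection_conjugates_boundary_blocks}.
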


\begin{proof}
Insert \(m^\vee=-m\) into
\begin{equation}
B_{0,A_0}^{(k)}=\frac{m}{f(0)}\sigma_3,
\qquad
B_{T,A_0}^{(k)}=-\frac{m}{f(T)}\sigma_3.
\end{equation}
\end{proof}

\begin{lemma}[Reflection conjugates boundary blocks]
\label{lem:sec4_reflection_conjugates_boundary_blocks}
For each \(k\in\mathcal K\) and \(t\in\{0,T\}\), one has
\begin{equation}
\mathcal R_t B^{(k)}_{t,A_0}\mathcal R_t^{-1}
=
B^{(k^\vee)}_{t,A_0}.
\end{equation}
\end{lemma}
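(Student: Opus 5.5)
The plan is to read the identity as one between operators on different Fourier sectors. For \(t\in\{0,T\}\), write \(L^2(Y_t;S)=\widehat\bigoplus_{k\in\mathcal K}\mathcal H_k\), where \(\mathcal H_k=\{e^{ik\theta}w:\ w\in\mathbb C^2\}\). By \eqref{eq:reflection_on_mode}, the boundary restriction \(\mathcal R_t=U_r\,g_{-2A_0}\,r^*|_{Y_t}\) maps \(\mathcal H_k\) unitarily onto \(\mathcal H_{k^\vee}\). Its inverse maps \(\mathcal H_{k^\vee}\) back to \(\mathcal H_k\), because \((k^\vee)^\vee=k\) and \(\mathcal R_t^2=\Id\) by Lemma~\ref{lem:Ur_involution} restricted to the boundary. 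So \(\mathcal R_t B^{(k)}_{t,A_0}\mathcal R_t^{-1}\) is an operator on \(\mathcal H_{k^\vee}\), and the claim is that it equals \(B^{(k^\vee)}_{t,A_0}\).

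I would give two arguments, one global and one modewise. For the global one, I would first show that \(\mathcal R_t\) commutes with the full boundary operator \(B_{t,A_0}=\pm f(t)^{-1}\sigma_3 T_{A_0}\). This uses three inputs:
\begin{itemize}
\item \(r^*T_{A_0}(r^*)^{-1}=-T_{-A_0}\), from \eqref{eq:rstar_TA};
\item \(g_{-2A_0}T_{-A_0}g_{-2A_0}^{-1}=T_{A_0}\), from \eqref{eq:gauge_compensation};
\item \(\sigma_1\sigma_3\sigma_1=-\sigma_3\).
\end{itemize}
The two sign flips cancel, and \(f(t)\) is a constant on \(Y_t\), so \(\mathcal R_t B_{t,A_0}\mathcal R_t^{-1}=B_{t,A_0}\). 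Since \(B_{t,A_0}\) preserves each \(\mathcal H_k\) and restricts there to \(B^{(k)}_{t,A_0}\), compressing this identity to \(\mathcal H_k\to\mathcal H_{k^\vee}\) gives the lemma.

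As a consistency check I would also compute directly on a mode, say at \(t=0\). Take \(e^{ik^\vee\theta}w\in\mathcal H_{k^\vee}\). Then \(\mathcal R_0^{-1}\) sends it to \(e^{ik\theta}\sigma_1 w\), and \(B^{(k)}_{0,A_0}\) sends that to \(\frac{m}{f(0)}e^{ik\theta}\sigma_3\sigma_1 w\). Applying \(\mathcal R_0\) gives \(\frac{m}{f(0)}e^{ik^\vee\theta}\sigma_1\sigma_3\sigma_1 w=-\frac{m}{f(0)}e^{ik^\vee\theta}\sigma_3 w\). By Lemma~\ref{lem:sec4_fixed_boundary_pairing} this is exactly \(B^{(k^\vee)}_{0,A_0}(e^{ik^\vee\theta}w)\). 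The case \(t=T\) is the same with the overall sign reversed.

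The main point to handle with care is bookkeeping rather than analysis. First, \(\mathcal R_t\) must be identified with the boundary restriction of the bulk operator, and the restriction must be legitimate: all three factors act only in \(\theta\) and pointwise in \(t\), so they commute with restriction to \(Y_t\). Second, the sign produced by \(\sigma_1\sigma_3\sigma_1=-\sigma_3\) must combine correctly with the sign \(m^\vee=-m\). Finally, I would note that since \(r\) fixes \(t\), the inward normal convention \eqref{eq:inward_normal} and the boundary Clifford action \eqref{eq:cU_formula} are unchanged under reflection.
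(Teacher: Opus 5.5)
Your proposal is correct, and your modewise ``consistency check'' is exactly the paper's proof. The paper computes $\mathcal R_0 B^{(k)}_{0,A_0}\mathcal R_0^{-1}=\frac{m}{f(0)}\sigma_1\sigma_3\sigma_1=-\frac{m}{f(0)}\sigma_3$ and identifies this with $B^{(k^\vee)}_{0,A_0}$ via $m^\vee=-m$, and likewise at $Y_T$. It leaves implicit the Fourier relabeling $e^{ik^\vee\theta}w\mapsto e^{ik\theta}\sigma_1 w$, which you write out.

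Your primary argument takes a different route. You first prove the sector-free identity $\mathcal R_t B_{t,A_0}\mathcal R_t^{-1}=B_{t,A_0}$ from the operator identities \eqref{eq:rstar_TA} and \eqref{eq:gauge_compensation}, already used in Lemma~\ref{lem:reflection_intertwining}. There the sign from reversing $\partial_\theta$ cancels the sign from $\sigma_1\sigma_3\sigma_1=-\sigma_3$. Only then do you compress to $\mathcal H_k\to\mathcal H_{k^\vee}$. This buys independence from the explicit mode formulas, and it gives the statement in the form used later. By functional calculus, $\mathcal R_t$ commutes with $P_{>0}(B_{t,A_0})$ and preserves $\ker B_{t,A_0}$, which is what underlies Corollary~\ref{cor:sec4_fixed_projector_pairing} and Proposition~\ref{prop:sec5_reflection_preserves_APS_zero_mode}. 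The paper's version is shorter and more elementary: it reads the matrix sign $\sigma_1\sigma_3\sigma_1=-\sigma_3$ directly as the relabeling $m\mapsto -m$, rather than as a cancellation against the reversal of angular momentum.

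One small wording point. Your closing remark that the boundary Clifford action \eqref{eq:cU_formula} is ``unchanged under reflection'' is accurate only for the normal $N$. The reflection reverses $U=e_2$, and correspondingly $\sigma_1 c(U)\sigma_1=-c(U)$. That is exactly the sign your argument uses, so the remark should not be read as saying $\mathcal R_t$ commutes with $c(U)$.
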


\begin{proof}
The boundary operators are
\begin{equation}
B^{(k)}_{0,A_0}
=
\frac{m}{f(0)}\sigma_3,
\qquad
B^{(k)}_{T,A_0}
=
-\frac{m}{f(T)}\sigma_3.
\end{equation}
The lifted reflection changes the Fourier label from \(k\) to \(k^\vee=-k-2A_0\), so
\begin{equation}
m^\vee=-m.
\end{equation}
On the spinor factor, the reflection matrix is \(U_r=\sigma_1\), and
\begin{equation}
\sigma_1\sigma_3\sigma_1=-\sigma_3.
\end{equation}
Therefore, at \(Y_0\),
\begin{equation}
\mathcal R_0 B^{(k)}_{0,A_0}\mathcal R_0^{-1}
=
\frac{m}{f(0)}\sigma_1\sigma_3\sigma_1
=
-\frac{m}{f(0)}\sigma_3
=
\frac{m^\vee}{f(0)}\sigma_3
=
B^{(k^\vee)}_{0,A_0}.
\end{equation}
Similarly, at \(Y_T\),
\begin{equation}
\mathcal R_T B^{(k)}_{T,A_0}\mathcal R_T^{-1}
=
-\frac{m}{f(T)}\sigma_1\sigma_3\sigma_1
=
\frac{m}{f(T)}\sigma_3
=
-\frac{m^\vee}{f(T)}\sigma_3
=
B^{(k^\vee)}_{T,A_0}.
\end{equation}
\end{proof}

\begin{corollary}[Action on APS projections]
\label{cor:sec4_fixed_projector_pairing}
Assume \(m\neq 0\). Then, for \(t\in\{0,T\}\),
\begin{equation}
\mathcal R_t\,P_{>0}\!\bigl(B_{t,A_0}^{(k)}\bigr)\,\mathcal R_t^{-1}
=
P_{>0}\!\bigl(B_{t,A_0}^{(k^\vee)}\bigr).
\end{equation}
\end{corollary}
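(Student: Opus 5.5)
The plan is to obtain the statement from Lemma~\ref{lem:sec4_reflection_conjugates_boundary_blocks} by functional calculus. Conjugation by a unitary commutes with Borel functional calculus of self-adjoint operators. Concretely, if \(V\) is unitary and \(B\) is self-adjoint, then for any bounded Borel function \(h\) one has \(V\,h(B)\,V^{-1}=h(VBV^{-1})\). Applying this with \(h=\mathbf 1_{(0,\infty)}\), \(V=\mathcal R_t\) and \(B=B^{(k)}_{t,A_0}\), the lemma gives
\begin{equation*}
\mathcal R_t\,P_{>0}\bigl(B^{(k)}_{t,A_0}\bigr)\,\mathcal R_t^{-1}
=\mathbf 1_{(0,\infty)}\bigl(\mathcal R_t B^{(k)}_{t,A_0}\mathcal R_t^{-1}\bigr)
=P_{>0}\bigl(B^{(k^\vee)}_{t,A_0}\bigr).
\end{equation*}

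Some care is needed about domains. As an operator on all of \(L^2(Y_t;S)\), \(\mathcal R_t\) maps the \(k\)-th Fourier subspace \(e^{ik\theta}\C^2\) onto the \(k^\vee\)-th subspace \(e^{ik^\vee\theta}\C^2\), by \eqref{eq:reflection_on_mode}. It is unitary, being the boundary restriction of the unitary \(\mathcal U_r\) (Lemma~\ref{lem:reflection_intertwining}). So the identity is to be read as an equality of operators between these two-dimensional block spaces, or equivalently on \(L^2(Y_t;S)\) with both projections extended by zero off their blocks.

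To avoid any abstract functional-calculus subtleties, I would also give a direct check. Since \(m\neq0\), the block \(B^{(k)}_{0,A_0}=\frac{m}{f(0)}\sigma_3\) is a nonzero multiple of \(\sigma_3\). Its positive projection is therefore \(\tfrac12(1+\sign(m)\sigma_3)\), and at \(Y_T\) it is \(\tfrac12(1-\sign(m)\sigma_3)\). Conjugating by \(U_r=\sigma_1\) flips \(\sigma_3\) to \(-\sigma_3\), and \(m^\vee=-m\) flips \(\sign(m)\). This gives exactly the positive projection of the \(k^\vee\)-block, together with the relabeling of the Fourier factor \(e^{ik\theta}\mapsto e^{ik^\vee\theta}\).

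The hypothesis \(m\neq0\) guarantees that \(m^\vee\neq0\) too. Hence both blocks are invertible, and no kernel completion from Remark~\ref{rem:global_APS_convention} enters. The only step needing attention is this bookkeeping of the block identification under \(\mathcal R_t\); the rest is immediate.
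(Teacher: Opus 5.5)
Your proposal is correct and takes the same route as the paper. The paper's proof simply invokes Lemma~\ref{lem:sec4_reflection_conjugates_boundary_blocks} together with functional calculus for $\mathbf 1_{(0,\infty)}$. Your explicit check via $\tfrac12\bigl(1\pm\sign(m)\sigma_3\bigr)$ and your bookkeeping of the identification of the $k$-block with the $k^\vee$-block are sound additions that the paper leaves implicit.
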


\begin{proof}
This follows from Lemma~\ref{lem:sec4_reflection_conjugates_boundary_blocks} by functional calculus.
\end{proof}

\subsection{Paired APS blocks}

Let
\begin{equation}\label{eq:sec4_fixed_L2_decomp}
L^2(M;S)
=
\bigoplus_{k\in\mathcal K}\mathcal H_k,
\qquad
\mathcal H_k\cong L^2([0,T],f(t)\,dt;\C^2),
\end{equation}
be the Fourier decomposition. For \(m\neq 0\), set \(\mathcal H_{\{m,-m\}} = \mathcal H_k\oplus\mathcal H_{k^\vee}\). On this block, we have
\begin{equation}\label{eq:sec4_fixed_block_form}
D_{A_0}^{\APS}\big|_{\mathcal H_{\{m,-m\}}}
=
\begin{pmatrix}
D_{A_0,m}^{\APS} & 0\\
0 & D_{A_0,-m}^{\APS}
\end{pmatrix}.
\end{equation}

\begin{proposition}[Unitary equivalence of paired APS blocks]
\label{lem:pairunitary}
For each \(m\neq 0\), the operators \(D_{A_0,m}^{\APS}\) and \(D_{A_0,-m}^{\APS}\) are unitarily equivalent. Consequently,
\begin{equation}\label{eq:sec4_fixed_spectral_symmetry}
\Spec\bigl(D_{A_0,m}^{\APS}\bigr)
=
\Spec\bigl(D_{A_0,-m}^{\APS}\bigr)
\end{equation}
with the same multiplicities.
\end{proposition}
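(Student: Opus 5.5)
The idea is to show that the lifted reflection $\mathcal U_r$ from \eqref{eq:Ucalr_def}, restricted to the Fourier summand $\mathcal H_k$, is the unitary we need. For $m=k+A_0\neq 0$, the formula \eqref{eq:reflection_on_mode} shows that $\mathcal U_r$ maps $\mathcal H_k$ onto $\mathcal H_{k^\vee}$ by
\begin{equation*}
\binom{u}{v}\longmapsto \sigma_1\binom{u}{v}=\binom{v}{u},
\end{equation*}
after the identification $\mathcal H_k\cong\mathcal H_{k^\vee}\cong L^2([0,T],f\,dt;\C^2)$. This radial map $V=\sigma_1$ is a unitary on $L^2([0,T],f\,dt;\C^2)$. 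I will check that $V$ intertwines the two mode operators together with their APS domains. Unitary equivalence then gives equality of spectra with multiplicities, since $V$ carries eigenspaces to eigenspaces.

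\emph{Intertwining the differential expressions.} By \eqref{eq:Dk}, the reduced operator with parameter $m$ is $i\begin{pmatrix}0&A^+_m\\A^-_m&0\end{pmatrix}$, where $A^\pm_m=\partial_t+\frac{f'}{2f}\pm\frac{m}{f}$. Since $A^\pm_{-m}=A^\mp_m$, conjugation by $\sigma_1$ swaps the off-diagonal entries:
\begin{equation*}
\sigma_1\, i\begin{pmatrix}0&A^+_m\\A^-_m&0\end{pmatrix}\sigma_1
= i\begin{pmatrix}0&A^-_m\\A^+_m&0\end{pmatrix}
= i\begin{pmatrix}0&A^+_{-m}\\A^-_{-m}&0\end{pmatrix}.
\end{equation*}
This is the modewise form of Lemma~\ref{lem:reflection_intertwining}, so I can cite that lemma or verify it directly as above.

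\emph{Matching the domains.} This is the step that needs care, although here it is easy. $V$ preserves $W_k=H^1([0,T],f\,dt;\C^2)$. Take $m>0$. By \eqref{eq:sec4_fixed_APS_bc_positive}, the domain of $D^{\APS}_{A_0,m}$ is $\{u(0)=0,\ v(T)=0\}$. Swapping components turns this into $\{v(0)=0,\ u(T)=0\}$, which is exactly the domain \eqref{eq:sec4_fixed_APS_bc_negative} for $-m<0$. The case $m<0$ is symmetric. Alternatively, Corollary~\ref{cor:sec4_fixed_projector_pairing} states at the boundary that $\mathcal R_t$ carries $\ker P_{>0}(B^{(k)}_{t,A_0})$ onto $\ker P_{>0}(B^{(k^\vee)}_{t,A_0})$, and this gives the domain matching immediately.

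Putting these together, $V\,D^{\APS}_{A_0,m}\,V^{-1}=D^{\APS}_{A_0,-m}$ holds as an equality of unbounded operators, domains included. Hence the two operators are unitarily equivalent. Because the APS realizations are self-adjoint with compact resolvent on a finite interval, the eigenvalues and their multiplicities agree. The only potential obstacle is making sure the boundary conditions transform correctly. The hypothesis $m\neq0$ ensures that no kernel completion is involved, so the matching reduces to the explicit sign check above.
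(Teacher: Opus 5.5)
Your proof is correct and follows the paper's route. Like the paper, you restrict the lifted reflection $\mathcal U_r$ to a unitary $\mathcal H_k\to\mathcal H_{k^\vee}$ that intertwines the mode operators and carries one APS domain onto the other. The paper cites Corollary~\ref{cor:sec4_fixed_projector_pairing} for the domain step, whereas you also verify directly the identity $A^\pm_{-m}=A^\mp_m$ and the component swap $\{u(0)=0,\ v(T)=0\}\leftrightarrow\{v(0)=0,\ u(T)=0\}$.
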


\begin{proof}
The lifted reflection \(\mathcal U_r\) maps the \(k\)-block unitarily to the \(k^\vee\)-block
and transports the APS condition by
Corollary~\ref{cor:sec4_fixed_projector_pairing}. Hence \(\mathcal{U}_r\,D_{A_0,m}^{\APS}\,\mathcal{U}_r^{-1} = D_{A_0,-m}^{\APS}\).
\end{proof}

\begin{remark}\label{rem:sec4_evenness_bridge}
The significance of Proposition~\ref{lem:pairunitary} is structural: any later fixed-holonomy perturbation that preserves the same reflection symmetry inherits identical contributions from the paired sectors \(m\) and \(-m\). Away from the self-paired sector \(m=0\), spectral-flow contributions therefore occur in equal pairs.
\end{remark}

\subsection{Vanishing of boundary \texorpdfstring{$\eta$}{eta}-terms off the self-paired sector}

For any self-adjoint operator \(B\) with discrete spectrum, we write
\begin{equation}\label{eq:eta_definition_local}
\eta(B,s)=\sum_{\lambda\in\Spec(B)\setminus\{0\}}\sign(\lambda)\,|\lambda|^{-s},
\qquad
\Re(s)\gg 0,
\end{equation}
and denote by
\begin{equation}\label{eq:eta_at_zero_local}
\eta(B)=\eta(B,0)
\end{equation}
its APS \(\eta\)-invariant, defined by meromorphic continuation to \(s=0\). We also write
\begin{equation}\label{eq:h_definition_local}
h(B)=\dim\ker(B),
\end{equation}
and
\begin{equation}\label{eq:reduced_eta_definition_local}
\bar\eta(B)=\frac{\eta(B)+h(B)}{2}
\end{equation}
for the reduced \(\eta\)-invariant.

For each Fourier mode \(k\in\mathcal K\), the boundary operators are
\begin{equation}\label{eq:sec4_fixed_eta_boundary_blocks}
B_{0,A_0}^{(k)}=\frac{m}{f(0)}\sigma_3,
\qquad
B_{T,A_0}^{(k)}=-\frac{m}{f(T)}\sigma_3,
\qquad
m=k+A_0.
\end{equation}
If \(m\neq 0\), then each boundary block is invertible and has spectrum symmetric about \(0\):
\begin{equation}\label{eq:sec4_fixed_eta_block_spectrum}
\Spec\bigl(B_{0,A_0}^{(k)}\bigr)
=
\left\{\frac{m}{f(0)},-\frac{m}{f(0)}\right\},
\qquad
\Spec\bigl(B_{T,A_0}^{(k)}\bigr)
=
\left\{-\frac{m}{f(T)},\frac{m}{f(T)}\right\}.
\end{equation}
Therefore,
\begin{equation}\label{eq:sec4_fixed_eta_sectorwise_zero}
\eta\!\bigl(B_{0,A_0}^{(k)}\bigr)=0,
\qquad
\eta\!\bigl(B_{T,A_0}^{(k)}\bigr)=0
\qquad (m\neq 0).
\end{equation}
Since the kernel is trivial for \(m\neq 0\),
\begin{equation}\label{eq:sec4_fixed_kernel_zero_nonwall}
h\!\bigl(B_{0,A_0}^{(k)}\bigr)=0,
\qquad
h\!\bigl(B_{T,A_0}^{(k)}\bigr)=0
\qquad (m\neq 0),
\end{equation}
and hence also
\begin{equation}\label{eq:sec4_fixed_reduced_eta_sectorwise_zero}
\bar\eta\!\bigl(B_{0,A_0}^{(k)}\bigr)=0,
\qquad
\bar\eta\!\bigl(B_{T,A_0}^{(k)}\bigr)=0
\qquad (m\neq 0).
\end{equation}

Thus, every non-self-paired sector contributes trivially to the APS boundary correction on each boundary component separately. However, in the self-paired sector \(m=0\), the boundary operator may have nontrivial kernel, and the reduced invariant
\(\bar\eta=(\eta+h)/2\) need not vanish. Hence \(m=0\) is the unique sector in which a nontrivial boundary contribution can occur.

\begin{remark}\label{rem:sec4_selfpaired_sector}
The self-paired sector \(m=0\) occurs exactly when \(k=-A_0\). It is the unique sector fixed by the reflection pairing \(m\leftrightarrow -m\), but in the present explicit model, the stronger statement holds that every nonzero boundary block already has vanishing \(\eta\)-invariant on its own because its spectrum is symmetric about \(0\). The genuinely moving-family analysis of the self-paired sector is deferred to \autoref{sec:O2_sf}.
\end{remark}

\subsection{Ordinary APS index in the static line-twist case}

We briefly record the ordinary chiral APS index associated with the fixed-holonomy
line-twist model. This is distinct from the reflection trace studied in \autoref{sec:APSindex}: the latter is the natural symmetry-visible quantity for the orientation-reversing reflection,
whereas the ordinary APS index is the Fredholm index of the positive-chirality half of the
Dirac operator.

Let
\begin{equation}\label{eq:sec4_chiral_half_}
D_{A_0}^{+}:\Gamma(S^{+})\longrightarrow \Gamma(S^{-})
\end{equation}
denote the positive-chirality half of the twisted Dirac operator \(D_{A_0}\), equipped with
the APS boundary condition induced by \(\Dom(D_{A_0}^{\APS})\).

We write
\begin{equation}\label{eq:sec4_chiral_boundary_decomposition}
B_{t,A_0}=
\begin{pmatrix}
B^{-}_{t,A_0} & 0\\
0 & B^{+}_{t,A_0}
\end{pmatrix},
\qquad t\in\{0,T\},
\end{equation}
for the chiral decomposition of the full boundary operator on
\(S|_{Y_t}=S^{-}|_{Y_t}\oplus S^{+}|_{Y_t}\). Since \(S^{-}\) is the upper component and
\(S^{+}\) is the lower component in our convention, the positive-chirality boundary operators
are given modewise by
\begin{equation}\label{eq:sec4_positive_chiral_boundary_modes}
B^{+}_{0,A_0}(k)=-\frac{m}{f(0)},
\qquad
B^{+}_{T,A_0}(k)=+\frac{m}{f(T)},
\qquad
m=k+A_0.
\end{equation}

After identifying the two boundary circles by the angular coordinate, the endpoint
positive-chirality boundary operators satisfy the sign/scale relation
\begin{equation}\label{eq:sec4_positive_chiral_endpoint_relation}
B^{+}_{T,A_0}
=
-\,c\,U\,B^{+}_{0,A_0}\,U^{-1},
\qquad
c=\frac{f(0)}{f(T)}>0,
\end{equation}
with \(U\) unitary.

\begin{proposition}[Vanishing of the ordinary APS index in the invertible fixed-holonomy case]
\label{prop:sec4_static_APS_index_zero}
Assume \(A=A_0\), \(2A_0\in\Z\), and
\begin{equation}
-A_0\notin\mathcal K.
\end{equation}
Equivalently, the self-paired mode \(m=0\) is absent. Then
\begin{equation}
\ind\!\bigl(D_{A_0,\APS}^{+}\bigr)=0.
\end{equation}
\end{proposition}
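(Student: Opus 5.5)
The plan is to compute the kernel and cokernel of \(D_{A_0,\APS}^{+}\) directly, Fourier mode by Fourier mode. Since \(-A_0\notin\mathcal K\), every \(m=k+A_0\) is nonzero, so by Remark~\ref{rem:global_APS_convention} no kernel completion enters. The APS condition is then exactly \eqref{eq:sec4_fixed_APS_bc}. The idea is that each first-order chiral mode equation has a one-dimensional, nowhere-vanishing solution space, and the APS condition in every mode forces that solution to vanish at one endpoint.

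\textbf{Step 1: reduce to modes.} Because \(D_{A_0}\), the boundary operators and the APS projectors all commute with the rotation action \(\rho(\varphi)\), the APS domain splits over the Fourier decomposition \eqref{eq:sec4_fixed_L2_decomp}. Hence \(\psi\in\ker D^{+}_{A_0,\APS}\) if and only if each Fourier coefficient lies in the kernel of the corresponding mode operator. The same holds for the kernel of \(D^{-}\) with the APS domain.

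\textbf{Step 2: identify the cokernel.} In our convention \(S^{+}\) is the lower component \(v\) and \(S^{-}\) is the upper component \(u\). By \eqref{eq:Dk}, \(D^{+}\) acts in mode \(k\) as \(v\mapsto iA^{+}v\), and \(D^{-}\) acts as \(u\mapsto iA^{-}u\). The full APS realization \(D_{A_0}^{\APS}\) is self-adjoint, by the standing assumption of Remark~\ref{rem:global_APS_convention}. Its domain is the direct sum of a condition on \(u\) and a condition on \(v\), as in \eqref{eq:DomDkAPS}. Therefore the adjoint of \(D^{+}_{A_0,\APS}\) is \(D^{-}\) with the complementary APS conditions on \(u\). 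This gives \(\ind D^{+}_{A_0,\APS}=\dim\ker D^{+}_{\APS}-\dim\ker D^{-}_{\APS}\), with both kernels computed in the APS domain.

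\textbf{Step 3: solve the mode equations.} The equation \(A^{+}v=0\) integrates to
\[
v(t)=C\,f(t)^{-1/2}\exp\Bigl(-\int_0^t \tfrac{m}{f(s)}\,ds\Bigr),
\]
which is either identically zero or nowhere zero. For \(m>0\) the APS condition requires \(v(T)=0\), and for \(m<0\) it requires \(v(0)=0\); in both cases \(C=0\). Similarly, \(A^{-}u=0\) gives
\[
u(t)=C'\,f(t)^{-1/2}\exp\Bigl(+\int_0^t \tfrac{m}{f(s)}\,ds\Bigr).
\]
Here the APS condition requires \(u(0)=0\) for \(m>0\) and \(u(T)=0\) for \(m<0\), which forces \(C'=0\). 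Both chiral kernels therefore vanish in every mode. By Step 1 they vanish globally, and the index is \(0\).

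As a consistency check, one could also argue from reflection. The lift \(\mathcal U_r\) contains \(U_r=\sigma_1\), which swaps chiralities, and it maps mode \(m\) to mode \(-m\) by Proposition~\ref{lem:pairunitary}. This gives \(\ind_m=-\ind_{-m}\) blockwise, and the direct computation shows each term is in fact zero. The main point requiring care is Step 2: that the cokernel is the APS kernel of \(D^{-}\) rather than a kernel with some other boundary condition. I would justify it using the self-adjointness of \(D_{A_0}^{\APS}\) together with the chirally split, modewise form of the boundary conditions. The assumption \(-A_0\notin\mathcal K\) is exactly what keeps the delicate \(m=0\) kernel completion out of this step.
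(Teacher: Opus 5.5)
Your proof is correct, but it takes a different route from the paper.

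\textbf{The paper's route.} The paper invokes the APS index theorem for the positive-chirality half. The flat line twist gives no interior term, so $\ind(D^{+}_{A_0,\APS})=-\bar\eta(B^{+}_{0,A_0})-\bar\eta(B^{+}_{T,A_0})$. The paper then uses the endpoint relation $B^{+}_{T,A_0}=-c\,U B^{+}_{0,A_0}U^{-1}$, together with invertibility (so $\bar\eta=\eta/2$), to cancel the two boundary $\eta$-terms.

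\textbf{Your route.} You compute the kernel and cokernel directly. The APS domain splits chirally and the full realization is self-adjoint, so $(D^{+}_{\APS})^{*}$ is $D^{-}$ on the $u$-part of the same domain. That is how ``complementary'' should be read. It can also be checked directly from the boundary form $i\bigl[(v_1\bar u_2+u_1\bar v_2)f\bigr]_0^T$. In every mode with $m\neq 0$, the explicit nowhere-vanishing solutions of $A^{\pm}w=0$ are then excluded by the APS condition at one endpoint.

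\textbf{What each approach buys.} Your argument is elementary and self-contained, and it proves more than the statement. It shows $\ker D^{\APS}=0$, so the chiral APS operator is invertible and $H^{\APS}=0$; this also gives Corollary~\ref{cor:sec5_trace_vanishes_without_zero_mode} immediately. It also avoids justifying the APS index formula in this setting. That justification is not automatic: the warped metric need not be a product near the boundary, and one has to check that the interior density vanishes. The paper's argument is more structural. It explains the vanishing as a cancellation of boundary $\eta$-invariants caused by the sign flip between the two ends. That formulation carries over to the kernel-corrected situation of Remark~\ref{rem:sec4_eta_kernel_correction} when the $m=0$ sector is present.

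\textbf{A point on hypotheses.} Neither argument actually uses $2A_0\in\Z$. The relation $B^{+}_{T}=-\frac{f(0)}{f(T)}B^{+}_{0}$ holds for every $A_0$, and your kernel computation never invokes the reflection. Only $-A_0\notin\mathcal K$ matters.
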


\begin{proof}
By the APS index theorem, specialized to the present flat line-twist model and written for the
positive-chirality boundary operators, the local index density has no interior contribution. Hence
\begin{equation}
\ind\!\bigl(D_{A_0,\APS}^{+}\bigr)
=
-\bar\eta\!\left(B^{+}_{0,A_0}\right)
-\bar\eta\!\left(B^{+}_{T,A_0}\right).
\end{equation}
By \eqref{eq:sec4_positive_chiral_endpoint_relation}, the endpoint positive-chirality boundary
operators satisfy
\begin{equation}
B^{+}_{T,A_0}
=
-\,c\,U\,B^{+}_{0,A_0}\,U^{-1},
\qquad
c=\frac{f(0)}{f(T)}>0.
\end{equation}
Because \(-A_0\notin\mathcal K\), the boundary operators are invertible. Therefore
\begin{equation}
\bar\eta(B)=\frac{\eta(B)}{2}.
\end{equation}
Positive rescaling and unitary conjugation do not change the eta-invariant, while multiplication
by \(-1\) changes \(\eta\) to \(-\eta\). Hence
\begin{equation}
\bar\eta\!\left(B^{+}_{T,A_0}\right)
=
-\bar\eta\!\left(B^{+}_{0,A_0}\right).
\end{equation}
Therefore
\begin{equation}
\bar\eta\!\left(B^{+}_{0,A_0}\right)
+
\bar\eta\!\left(B^{+}_{T,A_0}\right)
=
0,
\end{equation}
and the result follows.
\end{proof}

\begin{remark}[Kernel correction]
\label{rem:sec4_eta_kernel_correction}
If the self-paired sector is present, then the preceding reduced-\(\eta\) cancellation does not
hold in this simple form. Indeed, for a self-adjoint operator \(B\),
\begin{equation}
\eta(-B)=-\eta(B),
\qquad
h(-B)=h(B),
\end{equation}
where \(h(B)=\dim\ker B\). Therefore
\begin{equation}
\bar\eta(-B)
=
-\bar\eta(B)+h(B).
\end{equation}
Thus the reduced eta-invariant does not simply change sign in the presence of kernel. This is
why Proposition~\ref{prop:sec4_static_APS_index_zero} is stated under the invertibility assumption
\(-A_0\notin\mathcal K\).
\end{remark}

\begin{remark}
\label{rem:sec4_index_vs_reflection_trace} 
Thus, in the invertible fixed-holonomy line-twist model with flat twisting connection, the ordinary chiral APS index vanishes. When the self-paired sector is present, one must keep the finite-dimensional kernel correction described in Remark~\ref{rem:sec4_eta_kernel_correction}.
\end{remark}

\section{Reflection trace for the APS operator with reflection symmetry}
\label{sec:APSindex}

In this section, we study the part of the APS boundary problem that is visible to reflection symmetry.
Since the reflection element reverses orientation, the natural quantity is not a chiral equivariant
index, but rather the trace of the lifted reflection on the APS harmonic space
\begin{equation}\label{eq:sec5_APS_harmonic_space}
H^{\APS} = \ker(D^{\APS}).
\end{equation}
Here \(D^{\APS}\) denotes the APS Dirac operator, so \(H^{\APS}\) is the finite-dimensional space of
APS-harmonic spinors, that is, solutions of
\begin{equation}\label{eq:sec5_harmonic_equation}
D^{\APS}\psi = 0,
\qquad
\psi \in \Dom(D^{\APS}).
\end{equation}

Using the Fourier decomposition and the reflection pairing from the previous sections, we obtain
the decomposition
\begin{equation}\label{eq:sec5_kernel_decomposition}
\ker(D^{\APS})
=
\bigoplus_{\{k,k^\vee\}}
\Bigl(
(\ker(D^{\APS}) \cap \mathcal H_k)
\oplus
(\ker(D^{\APS}) \cap \mathcal H_{k^\vee})
\Bigr)
\oplus
(\ker(D^{\APS}) \cap \mathcal H_{-A_0}),
\end{equation}
where the direct sum runs over all non-self-paired reflection orbits
\begin{equation}\label{eq:sec5_orbit_decomposition}
\{k,k^\vee\},
\qquad
k^\vee = -k-2A_0,
\qquad
k \neq -A_0.
\end{equation}
The final summand \(\ker(D^{\APS}) \cap \mathcal H_{-A_0}\) is the contribution of the unique self-paired Fourier block, when it occurs.

Accordingly, the reflection-sensitive quantity studied in this section is 
\begin{equation}\label{eq:sec5_reflection_trace}
\chi_{\text{ref}}(r)
=
\Tr_{H^{\APS}} \mathcal U_r.
\end{equation}
The same Fourier-mode pairing as above will show that this reflection trace localizes to the
self-paired zero-mode sector.

\subsection{Fixed twist and kernel convention}

Assume throughout this section that
\begin{equation}\label{eq:sec5_fixed_twist}
A=A_0,
\qquad
2A_0\in\Z.
\end{equation}
Then the lifted reflection operator \(\mathcal U_r\) is globally defined. As in~\autoref{sec:rigorousSF}, its restriction to boundary data on \(Y_t\) for \(t\in\{0,T\}\), is denoted by
\begin{equation}\label{eq:sec5_boundary_reflection_operator}
\mathcal R_t:L^2(Y_t;S)\longrightarrow L^2(Y_t;S).
\end{equation}

In the self-paired sector, we use the completed APS convention from
Remark~\ref{rem:global_APS_convention}. When the self-paired mode \(k=-A_0\) occurs, the
corresponding boundary operator vanishes on that block. The auxiliary self-adjoint kernel
completion is assumed to be invariant under the boundary reflection \(\mathcal R_t\). Equivalently,
the completed APS projector satisfies
\begin{equation}\label{eq:sec5_kernel_invariance}
\mathcal R_t\,P_t^{\APS}\,\mathcal R_t^{-1}
=
P_t^{\APS}
\end{equation}
on the self-paired boundary block.

The APS operator used in this section is
\begin{equation}\label{eq:sec5_APS_operator}
D^{\APS}=D_{A_0}\big|_{\Dom(D^{\APS})},
\end{equation}
with domain
\begin{equation}\label{eq:sec5_APS_domain}
\Dom(D^{\APS})
=
\Bigl\{
\psi\in H^1(M;S):
P_0^{\APS}\bigl(\psi|_{Y_0}\bigr)=0,
\quad
P_T^{\APS}\bigl(\psi|_{Y_T}\bigr)=0
\Bigr\}.
\end{equation}

\subsection{Reflection preserves the APS boundary conditions}

\begin{proposition}[Reflection preserves APS boundary conditions]
\label{prop:sec5_reflection_preserves_APS_zero_mode}
Assume \eqref{eq:sec5_fixed_twist} and \eqref{eq:sec5_kernel_invariance}. Then
\begin{equation}\label{eq:sec5_domain_preservation}
\mathcal U_r\bigl(\Dom(D^{\APS})\bigr)=\Dom(D^{\APS}),
\end{equation}
and
\begin{equation}\label{eq:sec5_operator_commutation}
\mathcal U_r\,D^{\APS}=D^{\APS}\,\mathcal U_r.
\end{equation}
\end{proposition}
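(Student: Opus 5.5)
The plan is to reduce both claims to statements about boundary traces and the boundary projectors, and then combine them with Lemma~\ref{lem:reflection_intertwining}.

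\emph{Step 1: commutation with restriction to the boundary.} Since \(r\) preserves \(t\), the lifted reflection \(\mathcal U_r=U_r\,g_{-2A_0}\,r^*\) acts only on the angular variable and the spinor factor. It therefore commutes with the trace map \eqref{eq:trace_map}:
\[
(\mathcal U_r\psi)|_{Y_t}=\mathcal R_t(\psi|_{Y_t}),\qquad t\in\{0,T\}.
\]
The map \(\mathcal U_r\) is a smooth, isometric bundle map covering a diffeomorphism. Hence it preserves \(H^1(M;S)\), and the same holds for its inverse, which by Lemma~\ref{lem:Ur_involution} is \(\mathcal U_r\) itself.

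\emph{Step 2: invariance of the full completed projector.} We need
\[
\mathcal R_t P_t^{\APS}\mathcal R_t^{-1}=P_t^{\APS}
\]
on all of \(L^2(Y_t;S)\). Decompose \(L^2(Y_t;S)\) into the Fourier blocks \(k\). Pairing each non-self-paired orbit \(\{k,k^\vee\}\), the projector \(P_t^{\APS}\) acts on \(\mathcal H_k\oplus\mathcal H_{k^\vee}\) as \(P_{>0}(B^{(k)}_t)\oplus P_{>0}(B^{(k^\vee)}_t)\). The operator \(\mathcal R_t\) maps \(\mathcal H_k\) to \(\mathcal H_{k^\vee}\) and back, by \eqref{eq:reflection_on_mode} together with the involution property. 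Applying Corollary~\ref{cor:sec4_fixed_projector_pairing} in both directions, to \(k\) and to \(k^\vee\), shows that \(\mathcal R_t\) conjugates this block-diagonal projector to itself. On the self-paired block \(k=-A_0\), invariance is exactly the hypothesis \eqref{eq:sec5_kernel_invariance}. Summing over the orthogonal decomposition gives invariance on all of \(L^2(Y_t;S)\). Strictly, the blockwise statements are applied to bounded operators that commute with the Fourier projections intertwined by \(\mathcal R_t\), so the sum causes no convergence issues.

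\emph{Step 3: domain preservation.} Let \(\psi\in\Dom(D^{\APS})\). Then, by Steps 1 and 2,
\[
P_t^{\APS}\big((\mathcal U_r\psi)|_{Y_t}\big)=P_t^{\APS}\mathcal R_t(\psi|_{Y_t})=\mathcal R_t P_t^{\APS}(\psi|_{Y_t})=0 .
\]
Hence \(\mathcal U_r\Dom\subseteq\Dom\). Applying the same inclusion to \(\mathcal U_r^{-1}=\mathcal U_r\) gives equality, which is \eqref{eq:sec5_domain_preservation}.

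\emph{Step 4: commutation with the APS operator.} Lemma~\ref{lem:reflection_intertwining} gives \(\mathcal U_r D=D\,\mathcal U_r\) on smooth sections. Both sides are continuous maps \(H^1\to L^2\), so by density the identity extends to \(H^1(M;S)\), and in particular to \(\Dom(D^{\APS})\). Together with Step 3, this yields \eqref{eq:sec5_operator_commutation} as an identity of unbounded operators, with matching domains.

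The main obstacle is Step 2. It requires care that \(\mathcal R_t\) genuinely maps the \(k\)-block onto the \(k^\vee\)-block, that it is the boundary restriction of \(\mathcal U_r\), and that the self-paired sector uses only the assumed invariance of the kernel completion. The rest is routine.
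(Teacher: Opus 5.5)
Your proof is correct and follows the same route as the paper. The ingredients match: the intertwining on smooth sections from Lemma~\ref{lem:reflection_intertwining}, the block-by-block transport of the APS projectors via Lemma~\ref{lem:sec4_reflection_conjugates_boundary_blocks} and Corollary~\ref{cor:sec4_fixed_projector_pairing} for $m\neq 0$, and the hypothesis \eqref{eq:sec5_kernel_invariance} on the self-paired block; you also make explicit three points the paper leaves implicit (commutation of $\mathcal U_r$ with the trace map, the reverse inclusion via $\mathcal U_r^2=\Id$, and the density extension of the intertwining from smooth sections to $H^1$), which tightens the argument without changing it.
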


\begin{proof}
By Lemma~\ref{lem:reflection_intertwining},
\begin{equation}\label{eq:sec5_intertwining_repeat}
\mathcal U_r\,D\,\mathcal U_r^{-1}=D
\end{equation}
on smooth sections, so it remains only to check preservation of the APS boundary condition.

For each nonzero mode \(m=k+A_0\neq 0\), Proposition~\ref{prop:blockpair} gives the paired label, \(k^\vee=-k-2A_0\), \(k^\vee+A_0=-(k+A_0)=-m\).
Using the boundary formulas from \autoref{sec:setup},
\begin{subequations}
\begin{equation}\label{eq:sec5_boundary_formulas_repeat}
B_0^{(k)}=\frac{m}{f(0)}\sigma_3,
\qquad
B_T^{(k)}=-\frac{m}{f(T)}\sigma_3,
\end{equation}
and
\begin{equation}\label{eq:sec5_boundary_formulas_paired}
B_0^{(k^\vee)}=-\frac{m}{f(0)}\sigma_3,
\qquad
B_T^{(k^\vee)}=\frac{m}{f(T)}\sigma_3,
\end{equation}
\end{subequations} 
Let \(P_{t,k}^{\APS}\) denote the restriction of the global APS projector \(P_t^{\APS}\) to the \(k\)-th Fourier boundary block. 
By Lemma~\ref{lem:sec4_reflection_conjugates_boundary_blocks} and functional calculus, we obtain
\begin{equation} \label{eq:sec5_projector_transport_nonzero}
\mathcal R_t\,P_{t,k}^{\APS}\,\mathcal R_t^{-1}
=
P_{t,k^\vee}^{\APS}
\qquad (m\neq 0).
\end{equation}

Thus, the APS condition is preserved on every nonzero paired block.

It remains to consider the self-paired mode \(k=-A_0\), when it occurs. On that block,
\begin{equation}\label{eq:sec5_zero_mode_boundary_operator}
B_t^{(-A_0)}=0.
\end{equation}
The APS condition on this block is the auxiliary self-adjoint kernel completion fixed in
Remark~\ref{rem:global_APS_convention}. By the reflection-invariance assumption
\eqref{eq:sec5_kernel_invariance}, this completed kernel condition is preserved by
\(\mathcal R_t\). Hence the APS boundary condition is preserved on the zero-mode as well.
Hence \eqref{eq:sec5_domain_preservation} holds. Combining this with \eqref{eq:sec5_intertwining_repeat} yields \eqref{eq:sec5_operator_commutation}.
\end{proof}

\subsection{Reflection trace on the APS harmonic space}

Because the APS boundary problem is elliptic, its kernel is finite-dimensional. By Proposition~\ref{prop:sec5_reflection_preserves_APS_zero_mode}, the harmonic space defined in
\begin{equation}\label{eq:sec5_harmonic_kernel}
H^{\APS} = \ker D^{\APS}
\end{equation}
is preserved by \(\mathcal U_r\). We recall from \eqref{eq:sec5_reflection_trace} that the reflection trace is defined by 
\begin{equation}\label{eq:sec5_reflection_trace_}
\chi_{\text{ref}}(r)
= \Tr_{\ker D^{\APS}} \mathcal U_r.
\end{equation}

\begin{remark}
\label{rem:sec5_no_equivariant_index}
The quantity used here is simply the trace of the reflection operator on the full APS harmonic space. No graded or chiral equivariant-index formalism is needed for the localization statement below.
\end{remark}

\subsection{Localization of the reflection trace to the zero-mode sector}

\begin{theorem}[Paired-mode cancellation for the reflection trace]
\label{thm:sec5_reflection_trace_localization}
Assume \eqref{eq:sec5_fixed_twist} and \eqref{eq:sec5_kernel_invariance}. For every paired orbit \(\{k,k^\vee\}\) with \(k\neq k^\vee\), the contribution of that orbit to the reflection trace \eqref{eq:sec5_reflection_trace} vanishes. Hence
\begin{equation}\label{eq:sec5_zero_mode_localization}
\chi_{\text{ref}}(r)
= \Tr_{\ker D^{\APS}\cap \mathcal H_{-A_0}} \mathcal U_r.
\end{equation}
In particular, every contribution to nonzero reflection is fully supported in the self-paired zero-mode sector \(\mathcal H_{-A_0}\).
\end{theorem}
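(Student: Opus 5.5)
The plan is to use the Fourier decomposition \eqref{eq:sec5_kernel_decomposition} of $\ker D^{\APS}$ together with the fact that $\mathcal U_r$ permutes Fourier blocks. First, by Proposition~\ref{prop:sec5_reflection_preserves_APS_zero_mode}, $\mathcal U_r$ preserves $\ker D^{\APS}$. Next, $D^{\APS}$ commutes with rotations; the APS projectors are functions of $B_t$, which commute with $\rho(\varphi)$, and the kernel completion lives in a single Fourier block. Hence each summand $K_k:=\ker D^{\APS}\cap\mathcal H_k$ is well defined and finite-dimensional, and $\ker D^{\APS}=\bigoplus_k K_k$ as an orthogonal sum. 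By \eqref{eq:reflection_on_mode}, $\mathcal U_r$ maps $\mathcal H_k$ onto $\mathcal H_{k^\vee}$, so it maps $K_k$ into $K_{k^\vee}$. Since $\mathcal U_r^2=\Id$ by Lemma~\ref{lem:Ur_involution}, this map is a bijection $K_k\to K_{k^\vee}$.

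For a non-self-paired orbit $\{k,k^\vee\}$ with $k\neq k^\vee$, choose a basis $e_1,\dots,e_d$ of $K_k$. Then $\{e_j\}\cup\{\mathcal U_r e_j\}$ is a basis of $K_k\oplus K_{k^\vee}$. In this basis $\mathcal U_r$ acts by the block matrix
\[
\begin{pmatrix} 0 & \Id_d\\ \Id_d & 0\end{pmatrix},
\]
using $\mathcal U_r^2=\Id$ again. Its trace is $0$. More invariantly, an operator mapping $K_k\to K_{k^\vee}$ and $K_{k^\vee}\to K_k$ with $K_k\perp K_{k^\vee}$ has zero diagonal blocks, hence zero trace; the involution property is not even needed for this version.

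Summing over the orbit decomposition \eqref{eq:sec5_orbit_decomposition}, the trace over $\ker D^{\APS}$ equals the sum of the traces over the invariant finite-dimensional pieces. All paired pieces contribute $0$, which leaves only the trace over $K_{-A_0}$. That block is $\mathcal U_r$-invariant because $(-A_0)^\vee=-A_0$, so \eqref{eq:sec5_zero_mode_localization} follows. If $-A_0\notin\mathcal K$, the right side is interpreted as $0$.

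The only step needing care is the first one: the claim that $\ker D^{\APS}$ splits as a direct sum of its intersections with the Fourier blocks. This requires the completed APS domain, including the auxiliary kernel condition, to be rotation-invariant, i.e. compatible with the Fourier decomposition. I would argue this directly from Remark~\ref{rem:global_APS_convention}: the kernel completion is imposed only on $\ker B_t$, which is the single self-paired block, so the domain is a closed direct sum of modewise domains \eqref{eq:DomDkAPS}. Finite-dimensionality of the kernel then means only finitely many $K_k$ are nonzero, so the trace sum is finite and no convergence issues arise.
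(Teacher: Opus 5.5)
Your proposal is correct and follows essentially the same route as the paper. You split $\ker D^{\APS}$ into its Fourier pieces and observe that $\mathcal U_r$ swaps $\ker D^{\APS}\cap\mathcal H_k$ and $\ker D^{\APS}\cap\mathcal H_{k^\vee}$, so its matrix on each paired orbit has zero diagonal blocks and trace zero; summing over orbits leaves only the self-paired block $\mathcal H_{-A_0}$. Your extra justification that the completed APS domain is rotation-invariant (because the kernel completion lives in the single block $k=-A_0$), so that the kernel is a finite orthogonal sum of its Fourier pieces, fills in a step the paper simply asserts as ``$D^{\APS}$ respects the Fourier decomposition.''
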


\begin{proof}
Fix a paired orbit \(\{k,k^\vee\}\) with \(k\neq k^\vee\), and set
\begin{equation}\label{eq:sec5_paired_harmonic_space}
K_{k,k^\vee}
=
(\ker D^{\APS}\cap \mathcal H_k)
\oplus
(\ker D^{\APS}\cap \mathcal H_{k^\vee}).
\end{equation}
Because \(D^{\APS}\) respects the Fourier decomposition and commutes with \(\mathcal U_r\), the space \(K_{k,k^\vee}\) is \(\mathcal U_r\)-invariant.

Choose orthonormal bases of \(\ker D^{\APS}\cap\mathcal H_k\) and \(\ker D^{\APS}\cap\mathcal H_{k^\vee}\). Since \(\mathcal U_r\) exchanges the two summands, the matrix of \(\mathcal U_r|_{K_{k,k^\vee}}\) has the block form
\begin{equation}\label{eq:sec5_offdiagonal_matrix}
\mathcal U_r\big|_{K_{k,k^\vee}}
=
\begin{pmatrix}
0 & U_{12}\\
U_{21} & 0
\end{pmatrix}.
\end{equation}
In particular,
\begin{equation}\label{eq:sec5_paired_trace_zero}
\Tr_{K_{k,k^\vee}} \mathcal U_r= 0.
\end{equation}
Summing over all non-self-paired orbits shows that every paired orbit contributes zero to the total reflection trace.

The only remaining Fourier block is the self-paired sector, which can occur only for the mode \(k=-A_0\), equivalently \(m=0\). Therefore
\begin{equation}
\chi_{\text{ref}}(r)
= \Tr_{\ker D^{\APS}\cap \mathcal H_{-A_0}} \mathcal U_r,
\end{equation}
which is exactly \eqref{eq:sec5_zero_mode_localization}.
\end{proof}

\begin{corollary}[Vanishing away from the self-paired sector]
\label{cor:sec5_trace_vanishes_without_zero_mode}
If \(-A_0\notin\mathcal K\), then the self-paired sector is absent and \(\chi_{\text{ref}}(r)=0\).
\end{corollary}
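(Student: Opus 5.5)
This is a direct consequence of Theorem~\ref{thm:sec5_reflection_trace_localization}, so the plan is to feed the hypothesis $-A_0\notin\mathcal K$ into the localization formula \eqref{eq:sec5_zero_mode_localization} and check that its right-hand side is a trace over the zero space.

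First I will translate the hypothesis into a statement about Fourier blocks. The self-paired sector is defined by $m=k+A_0=0$, that is, $k=-A_0$. Equivalently, it is the set of $k\in\mathcal K$ fixed by the involution $k\mapsto k^\vee=-k-2A_0$, since $k=k^\vee$ holds iff $2k=-2A_0$. If $-A_0\notin\mathcal K$, there is no Fourier label with $m=0$. Hence the summand $\mathcal H_{-A_0}$ in \eqref{eq:sec5_fixed_L2_decomp} does not occur; we interpret it as the zero subspace. Every $k\in\mathcal K$ then lies in a two-element orbit $\{k,k^\vee\}$ with $k\neq k^\vee$.

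Next I will use the decomposition \eqref{eq:sec5_kernel_decomposition}. It now reduces to a sum over non-self-paired orbits only:
\[
\ker D^{\APS}=\bigoplus_{\{k,k^\vee\}}K_{k,k^\vee}.
\]
Since $\ker D^{\APS}$ is finite-dimensional, only finitely many of the spaces $K_{k,k^\vee}$ are nonzero, so the trace is a finite sum. Each $K_{k,k^\vee}$ is $\mathcal U_r$-invariant. By \eqref{eq:sec5_paired_trace_zero} its trace vanishes, because $\mathcal U_r$ exchanges $\mathcal H_k$ and $\mathcal H_{k^\vee}$ and so has off-diagonal block form \eqref{eq:sec5_offdiagonal_matrix}. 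Summing gives $\chi_{\text{ref}}(r)=0$. Equivalently, the right-hand side of \eqref{eq:sec5_zero_mode_localization} is a trace over the zero space, which is $0$.

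The only point needing care is the kernel-invariance hypothesis \eqref{eq:sec5_kernel_invariance}, on which Theorem~\ref{thm:sec5_reflection_trace_localization} relies. When $-A_0\notin\mathcal K$, every boundary block has $m\neq0$ and is invertible, so no kernel completion is needed (Remark~\ref{rem:global_APS_convention}). The hypothesis is therefore vacuous, and the APS condition is preserved by Corollary~\ref{cor:sec4_fixed_projector_pairing} alone. I do not anticipate any genuine obstacle beyond this bookkeeping.
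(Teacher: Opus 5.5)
Your proposal is correct and follows the paper's own proof: when $-A_0\notin\mathcal K$ there is no self-paired block, so by Theorem~\ref{thm:sec5_reflection_trace_localization} the reflection trace is a sum of paired-orbit contributions, and each vanishes because of the off-diagonal block form of $\mathcal U_r$. You also observe that the kernel-invariance hypothesis \eqref{eq:sec5_kernel_invariance} becomes vacuous in this case, since every boundary block is invertible; this is correct, and the paper leaves it implicit.
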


\begin{proof}
If \(-A_0\notin\mathcal K\), there is no self-paired sector. Theorem~\ref{thm:sec5_reflection_trace_localization} then shows that all contributions come from paired orbits, and each such contribution vanishes.
\end{proof}

\section{Pointwise \texorpdfstring{$O(2)$}{O(2)}-equivariant spectral flow at fixed holonomy}
\label{sec:O2_sf}

In the original line-twist model, a genuinely moving \(O(2)\)-equivariant family can occur only if the holonomy is fixed and the family is produced by an \(O(2)\)-equivariant bulk perturbation. However, the line-twist reflection symmetry from \autoref{sec:O2lift} is described there by a twisted rotation-reflection relation. To place the current family inside the standard \(RO(O(2))\)-valued equivariant spectral-flow framework, we restrict to the trivial gauge fixed-holonomy class in this section,
\begin{equation}
[A_0]=0,
\qquad\text{equivalently}\qquad
A_0\in\Z,
\end{equation}
and we work in the gauge \(A_0=0\). In that gauge, the APS family becomes genuinely pointwise \(O(2)\)-equivariant. We use \(RO(O(2))\) to denote the real representation ring of \(O(2)\). We study the resulting representation-valued block decomposition and the induced evenness statement for the ordinary spectral flow away from the self-paired sector.

\begin{lemma}[Gauge-triviality is necessary for genuine pointwise $O(2)$-equivariance in the line-twist model]
\label{lem:gauge_triviality_necessary}
Consider the fixed-holonomy line-twist Dirac operator with parameter \(A_0\).
Assume that the rotation action \(\rho(\varphi)\) together with the lifted reflection
\(\mathcal U_r\) defines an honest pointwise \(O(2)\)-action on the twisted Hilbert space in the
usual semidirect-product sense
\begin{equation}
\mathcal U_r\,\rho(\varphi)\,\mathcal U_r^{-1}=\rho(-\varphi),
\qquad
\varphi\in \mathbb R/2\pi\mathbb Z.
\end{equation}
Then
\begin{equation}
[A_0]=0 \text{ in } \mathbb R/\mathbb Z
\qquad\Longleftrightarrow\qquad
A_0\in \mathbb Z.
\end{equation}
Hence, after an integer gauge shift, we get \(A_0=0\).
\end{lemma}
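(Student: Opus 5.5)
The plan is to combine the semidirect relation already computed in Proposition~\ref{prop:semidirect_relation} with the assumed honest $O(2)$ relation, and read off a scalar identity. The lifted reflection $\mathcal U_r=U_r\,g_{-2A_0}\,r^*$ is only defined when $2A_0\in\Z$, so the hypothesis implicitly includes $2A_0\in\Z$. Under that condition, Proposition~\ref{prop:semidirect_relation} gives, for every $\varphi$,
\begin{equation*}
\mathcal U_r\,\rho(\varphi)\,\mathcal U_r^{-1}=e^{-2iA_0\varphi}\,\rho(-\varphi).
\end{equation*}
Setting this equal to the assumed relation $\mathcal U_r\rho(\varphi)\mathcal U_r^{-1}=\rho(-\varphi)$ and composing on the right with the unitary $\rho(-\varphi)^{-1}=\rho(\varphi)$, I obtain the scalar identity $e^{-2iA_0\varphi}\,\Id=\Id$ on the nonzero Hilbert space $L^2(M;S)$, for all $\varphi\in\R/2\pi\Z$.

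Next I would extract the arithmetic condition. Requiring $e^{-2iA_0\varphi}=1$ for all $\varphi$ forces $A_0=0$ in the fixed representative: differentiate at $\varphi=0$, or take $\varphi$ small with $2A_0\varphi\notin 2\pi\Z$. This is exactly the observation of Remark~\ref{rem:central_character}. It is, however, stronger than the claimed conclusion, which concerns the gauge class, and I would address this by stating it gauge-invariantly. Under an integer gauge shift $g_n$, the representative changes $A_0\mapsto A_0-n$, and the operators $\rho(\varphi)$ and $\mathcal U_r$ are conjugated accordingly, with the rotation rep picking up the character $e^{in\varphi}$. So the honest-$O(2)$ condition holding in \emph{some} admissible gauge forces that gauge's representative to be $0$. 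The original $A_0$ therefore differs from $0$ by an integer, i.e.\ $[A_0]=0$, equivalently $A_0\in\Z$ (both directions of the displayed equivalence being the definition of $\R/\Z$). Conversely, if $A_0\in\Z$, the gauge $g_{A_0}$ moves to $A_0=0$, where $g_{-2A_0}=1$ and the relation holds on the nose. That yields the final sentence of the statement.

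The main obstacle is this gauge bookkeeping. One must say precisely in which trivialization $\rho(\varphi)$ is the plain pullback \eqref{eq:rotation_pullback}, and check that conjugating by $g_n$ transforms the pair $(\rho,\mathcal U_r)$ consistently with the formula of Proposition~\ref{prop:semidirect_relation} for the new representative $A_0-n$. I would verify this directly: $g_n\rho(\varphi)g_n^{-1}=e^{-in\varphi}\rho(\varphi)$ and $g_n\mathcal U_r g_n^{-1}$ recovers $U_r\,g_{-2(A_0-n)}\,r^*$. Once that is checked, the argument reduces to the one-line scalar comparison above.
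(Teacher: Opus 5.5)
Your proposal is correct and follows the paper's proof: you combine Proposition~\ref{prop:semidirect_relation} with the assumed relation to force $e^{-2iA_0\varphi}\equiv 1$. This in fact gives $A_0=0$ in the fixed representative, which is sharper than the paper's ``equivalently $[A_0]=0$'', and hence $A_0\in\Z$. The gauge-bookkeeping paragraph is not needed; if you keep it, note that conjugating the pair by $g_n$ preserves the relation with the same scalar $e^{-2iA_0\varphi}$, so ``that gauge's representative must be $0$'' holds only if the new gauge uses the plain pullback $\rho$ rather than the transported rotation $e^{-in\varphi}\rho(\varphi)$.
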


\begin{proof}
By Proposition~\ref{prop:semidirect_relation}, whenever the reflection lift exists, we have \(\mathcal{U}_r\,\rho(\varphi)\,\mathcal{U}_r^{-1} = e^{-2iA_0\varphi}\rho(-\varphi)\).
Thus, if the rotation action together with the lifted reflection defines an honest
pointwise \(O(2)\)-action in the usual semidirect-product sense, then the scalar factor
must be trivial for all \(\varphi\), so \(e^{-2iA_0\varphi}=1\) for all \(\varphi\in \mathbb R/2\pi\mathbb Z\).
Equivalently, \([A_0]=0 \in \mathbb R/\mathbb Z\).
Since \(A_0\) is a real representative of its class modulo \(\mathbb Z\), this is equivalent to \(A_0\in\mathbb Z\).
After an integer gauge transformation, we may therefore choose the representative, \(A_0=0\).
In that gauge, the scalar factor in Proposition~\ref{prop:semidirect_relation} is identically \(1\),
and the usual semidirect relation holds.
\end{proof}

\subsection{A pointwise \texorpdfstring{$O(2)$}{O(2)}-equivariant bulk perturbation family} \label{subsec:O2_sf_family}

Fix a constant holonomy parameter \([A_0]=0\), equivalently \(A_0\in\Z\).
Since the line twist is gauge-trivial in this case, we choose the gauge \(A_0=0\).
The corresponding Dirac operator is 
\begin{equation}\label{eq:DA0_section6}
D_{A_0}
=
i\sigma_1\Bigl(\partial_t+\frac{f'(t)}{2f(t)}\Bigr)
+
i\sigma_2\,\frac{1}{f(t)}(\partial_\theta+iA_0)
\end{equation}
and let $D_{A_0}^{\APS}$ denote the same operator equipped with the APS boundary conditions from \autoref{sec:rigorousSF}.

We choose a smooth real-valued cutoff function
\begin{equation}\label{eq:chi_section6}
q\in C^\infty([0,T]),
\end{equation}
supported in the interior of the cylinder and vanishing in a neighborhood of
both boundary components. We also fix a real coupling parameter%
\footnote{The parameter \(\mu\) is included mainly for the physics-motivated interpretation of the perturbation as a coupling term. Mathematically, since the function \(q\) is a dimensionless smooth function, the overall scale can be absorbed into \(q\) and one may set \(\mu=1\) whenever \(\mu\neq 0\). We keep \(\mu\) explicit in order to track the strength of the perturbation.}
\begin{equation}\label{eq:mu_section6}
\mu\in\R.
\end{equation}
We then define the bounded self-adjoint multiplication operator
\begin{equation}\label{eq:V_section6}
V=\mu\,q(t)\,\Id_{\C^2},
\end{equation}
which may be viewed as a scalar bulk potential, or mass-type perturbation, supported away from the boundary. Finally, we consider the resulting one-parameter family
\begin{equation}\label{eq:Ds_section6}
D_s
=
D_{A_0}+sV,
\qquad
s\in[0,1].
\end{equation}

Since \(q(t)=0\) in a neighborhood of \(t=0\) and \(t=T\), the perturbation $V$ vanishes on a collar of the boundary. Hence \(D_s=D_{A_0}\) near \(Y_0\sqcup Y_T\), so the induced tangential boundary operators and, therefore, the APS boundary projections are unchanged.
Hence, the APS domain is independent of \(s\):
\begin{equation}\label{eq:APSdomain_section6}
\Dom(D_s^{\APS})
=
\Dom(D_{A_0}^{\APS})
\qquad
\text{for all }s\in[0,1].
\end{equation}
Accordingly, we write
\begin{equation}\label{eq:APS_operator_section6}
D_s^{\APS}
=
D_{A_0}^{\APS}+sV
\end{equation}
as an unbounded operator on \(L^2(M;S)\) with the common domain
\eqref{eq:APSdomain_section6}.

\begin{proposition}[Well-posed APS family]
\label{prop:wellposed_section6}
For every \(s\in[0,1]\), the operator \(D_s^{\APS}\) is self-adjoint and Fredholm on the
common domain \eqref{eq:APSdomain_section6}. Moreover, the path
\begin{equation}\label{eq:path_selfadjoint_fredholm_section6}
[0,1]\ni s\longmapsto D_s^{\APS}
\end{equation}
is norm-resolvent continuous.
\end{proposition}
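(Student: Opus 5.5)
The plan is to reduce everything to the unperturbed operator \(D_{A_0}^{\APS}\) and then use bounded perturbation theory. \(V=\mu q(t)\Id_{\C^2}\) is multiplication by a bounded real smooth function, so it is bounded and self-adjoint on \(L^2(M;S)\), with \(\|V\|\le|\mu|\,\|q\|_\infty\). By \eqref{eq:APSdomain_section6} the domain does not depend on \(s\). Self-adjointness and the Fredholm property of \(D_s^{\APS}\) therefore follow once they hold at \(s=0\). Self-adjointness comes from the Kato--Rellich theorem with relative bound \(0\). The Fredholm property comes from stability of Fredholmness under relatively compact perturbations: \(V(D_{A_0}^{\APS}+i)^{-1}\) is compact because the resolvent maps \(L^2\) into \(H^1(M;S)\), which embeds compactly into \(L^2\) since \(M\) is compact.

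The main step, and the main obstacle, is self-adjointness and Fredholmness of \(D_{A_0}^{\APS}\) itself with the completed APS projector of Remark~\ref{rem:global_APS_convention}. I would not rely only on the general theory cited there, since that remark assumes the property rather than proving it. Instead I would verify it modewise, in three steps.

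\begin{enumerate}[label=(\alph*)]
\item \emph{Symmetry of each block.} For \(m\neq0\), integrate by parts on \([0,T]\) with weight \(f\,dt\). This shows \(D^{(k)}\) is symmetric on \eqref{eq:DomDkAPS}: the boundary term is \(\bigl[f(\bar u_1 v_2+\bar v_1 u_2)\bigr]_0^T\), up to a factor \(i\), and it vanishes because at each endpoint exactly one of \(u,v\) vanishes, consistently for both functions.
\item \emph{Self-adjointness of each block.} Computing the adjoint shows its domain imposes exactly the complementary conditions, so each block is self-adjoint. These are regular Sturm--Liouville-type first-order systems, so each block has compact resolvent.
\item \emph{The self-paired block.} For \(m=0\), the kernel completion is by assumption a self-adjoint (Lagrangian) condition for the boundary form above, so the same argument applies.
\end{enumerate}

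I would then assemble the orthogonal sum over \(k\). To keep \(\ker D_{A_0}^{\APS}\) finite-dimensional and the range closed, I need a uniform gap. For \(|m|\) large, the energy identity \(\|A^\mp w\|^2\ge c\,m^2\|w\|^2\), obtained by integrating by parts with the APS conditions and using that \(f\) is bounded above and below on \([0,T]\), shows the block spectrum stays at distance \(\gtrsim|m|/\max f\) from \(0\). Only finitely many blocks then contribute kernel, and the direct sum is Fredholm. In fact the resolvent is compact, with norm bounds tending to zero in \(k\).

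Norm-resolvent continuity then follows from the second resolvent identity on the common domain:
\[
(D_s^{\APS}+i)^{-1}-(D_{s'}^{\APS}+i)^{-1}=(s'-s)\,(D_s^{\APS}+i)^{-1}V(D_{s'}^{\APS}+i)^{-1}.
\]
Self-adjointness gives \(\|(D_s^{\APS}+i)^{-1}\|\le1\), so the norm of this difference is at most \(|s-s'|\,|\mu|\,\|q\|_\infty\). This is in fact Lipschitz, which is stronger than the continuity required.
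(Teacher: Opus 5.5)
\paragraph{Overview.} Your perturbative skeleton is the paper's proof. The paper also takes self-adjointness and Fredholmness of $D_{A_0}^{\APS}$ as input (it just cites APS theory), treats $sV$ as a bounded self-adjoint perturbation on the fixed domain, and gets norm-resolvent continuity from the resolvent identity. Your relative-compactness remark genuinely improves on the paper's argument, because boundedness of $V$ alone does not preserve Fredholmness.

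\paragraph{Where it fails.} The gap is in the step you add beyond the paper: the uniform spectral gap for the blocks on the domain \eqref{eq:DomDkAPS}. Integrating by parts with $u=f^{-1/2}U$, $v=f^{-1/2}V$, $q=m/f$ gives
\[
\|D^{(k)}\psi\|^2=\int_0^T\Bigl(|U'|^2+|V'|^2+(q^2+q')|U|^2+(q^2-q')|V|^2\Bigr)\,dt-\bigl[q|U|^2\bigr]_0^T+\bigl[q|V|^2\bigr]_0^T .
\]
For $m>0$ with $u(0)=0$, $v(T)=0$, the boundary terms are $-q(T)|U(T)|^2-q(0)|V(0)|^2\le 0$, and the case $m<0$ is analogous. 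The sign is unfavourable, so $\|A^{\mp}w\|^2\ge c\,m^2\|w\|^2$ is false. For instance, take $f\equiv 1$. The eigenvalue condition is then $\tanh(\kappa T)=\kappa/m$ with $\kappa=\sqrt{m^2-\lambda^2}$. For $mT>1$ it has a root $\kappa\in(0,m)$, which gives eigenvalues $\pm\lambda_m$ with $\lambda_m\sim 2m\,e^{-mT}\to 0$. So with \eqref{eq:APS_mpos}--\eqref{eq:APS_mneg} taken literally:
\begin{itemize}
\item $0$ lies in the essential spectrum, so the operator is not Fredholm;
\item the operator with domain in $H^1$ is not even closed;
\item your claim that the resolvent maps into $H^1$ fails too.
\end{itemize}

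\paragraph{The fix.} The favourable sign holds exactly for the reversed assignment: $v(0)=0$, $u(T)=0$ if $m>0$, and $u(0)=0$, $v(T)=0$ if $m<0$. With that sign, your gap $\gtrsim|m|/\max f$, the compact resolvent and the whole modewise argument go through. This reversed assignment is the genuine APS condition. By \eqref{eq:D_as_TA},
\[
D=i\sigma_1\bigl(\partial_t+\tfrac{f'}{2f}-\tfrac1f\sigma_3T_A\bigr).
\]
In the standard form $D=c(N)(\partial_N+\mathcal B)$ with condition $P_{\ge 0}(\mathcal B)\psi|_{\partial M}=0$, the tangential operator is therefore $-B_0$ at $Y_0$ (up to the scalar term $f'/2f$), and likewise $-B_T$ at $Y_T$.

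Your proof is valid once this sign is corrected, i.e.\ for the true APS realization that the paper's citation of APS theory implicitly refers to. State that correction explicitly rather than deriving the gap from \eqref{eq:DomDkAPS} as written.

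A minor point: in (b) the adjoint imposes the \emph{same} conditions as the domain, not complementary ones. That coincidence is exactly what makes each block self-adjoint.
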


\begin{proof}
The operator \(D_{A_0}^{\APS}\) is self-adjoint Fredholm by the APS theory for Dirac-type
operators on compact manifolds with boundary \cite{APS1,BoossWojciechowski1993}. Since \(V\) is bounded and self-adjoint in
\(L^2(M;S)\), each \(D_s^{\APS}\) is a bounded self-adjoint perturbation of
\(D_{A_0}^{\APS}\) in the same domain, hence is self-adjoint and Fredholm \cite{Phillips1996,Waterstraat2017}.

Finally,
\begin{equation}\label{eq:linear_dependence_section6}
D_s^{\APS}-D_{s'}^{\APS}=(s-s')V,
\end{equation}
so the resolvent identity implies norm-resolvent continuity.
\end{proof}

\begin{lemma}[Pointwise \texorpdfstring{$O(2)$}{O(2)}-equivariance]
\label{lem:O2_equiv_family}
Assume $A_0 \in \Z$, then the family
\(\{D_s^{\APS}\}_{s\in[0,1]}\) is pointwise \(O(2)\)-equivariant, that is, each
\(D_s^{\APS}\) commutes with the \(O(2)\)-action on \(L^2(M;S)\).
\end{lemma}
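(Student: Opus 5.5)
The plan is to define the \(O(2)\)-action explicitly in the gauge \(A_0=0\), and then check commutation separately for the unperturbed APS operator and for the perturbation \(sV\). By Lemma~\ref{lem:gauge_triviality_necessary}, after an integer gauge shift we may take \(A_0=0\). The gauge factor \(g_{-2A_0}\) is then trivial, so \(\mathcal U_r=U_r\,r^*\).

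\textbf{Step 1: the action is an honest \(O(2)\)-action.} Proposition~\ref{prop:semidirect_relation} gives \(\mathcal U_r\rho(\varphi)\mathcal U_r^{-1}=\rho(-\varphi)\), and Lemma~\ref{lem:Ur_involution} gives \(\mathcal U_r^2=\Id\). Hence \(\varphi\mapsto\rho(\varphi)\) together with \(r\mapsto\mathcal U_r\) defines a unitary representation of \(O(2)=S^1\rtimes\Z_2\) on \(L^2(M;S)\). Strong continuity in \(\varphi\) holds because translation on \(L^2(S^1)\) is strongly continuous.

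\textbf{Step 2: the unperturbed APS operator commutes with the action.} For the reflection, Proposition~\ref{prop:sec5_reflection_preserves_APS_zero_mode} gives \(\mathcal U_r\Dom(D^{\APS})=\Dom(D^{\APS})\) and \(\mathcal U_r D^{\APS}=D^{\APS}\mathcal U_r\). This uses the standing assumption \eqref{eq:sec5_kernel_invariance} that the kernel completion is \(\mathcal R_t\)-invariant. For rotations, \eqref{eq:D_rotation_commute} gives \(\rho(\varphi)D\rho(\varphi)^{-1}=D\). To see that \(\rho(\varphi)\) preserves the APS domain:
\begin{itemize}
\item on a Fourier mode \(k\), \(\rho(\varphi)\) acts by the scalar \(e^{ik\varphi}\);
\item the APS projector is diagonal in \(k\), being a functional calculus of \(B_t\), which commutes with \(\rho(\varphi)\);
\item so \(\rho(\varphi)\) commutes with \(P^{\APS}_t\) on every nonzero mode.
\end{itemize}

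\textbf{Step 3: the perturbation commutes with the action.} Here \(V=\mu q(t)\Id_{\C^2}\) is independent of \(\theta\), so it commutes with \(\rho(\varphi)\) and with \(r^*\). It is a scalar multiple of the identity on the spinor factor, so it commutes with \(U_r=\sigma_1\). Hence \(V\) commutes with \(\mathcal U_r\). Combining this with Step 2 and the common domain \eqref{eq:APSdomain_section6}, \(D_s^{\APS}=D_{A_0}^{\APS}+sV\) commutes with every group element for each \(s\in[0,1]\).

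\textbf{Main obstacle.} The expected difficulty is the self-paired sector \(m=0\). When \(A_0=0\) and \(\mathcal K=\Z\), this sector is the \(k=0\) block. There the APS condition is the auxiliary kernel completion, and it must be invariant under both \(\mathcal R_t\) and rotations.

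\begin{itemize}
\item \emph{Rotations.} These act trivially on the \(k=0\) block, since they multiply by \(e^{i0\varphi}=1\). So any completion is rotation-invariant.
\item \emph{Reflection.} Invariance under \(\mathcal R_t\) is exactly assumption \eqref{eq:sec5_kernel_invariance}, which the paper imposes throughout the reflection-symmetric parts.
\end{itemize}

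I would state explicitly that this assumption is in force. Without it, the lemma is false for a non-invariant completion such as \(u(0)=0\), since \(\sigma_1\) swaps \(u\) and \(v\).
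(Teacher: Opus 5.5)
Your argument is the paper's proof with the details filled in. You gauge to $A_0=0$ so that Proposition~\ref{prop:semidirect_relation} and Lemma~\ref{lem:Ur_involution} give an honest $O(2)$-action, observe that $V=\mu q(t)\Id_{\C^2}$ commutes with $\rho(\varphi)$ and $\mathcal U_r$, and use invariance of the common APS domain; your Step~2 and the explicit appeal to \eqref{eq:sec5_kernel_invariance} simply unpack the paper's phrase ``the APS domain is fixed and reflection-compatible.''

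One caution applies equally to the paper. On the self-paired block, $\mathcal R_t$ acts by $\sigma_1$, whose eigenlines satisfy $\langle\sigma_1 x,x\rangle=\pm|x|^2\neq0$, so they are never Lagrangian for the Green boundary form $\langle\sigma_1 x,y\rangle$. Hence no self-adjoint completion imposed circle by circle (not only $u(0)=0$) is $\mathcal R_t$-invariant. For $\mathcal K=\Z$, the hypothesis you invoke can be met only by a completion coupling $Y_0$ to $Y_T$, e.g.\ $f(T)^{1/2}\psi(T)=e^{i\alpha}f(0)^{1/2}\psi(0)$ on that block.
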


\begin{proof}
By the standing assumption \([A_0]=0\), the fixed holonomy class is gauge-trivial, and we work in the gauge \(A_0=0\). In this gauge, the line twist is trivial, so the rotation action \(\rho(\varphi)\) and the lifted reflection operator \(\mathcal U_r\) satisfy the ordinary \(O(2)\) group law on \(L^2(M;S)\). Hence the operator \(D_{A_0}\) is genuinely \(O(2)\)-equivariant.

The perturbation \(V=\mu\,q(t)\Id_{\C^2}\) depends only on \(t\), is scalar on the spinor fiber, and therefore commutes both with the circle action and with the lifted reflection \(\mathcal U_r\). Since the APS domain is fixed and reflection-compatible, the APS operator is \(O(2)\)-equivariant as well.
\end{proof}

For the remainder of this section, we use the following facts in the gauge-trivial fixed-holonomy case \(A_0=0\): the APS domain is independent of \(s\), each \(D_s^{\APS}\) is self-adjoint Fredholm on that common domain, the path \(s\mapsto D_s^{\APS}\) is norm-resolvent continuous, and the family is pointwise \(O(2)\)-equivariant. These follow from Proposition~\ref{prop:wellposed_section6} and Lemma~\ref{lem:O2_equiv_family}. Whenever signed crossing numbers are mentioned below, we additionally assume isolated regular crossings on the relevant restricted block paths; this extra assumption is needed only for local crossing-number interpretations.

\subsection{Mode decomposition and reflection pairing}
\label{subsec:O2_sf_modes}

As in \autoref{sec:setup}, the family decomposes into Fourier modes indexed by \(k \in \mathcal K\). 
Let \(\mathcal H_k \subset L^2(M;S)\) denote the corresponding Fourier subspace.
Then the restriction of \(D_s\) to the \(k\)-th mode is
\begin{equation}\label{eq:mode_operator_section6}
D_{s,k}
=
D_{A_0,k}
+
s\,\mu\,q(t)\,\Id_{\mathbb{C}^2}
=
i
\begin{pmatrix}
0 & A_k^+ \\
A_k^- & 0
\end{pmatrix}
+
s\,\mu\,q(t)\,\Id_{\mathbb{C}^2}, \quad 
A_k^\pm
=
\partial_t + \frac{f'(t)}{2f(t)} \pm \frac{m_k}{f(t)}.
\end{equation}
We write \(D_{s,k}^{\APS}\) for the operator \(D_{s,k}\) equipped with the APS boundary condition induced from the common
APS domain \(\Dom(D_s^{\APS})\).


\begin{remark}[Orbit labels, representation labels, and representation-ring classes]
\label{rem:orbit_rep_notation}
In the equivariant block decomposition we distinguish three levels of notation.

First, \(k\in\mathcal K_+\) denotes a chosen positive Fourier representative.

Second, \([k]\) denotes the corresponding non-self-paired reflection orbit (or orbit block)
in the line-twist model.

Third, \(\rho_k\) denotes the real two-dimensional irreducible \(O(2)\)-representation
carried by that nonzero block, and \([\rho_k]\in RO(O(2))\) denotes its class in the real
representation ring.

Thus, orbit brackets \([k]\) label blocks, whereas representation-ring brackets
\([\rho_k]\) label classes in \(RO(O(2))\).
\end{remark}

\begin{lemma}[Paired mode families]
\label{lem:paired_mode_families}
For every \(s\in[0,1]\) and every \(k\in\mathcal K\), the APS mode operators
\(D_{s,k}^{\APS}\) and \(D_{s,k^\vee}^{\APS}\) are unitarily equivalent.
\end{lemma}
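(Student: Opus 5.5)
The plan is to conjugate each APS mode operator $D_{s,k}^{\APS}$ by the lifted reflection $\mathcal U_r$, exactly as in Proposition~\ref{lem:pairunitary}, and to check that the perturbation $sV$ is carried along unchanged. We work in the gauge $A_0=0$, so $\mathcal U_r=U_r\,r^*$ with $U_r=\sigma_1$, and $k^\vee=-k$. By \eqref{eq:reflection_on_mode}, $\mathcal U_r$ sends $e^{ik\theta}(u,v)^{T}$ to $e^{-ik\theta}\sigma_1(u,v)^{T}$. So its restriction $\mathcal U_{r,k}:=\mathcal U_r|_{\mathcal H_k}$ is a unitary map $\mathcal H_k\to\mathcal H_{k^\vee}$. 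In radial coordinates it is just the constant matrix $\sigma_1$ acting on $L^2([0,T],f\,dt;\C^2)$, and it is unitary because the weight $f\,dt$ is untouched.

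\textbf{Step 1 (unperturbed part).} By Lemma~\ref{lem:reflection_intertwining}, $\mathcal U_r D_{A_0}\mathcal U_r^{-1}=D_{A_0}$. Since $\mathcal U_r$ maps $\mathcal H_k$ onto $\mathcal H_{k^\vee}$, restricting to modes gives $\mathcal U_{r,k}D_{A_0,k}\mathcal U_{r,k}^{-1}=D_{A_0,k^\vee}$. One can also check this directly from \eqref{eq:mode_operator_section6}: conjugation by $\sigma_1$ swaps the two off-diagonal entries $A_k^+$ and $A_k^-$, and this is the same as replacing $m_k$ by $-m_k=m_{k^\vee}$.

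\textbf{Step 2 (perturbation).} The term $s\mu q(t)\Id_{\C^2}$ is scalar and depends only on $t$. It therefore commutes with $\sigma_1$ and with $r^*$, so it is carried to the same term on $\mathcal H_{k^\vee}$. This is the modewise form of the commutation already used in Lemma~\ref{lem:O2_equiv_family}.

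\textbf{Step 3 (domains).} This is the only step that needs care. We must show $\mathcal U_{r,k}(\Dom D_{s,k}^{\APS})=\Dom D_{s,k^\vee}^{\APS}$. By \eqref{eq:APSdomain_section6} the domain does not depend on $s$, so it is enough to treat the fixed APS domain.

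For $m_k\neq0$, the conditions in \eqref{eq:DomDkAPS} are $u(0)=0,\ v(T)=0$ when $m>0$, and $v(0)=0,\ u(T)=0$ when $m<0$. Applying $\sigma_1$ swaps $u$ and $v$, so it turns the first pair of conditions into the second. Since $m_{k^\vee}=-m_k$, this is exactly the domain for the paired block. Equivalently, one can invoke Corollary~\ref{cor:sec4_fixed_projector_pairing}.

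For the self-paired mode $k=k^\vee$ (here $k=0$, $m=0$), there is nothing to pair: $\mathcal U_{r,0}$ maps the block to itself. Preservation of the domain then follows from the reflection-invariance of the kernel completion \eqref{eq:sec5_kernel_invariance}, as in Proposition~\ref{prop:sec5_reflection_preserves_APS_zero_mode}.

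Combining Steps 1–3 gives $\mathcal U_{r,k}D_{s,k}^{\APS}\mathcal U_{r,k}^{-1}=D_{s,k^\vee}^{\APS}$ as unbounded operators with their full domains. This is a unitary equivalence, holding for every $s\in[0,1]$ and every $k\in\mathcal K$.
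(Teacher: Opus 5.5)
Your proposal is correct and follows essentially the same route as the paper. You restrict the lifted reflection $\mathcal U_r$ to a unitary $\mathcal H_k\to\mathcal H_{k^\vee}$ and note that the scalar, $t$-dependent perturbation $sV$ is carried along unchanged. The one difference is that you check the transport of the APS domain explicitly, via the $\sigma_1$ swap of boundary conditions for $m\neq 0$ and the reflection-invariance of the kernel completion at $m=0$, whereas the paper absorbs this step into Lemma~\ref{lem:O2_equiv_family}.
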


\begin{proof}
By Lemma~\ref{lem:O2_equiv_family}, the full operator \(D_s^{\APS}\) commutes with the
lifted reflection \(\mathcal U_r\). By Proposition~\ref{prop:blockpair},
\(\mathcal U_r\) maps the \(k\)-th Fourier sector onto the \(k^\vee\)-th sector. Since the
perturbation \(V\) is scalar and depends only on \(t\), it is preserved under this
identification. Hence, the restriction of \(\mathcal U_r\) yields a unitary operator
\begin{equation}\label{eq:modewise_unitary_section6}
\mathcal U_{r,k}:\mathcal H_k\longrightarrow \mathcal H_{k^\vee}
\end{equation}
such that
\begin{equation}\label{eq:modewise_intertwining_section6}
\mathcal U_{r,k}\,D_{s,k}^{\APS}\,\mathcal U_{r,k}^{-1}
=
D_{s,k^\vee}^{\APS}.
\end{equation}
\end{proof}

For each non-self-paired orbit \([k]=\{k,k^\vee\}\), \(k\neq k^\vee\), set \(\mathcal H_{[k]}=\mathcal H_k\oplus \mathcal H_{k^\vee}\),
and
\begin{equation}\label{eq:orbit_block_operator_section6}
D^{\text{APS}}_{s,[k]}=D^{\text{APS}}_{s,k}\oplus D^{\text{APS}}_{s,k^\vee}.
\end{equation}
The non-self-paired orbit block $\mathcal H_{[k]}$ is an infinite-dimensional
$O(2)$-invariant Hilbert block built from the real two-dimensional angular
irreducible type $\rho_k$; equivalently, any regular crossing eigenspace in
this block carries the representation $\rho_k$.

\begin{proposition}[Orbit blocks, paired spectral flow, and irreducible type]
\label{prop:orbit_block_irreducible_section6}
Fix a non-self-paired reflection orbit \([k]=\{k,k^\vee\}\), \(k\neq k^\vee\).
Then the following holds.

\begin{enumerate}[label=\textup{(\roman*)}]
\item The subspace \(\mathcal H_{[k]}=\mathcal H_k\oplus\mathcal H_{k^\vee}\) is
\(O(2)\)-invariant, and the restricted path \(\{D_{s,[k]}^{\APS}\}_{s\in[0,1]}\) is pointwise \(O(2)\)-equivariant.

\item The non-self-paired orbit block $\mathcal H_{[k]}$ is an infinite-dimensional
$O(2)$-invariant Hilbert block built from the real two-dimensional angular
irreducible type $\rho_k$. Equivalently, any regular crossing eigenspace in
this block carries the representation $\rho_k$. We denote its class by \([\rho_{k}]\in RO(O(2))\).

\item The paired mode paths have the same ordinary spectral flow, \(\operatorname{sf}\!\bigl(D_{s,k}^{\APS}\bigr)
=
\operatorname{sf}\!\bigl(D_{s,k^\vee}^{\APS}\bigr)\).
Hence, if we set
\begin{equation}\label{eq:N_orbit_redefined_section6}
N_{[k]}
=
\operatorname{sf}\!\bigl(D_{s,k}^{\APS}\bigr)
=
\operatorname{sf}\!\bigl(D_{s,k^\vee}^{\APS}\bigr),
\end{equation}
then
\begin{equation}\label{eq:orbit_block_sf_twice_section6}
\operatorname{sf}\!\bigl(D_{s,[k]}^{\APS}\bigr)
=
2\,N_{[k]}.
\end{equation}
\end{enumerate}
\end{proposition}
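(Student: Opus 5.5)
We prove the three items in order. Each one reduces to facts already established in \autoref{subsec:O2_sf_family} and \autoref{subsec:O2_sf_modes}, in the gauge \(A_0=0\), where \(k^\vee=-k\) and \(m_k=k\).

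For (i), we check invariance of \(\mathcal H_{[k]}\) under the two generators. Rotations act on \(\mathcal H_k\) by the scalar \(e^{ik\varphi}\), so by \eqref{eq:rotation_pullback} they preserve each Fourier sector separately. The lifted reflection \(\mathcal U_r\) maps \(\mathcal H_k\) onto \(\mathcal H_{k^\vee}\) by \eqref{eq:reflection_on_mode}. Since \(\mathcal U_r^2=\Id\) (Lemma~\ref{lem:Ur_involution}), it also maps \(\mathcal H_{k^\vee}\) back onto \(\mathcal H_k\). Hence \(\mathcal H_{[k]}\) is \(O(2)\)-invariant, and so is its orthogonal complement, because the action is unitary.

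Lemma~\ref{lem:O2_equiv_family} says \(D_s^{\APS}\) commutes with the \(O(2)\)-action, and \(D_s^{\APS}\) preserves the Fourier decomposition. Its restriction to \(\mathcal H_{[k]}\), with the induced APS domain, is therefore the operator \(D^{\APS}_{s,[k]}\) of \eqref{eq:orbit_block_operator_section6}, and it commutes with the restricted action. This gives pointwise equivariance of the restricted path.

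For (ii), we identify the representation type on the angular factor. Write \(\mathcal H_{[k]}\cong \operatorname{span}\{e^{ik\theta},e^{-ik\theta}\}\otimes L^2([0,T],f\,dt;\C^2)\). The group acts as follows:
\begin{itemize}
\item rotations act by \(\operatorname{diag}(e^{ik\varphi},e^{-ik\varphi})\);
\item the reflection swaps the two exponentials, tensored with the fixed matrix \(\sigma_1\).
\end{itemize}
For \(k\neq 0\), the angular factor is, after realification, the standard two-dimensional irreducible \(\rho_{|k|}\) of \(O(2)\). The \(\sigma_1\) factor does not change the isomorphism type of a regular crossing eigenspace. Such an eigenspace is spanned by \(\psi\in\mathcal H_k\) and \(\mathcal U_r\psi\in\mathcal H_{k^\vee}\). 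The span \(\{\psi,\mathcal U_r\psi\}\) is invariant. Rotation weights \(\pm k\) with \(k\neq0\) force the representation on this span to be irreducible and isomorphic to \(\rho_k\).

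The main obstacle is the bookkeeping in (ii). We must make sure \(\rho_k\) is well defined up to isomorphism independent of the choice of \(\psi\), and we must handle the real/complex passage. This can be done by noting that \(\rho_k\) is determined by its character, namely \(2\cos(k\varphi)\) on rotations and \(0\) on reflections, since the reflection exchanges the two weight spaces.

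For (iii), we use Lemma~\ref{lem:paired_mode_families}. It gives a fixed unitary \(\mathcal U_{r,k}\), independent of \(s\), with \(\mathcal U_{r,k}D^{\APS}_{s,k}\mathcal U_{r,k}^{-1}=D^{\APS}_{s,k^\vee}\) for every \(s\). Spectral flow is invariant under conjugation by a constant unitary, because the spectra coincide for every \(s\) with multiplicities. Hence \(\operatorname{sf}(D_{s,k}^{\APS})=\operatorname{sf}(D_{s,k^\vee}^{\APS})\).

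Finally, spectral flow is additive under direct sums of norm-resolvent continuous self-adjoint Fredholm paths. This follows from Proposition~\ref{prop:wellposed_section6} restricted to the invariant blocks, using the Phillips definition, where spectral projections split blockwise. Additivity gives \(\operatorname{sf}(D^{\APS}_{s,[k]})=2N_{[k]}\).
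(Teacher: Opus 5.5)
Your proposal is correct and follows the paper's proof essentially step for step: rotations preserve each $\mathcal H_k$ while $\mathcal U_r$ swaps $\mathcal H_k$ and $\mathcal H_{k^\vee}$, the $s$-independent unitary of Lemma~\ref{lem:paired_mode_families} equates the two mode spectral flows, and additivity over the direct sum gives $2N_{[k]}$; your character computation in (ii) is more explicit than the paper, which only asserts the irreducible type. One implicit step needs a justification: a crossing eigenspace is exactly $\operatorname{span}\{\psi,\mathcal U_r\psi\}$ because each mode eigenvalue is simple (each mode block is a first-order $2\times 2$ ODE with one boundary condition at $t=0$), and the complex span is the complexification $\rho_k\otimes\C$ rather than a realification, which is harmless since all irreducibles of $O(2)$ are of real type.
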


\begin{proof}
Because the family is pointwise \(O(2)\)-equivariant, rotations preserve each Fourier sector
and the lifted reflection exchanges \(\mathcal H_k\) with \(\mathcal H_{k^\vee}\). Hence
\(\mathcal H_{[k]}\) is \(O(2)\)-invariant, and the restricted path on
\(\mathcal H_{[k]}\) is pointwise \(O(2)\)-equivariant.

Since \(k\neq k^\vee\), the orbit consists of two distinct angular frequencies which are
exchanged by reflection. The full space \(\mathcal H_{[k]}=\mathcal H_k\oplus\mathcal H_{k^\vee}\)
is therefore an infinite-dimensional \(O(2)\)-invariant Hilbert block, not literally a
two-dimensional representation. What is two-dimensional is the associated angular
\(O(2)\)-type, equivalently, the regular crossing eigenspace inside this block. We denote the
corresponding class in \(RO(O(2))\) by \([\rho_{k}]\).

By Lemma~\ref{lem:paired_mode_families}, the paths
\(\{D_{s,k}^{\APS}\}_{s\in[0,1]}\) and \(\{D_{s,k^\vee}^{\APS}\}_{s\in[0,1]}\) are
unitarily equivalent, so their ordinary spectral flows are equal. Since \(D_{s,[k]}^{\APS} = D_{s,k}^{\APS}\oplus D_{s,k^\vee}^{\APS}\), additivity of ordinary spectral flow over direct sums gives
\begin{equation}\label{eq:orbit_block_sf_additivity_section6}
\operatorname{sf}\!\bigl(D_{s,[k]}^{\APS}\bigr)
=
\operatorname{sf}\!\bigl(D_{s,k}^{\APS}\bigr)
+
\operatorname{sf}\!\bigl(D_{s,k^\vee}^{\APS}\bigr)
=
2\,N_{[k]}.
\end{equation}
\end{proof}

The only possible self-paired mode is \(k=-A_0\), equivalently \(m=0\). If \(k_{\text{self-paired}}\notin\mathcal K\), then no self-paired sector occurs.

\subsection{Equivariant spectral flow and the self-paired sector}
\label{subsec:O2_sf_formula}
\paragraph{Convention for equivariant spectral flow.}
Throughout this subsection, 
\begin{equation}
\operatorname{sf}_{O(2)}(D_s^{\APS}) \in RO(O(2))
\end{equation}
denotes the $RO(O(2))$-valued equivariant spectral flow of the pointwise $O(2)$-equivariant
self-adjoint Fredholm path $\{D_s^{\APS}\}_{s\in[0,1]}$.
For each non-self-paired orbit $[k]=\{k,k^\vee\}$, we write
\begin{equation}
N_{[k]}
= \operatorname{sf}(D_{s,k}^{\APS})
= \operatorname{sf}(D_{s,k^\vee}^{\APS}),
\end{equation}
as in Proposition~\ref{prop:orbit_block_irreducible_section6}.
Equivalently, $N_{[k]}$ is one half of the ordinary spectral flow of the full orbit block, \(\operatorname{sf}(D_{s,[k]}^{\APS}) = 2N_{[k]}\).
This is the normalization used in this section. If the self-paired mode \(k_{\text{self-paired}}=-A_0\) belongs to \(\mathcal K\), we denote by \(D_{s,\text{self-paired}}^{\APS}\)
the APS family on that self-paired sector and write
\begin{equation}\label{eq:wall_equivariant_sf_section6}
\operatorname{sf}_{O(2)}^{\text{self-paired}}(D_s^{\APS})
\end{equation}
for its equivariant spectral-flow contribution. If \(k_{\text{self-paired}}\notin\mathcal K\), or if the self-paired sector stays invertible for all \(s\), we set
\begin{equation}\label{eq:wall_term_zero_section6}
\operatorname{sf}_{O(2)}^{\text{self-paired}}(D_s^{\APS})=0.
\end{equation}

\begin{theorem}[Equivariant block decomposition and evenness off self-paired sector]
\label{thm:evenness_off_wall_section6}

Assume the gauge-trivial fixed-holonomy APS family
\(\{D_s^{\APS}\}_{s\in[0,1]}\) from \eqref{eq:Ds_section6}, with \([A_0]=0\), equivalently \(A_0\in\Z\), and with the gauge choice \(A_0=0\). Then
\begin{equation}\label{eq:ROO2_sf_section6}
\operatorname{sf}_{O(2)}(D_s^{\APS})
=
\sum_{\substack{[k]\\ k\neq k^\vee}}
N_{[k]}\,[\rho_{k}]
+
\operatorname{sf}_{O(2)}^{\text{self-paired}}(D_s^{\APS}),
\end{equation}
where the sum runs over all non-self-paired reflection orbits \([k]=\{k,k^\vee\}\). After
applying the dimension map
\begin{equation}\label{eq:dimension_map_section6}
\dim:RO(O(2))\longrightarrow \Z,
\end{equation}
one obtains
\begin{equation}\label{eq:integer_sf_section6}
\operatorname{sf}(D_s^{\APS})
=
2\sum_{\substack{[k]\\ k\neq k^\vee}} N_{[k]}
+
\dim\!\bigl(\operatorname{sf}_{O(2)}^{\text{self-paired}}(D_s^{\APS})\bigr).
\end{equation}
In particular, the theorem separates two levels of information:
the \(RO(O(2))\)-valued class records \emph{which} real \(O(2)\)-representation crosses,
whereas the ordinary spectral flow records only the total dimension after applying the dimension map. Further, every non-self-paired orbit contributes an even amount to the ordinary
spectral flow.
\end{theorem}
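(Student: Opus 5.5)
The plan is to derive \eqref{eq:ROO2_sf_section6} from two structural properties of equivariant spectral flow: additivity over orthogonal direct sums of invariant subspaces, and compatibility with the dimension map. First, decompose the Hilbert space into $O(2)$-invariant blocks:
\begin{equation*}
L^2(M;S)=\Bigl(\bigoplus_{[k],\,k\neq k^\vee}\mathcal H_{[k]}\Bigr)\oplus\mathcal H_{-A_0},
\end{equation*}
where the last summand is present only if $-A_0\in\mathcal K$. By Proposition~\ref{prop:orbit_block_irreducible_section6}(i), each $\mathcal H_{[k]}$ is invariant and the restricted path is pointwise equivariant. The self-paired block is invariant because rotations preserve each Fourier sector and $\mathcal U_r$ fixes the label $k=-A_0$. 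The APS domain from \eqref{eq:APSdomain_section6} splits accordingly, since the boundary projectors are diagonal in $k$ and, by Corollary~\ref{cor:sec4_fixed_projector_pairing} together with \eqref{eq:sec5_kernel_invariance}, reflection-compatible. Consequently $D_s^{\APS}$ is an orthogonal direct sum of equivariant self-adjoint Fredholm paths, and each of these paths is norm-resolvent continuous by Proposition~\ref{prop:wellposed_section6}. Additivity of $\operatorname{sf}_{O(2)}$ over such sums then expresses the total as a sum of orbit contributions plus the self-paired term. Only finitely many terms are nonzero, because the crossings near zero along the compact parameter interval involve finitely many modes.

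Next, compute each orbit contribution. I would use the crossing description of $\operatorname{sf}_{O(2)}$: at an isolated crossing, with a spectral window $[-\epsilon,\epsilon]$, the contribution is the class of the spectral subspace that leaves the negative side minus the class of the subspace that enters it. On $\mathcal H_{[k]}$, any finite-dimensional invariant spectral subspace has the form $W\oplus\mathcal U_r W$ with $W\subset\mathcal H_k$. The reason is that $\mathcal U_r$ maps $\mathcal H_k$ onto $\mathcal H_{k^\vee}$, as shown in Lemma~\ref{lem:paired_mode_families}. Such a subspace is isomorphic to $\dim W$ copies of $\rho_k$: rotations act by $e^{ik\varphi}$ and $e^{ik^\vee\varphi}=e^{-ik\varphi}$ (using $A_0=0$), and the reflection swaps the two summands. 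This is the standard two-dimensional irreducible, viewed over $\R$ after the usual complex-to-real identification. It follows that the $[k]$-contribution equals $\operatorname{sf}(D^{\APS}_{s,k})\,[\rho_k]=N_{[k]}[\rho_k]$. The self-paired piece is, by definition, $\operatorname{sf}_{O(2)}^{\text{self-paired}}$, and it vanishes when the sector is absent or the restricted path stays invertible.

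Finally, apply $\dim$, which is a ring homomorphism carrying $\operatorname{sf}_{O(2)}$ to ordinary $\operatorname{sf}$. Since $\dim[\rho_k]=2$, this gives \eqref{eq:integer_sf_section6}. The evenness claim then also follows directly from \eqref{eq:orbit_block_sf_twice_section6}.

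The main obstacle I expect is the identification step: showing that every equivariant crossing space in $\mathcal H_{[k]}$ is a multiple of $\rho_k$ with the correct multiplicity. This requires care about the real versus complex representation ring, and about whether $\operatorname{sf}_{O(2)}$ is defined through crossings or through the equivariant Fredholm-pair/Kuiper framework of \cite{Fang,IzydorekJanczewskaWaterstraat2021}. I would handle it first under the standing assumption of isolated regular crossings. I would then extend to general paths by homotopy invariance, using a small equivariant perturbation (a $t$-dependent scalar, which preserves block structure) to reach a generic path.
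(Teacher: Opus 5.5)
Your proposal is correct and follows the same route as the paper. You decompose $L^2(M;S)$ into the $O(2)$-invariant orbit blocks $\mathcal H_{[k]}$ plus the self-paired block, use additivity of $\operatorname{sf}_{O(2)}$, identify each orbit contribution as $N_{[k]}[\rho_k]$, and apply $\dim$ with $\dim[\rho_k]=2$. Your observation that every spectral subspace of $\mathcal H_{[k]}$ has the form $W\oplus\mathcal U_r W\cong(\dim W)\,\rho_k$ supplies the justification for the step the paper only asserts with ``equivalently''. There the real/complex passage should be read as the inverse of complexification, which is legitimate because every irreducible representation of $O(2)$ is of real type.
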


\begin{proof}
By Fourier decomposition and reflection pairing,
\begin{equation}\label{eq:orbit_decomposition_section6}
L^2(M;S)
=
\bigoplus_{\substack{[k]\\ k\neq k^\vee}}\mathcal H_{[k]}
\;\oplus\;
\mathcal H_{\text{self-paired}},
\end{equation}
where the last summand is present only when the self-paired mode belongs to \(\mathcal K\).
Because the family is pointwise \(O(2)\)-equivariant, each summand is \(O(2)\)-invariant, and the equivariant spectral flow is additive over orthogonal direct sums. Thus, the full equivariant spectral flow is the sum of the contributions of the non-self-paired orbit blocks together with the possible self-paired contribution.

For each non-self-paired orbit \([k]\), Proposition~\ref{prop:orbit_block_irreducible_section6}
shows that \(\mathcal H_{[k]}\) is an \(O(2)\)-invariant Hilbert block built from the
real two-dimensional angular representation \([\rho_{k}]\), and that the ordinary spectral flow of the
full orbit block is
\begin{equation}
\operatorname{sf}\!\bigl(D_{s,[k]}^{\APS}\bigr)=2N_{[k]}.
\end{equation}
Equivalently, the corresponding \(RO(O(2))\)-valued contribution is \(N_{[k]}[\rho_{k}]\). Summing these orbit contributions and adjoining the possible self-paired term gives \eqref{eq:ROO2_sf_section6}.

Applying the dimension map and using
\begin{equation}\label{eq:dimension_of_orbit_type_section6}
\dim([\rho_{k}])=2
\end{equation}
for every non-self-paired orbit yields
\eqref{eq:integer_sf_section6}. The final evenness statement is immediate.
\end{proof}

\begin{remark}[Scope of the self-paired term]
\label{rem:wall_term_scope_section6}
Theorem~\ref{thm:evenness_off_wall_section6} isolates the self-paired sector as the
only possible source of odd ordinary spectral flow. In the present line-twist model, the
statement is structural: it does not assert a closed formula for
\(\operatorname{sf}_{O(2)}^{\text{self-paired}}(D_s^{\APS})\), only that every non-self-paired orbit contributes through the class \([\rho_{k}]\) attached to the real two-dimensional
angular type, equivalent to the regular crossing eigenspace in that orbit block.
\end{remark}

\begin{corollary}[Mod-two consequence of the equivariant decomposition]
\label{cor:Z2_from_O2_sf}
Under the assumptions of Theorem~\ref{thm:evenness_off_wall_section6},
\begin{equation}\label{eq:Z2_from_RO_section6}
\operatorname{sf}(D_s^{\APS})
\equiv
\dim\!\bigl(\operatorname{sf}_{O(2)}^{\text{self-paired}}(D_s^{\APS})\bigr)
\pmod 2.
\end{equation}
In particular, if the self-paired sector is absent or remains invertible throughout the path, then
\begin{equation}\label{eq:integer_sf_even_section6}
\operatorname{sf}(D_s^{\APS})\equiv 0\pmod 2.
\end{equation}
\end{corollary}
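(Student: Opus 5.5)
The corollary should follow directly from the integer identity \eqref{eq:integer_sf_section6} of Theorem~\ref{thm:evenness_off_wall_section6}. That theorem writes the ordinary spectral flow as $2\sum_{[k]} N_{[k]}$ plus the dimension of the self-paired equivariant term. Each $N_{[k]}=\operatorname{sf}(D_{s,k}^{\APS})$ is an integer, so the orbit sum is even, and reducing modulo $2$ gives \eqref{eq:Z2_from_RO_section6} at once. The only bookkeeping point is that the sum over non-self-paired orbits is finite, so that its reduction mod $2$ makes sense. I would argue this in two steps:
\begin{itemize}
\item The path $\{D_s^{\APS}\}$ is a norm-resolvent continuous path of self-adjoint Fredholm operators on $[0,1]$ (Proposition~\ref{prop:wellposed_section6}).
\item Hence only finitely many eigenvalues cross zero, so only finitely many orbit blocks have $N_{[k]}\neq 0$.
\end{itemize}
Alternatively, one can simply invoke the theorem's identity as stated, since it is an equality of integers.

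For the second statement, there are two cases.
\begin{itemize}
\item \emph{Self-paired sector absent.} If $-A_0\notin\mathcal K$, the self-paired term is zero by the convention \eqref{eq:wall_term_zero_section6}. In the gauge $A_0=0$ with periodic spin structure the mode $k=0$ is present; with the anti-periodic spin structure it is absent.
\item \emph{Self-paired sector invertible along the path.} If $D^{\APS}_{s,\text{self-paired}}$ is invertible for all $s$, then its spectral flow vanishes. Again the convention \eqref{eq:wall_term_zero_section6} sets its equivariant contribution to zero, consistent with homotopy invariance: no eigenvalue crosses zero.
\end{itemize}
In either case the dimension of that term is $0$, so $\operatorname{sf}(D_s^{\APS})\equiv 0 \pmod 2$.

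There is no real obstacle. The one point I would state explicitly is that the dimension map is a ring homomorphism $RO(O(2))\to\Z$, so applying it commutes with the finite sum in \eqref{eq:ROO2_sf_section6}; this is what the theorem already uses to pass to \eqref{eq:integer_sf_section6}.
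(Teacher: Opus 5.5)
Your proposal is correct and matches the paper's proof: reduce the integer identity \eqref{eq:integer_sf_section6} modulo $2$, and in the special case use the convention \eqref{eq:wall_term_zero_section6}, which sets the self-paired term to zero. Your extra bookkeeping is sound but already implicit in Theorem~\ref{thm:evenness_off_wall_section6}: the finiteness of the orbit sum follows from the finite-rank spectral projections in the Phillips construction, and $\dim$ is additive.
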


\begin{proof}
This is immediate from \eqref{eq:integer_sf_section6}. In particular, if the self-paired sector is absent or remains invertible throughout the path,
then \(\operatorname{sf}(D_s^{\APS})\in 2\mathbb Z\).
\end{proof}

\begin{remark}[On regular crossings]
\label{rem:regular_crossings_section6}
The regular-crossing assumption is made only to keep the notation transparent. As usual, we can achieve regular crossings after an arbitrarily small perturbation within the class of gauge-trivial fixed-holonomy pointwise \(O(2)\)-equivariant paths with fixed invertible endpoints, and the resulting equivariant spectral-flow class is unchanged by homotopy. Thus, Theorem~\ref{thm:evenness_off_wall_section6} remains valid without this auxiliary assumption.

\end{remark}

\section{\texorpdfstring{$\mathbb Z_2$}{Z2} crossing parity for holonomy deformations}
\label{sec:Z2_flow_1}
In the original line-twist model, a nontrivial holonomy deformation cannot remain pointwise \(O(2)\)-equivariant along the full path. We can admit a path that allows the endpoints a reflection lift even though the intermediate operators fail to be pointwise \(O(2)\)-equivariant. We study the invariant along such a path: the parity of APS boundary-crossing events. This gives the appropriate sign-level replacement for the \(RO(O(2))\)-valued spectral flow of~ \autoref{sec:O2_sf}.

\begin{proposition}[\texorpdfstring{$\mathbb Z_2$}{Z2} crossing parity]
\label{prop:line_twist_bridge}
Consider the line-twist holonomy model with a continuous holonomy path
\begin{equation}\label{eq:continuous_holonomy_path_section8}
A:[0,1]\to\R.
\end{equation}
Assume that for every \(s\in[0,1]\), the corresponding APS operator remains in the
pointwise \(O(2)\)-equivariant line-twist setting of \autoref{sec:O2_sf}. Then \(A\) is
constant on \([0,1]\).

Consequently, any non-constant holonomy deformation in the line-twist model is not pointwise \(O(2)\)-equivariant along the full path. For such a deformation, the \(RO(O(2))\)-valued spectral flow of \autoref{sec:O2_sf} is generally unavailable along the full path, and the relevant invariant is the boundary crossing parity introduced below.
\end{proposition}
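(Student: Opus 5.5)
The plan is to reduce the statement to Proposition~\ref{prop:sec4_constant_holonomy} by way of Lemma~\ref{lem:gauge_triviality_necessary}. First, I will make precise what it means for the APS operator at parameter \(s\) to ``remain in the pointwise \(O(2)\)-equivariant line-twist setting of \autoref{sec:O2_sf}''. It means that the rotation action \(\rho(\varphi)\) together with a unitary lift of the reflection defines an honest \(O(2)\)-action commuting with \(D^{\APS}_{A(s)}\). In particular, the reflection \(r\) lifts to a unitary symmetry of the twisted Dirac problem with holonomy parameter \(A(s)\), for each fixed \(s\).

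Second, I will apply Theorem~\ref{thm:O2lift} pointwise in \(s\). Existence of the reflection lift forces \(2A(s)\in\Z\) for every \(s\in[0,1]\). This weaker condition already suffices: Proposition~\ref{prop:sec4_constant_holonomy} then shows that the continuous map \(A\colon[0,1]\to\R\) takes values in the discrete set \(\tfrac12\Z\). By connectedness of \([0,1]\), \(A\) must be constant.

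Alternatively, and consistently with \autoref{sec:O2_sf}, one can invoke Lemma~\ref{lem:gauge_triviality_necessary} to obtain the stronger pointwise condition \(A(s)\in\Z\) and then run the same connectedness argument with \(\Z\) in place of \(\tfrac12\Z\). I will mention both routes, but base the proof on the first, since it uses only the existence of a lift.

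The consequence clause then follows by contraposition. If \(A\) is non-constant, there is some \(s\) at which pointwise \(O(2)\)-equivariance fails. The hypotheses of Theorem~\ref{thm:evenness_off_wall_section6}, in particular Lemma~\ref{lem:O2_equiv_family}, are therefore not satisfied along the whole path. Hence the \(RO(O(2))\)-valued spectral flow is not defined for the family, and only the parity invariant remains; this part is a statement of scope rather than something requiring proof.

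The main obstacle is the converse direction hidden in Theorem~\ref{thm:O2lift}. It needs to be read so that \emph{any} unitary lift of \(r\) commuting with \(D_{A(s)}\) forces \([A(s)]=[-A(s)]\), not only the specific \(\mathcal U_r\). If that step feels too thin, I would strengthen it with a spectral argument. A lift covering \(r\) must intertwine the boundary operator \(B_0\) with a conjugate of \(-B_0\), so the spectrum \(\{(k+A)/f(0)\}\) must be invariant under negation. This symmetry of the lattice \(\mathcal K+A\) gives \(2A\in\Z\) directly.
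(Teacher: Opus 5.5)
This is essentially the paper's argument. The paper reads membership in the setting of Section~\ref{sec:O2_sf} as \([A(s)]=0\), i.e.\ \(A(s)\in\Z\) via Lemma~\ref{lem:gauge_triviality_necessary}, and concludes by connectedness; your main route uses only the weaker consequence \(2A(s)\in\Z\) from Theorem~\ref{thm:O2lift} together with Proposition~\ref{prop:sec4_constant_holonomy}, which works equally well.

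One caution on your optional spectral backup: as written it proves nothing. The full boundary operator \(B_0=\frac{1}{f(0)}\sigma_3(-i\partial_\theta+A)\) has spectrum \(\{\pm(k+A)/f(0)\}\), which is symmetric for every \(A\), so its symmetry under negation carries no information. You would have to work with a single chiral half, whose spectrum is \(\{(k+A)/f(0)\}\), after first showing that the lift exchanges chiralities. This does hold: the principal symbol forces the lift's spinor part to be \(e^{i\alpha}\sigma_1\). Comparing lower-order terms then gives \(\partial_\theta\alpha=-2A\) directly, hence \(2A\in\Z\) by periodicity.
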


\begin{proof}
Assume that for every \(s\in[0,1]\), the corresponding line-twist APS operator remains in the
pointwise \(O(2)\)-equivariant setting of \autoref{sec:O2_sf}. By the analysis in
\autoref{sec:O2_sf}, this is exactly the gauge-trivial fixed-holonomy line-twist case. Hence,
for every \(s\in[0,1]\),
\begin{equation}
[A(s)]=0
\qquad\text{in }\R/\Z.
\end{equation}
Equivalently,
\begin{equation}
A(s)\in\Z
\qquad\text{for every }s\in[0,1].
\end{equation}
Since \(A:[0,1]\to\R\) is continuous and \(\Z\subset\R\) is discrete, the image \(A([0,1])\)
is connected and contained in a discrete set. Therefore \(A([0,1])\) consists of a single
point, so \(A\) is constant on \([0,1]\).

The conclusion now follows immediately. Any non-constant holonomy deformation in the original
line-twist model cannot remain in the pointwise \(O(2)\)-equivariant case along the full path.
As a result, the \(RO(O(2))\)-valued spectral flow from \autoref{sec:O2_sf} is generally
unavailable for the entire deformation, and the remaining invariant is the
\(\mathbb Z_2\)-valued boundary crossing parity we introduce below.
\end{proof}

\begin{remark}[Relation with \autoref{sec:O2_sf}]
\label{rem:bridge_evenness_to_Z2}
Proposition \ref{prop:line_twist_bridge} explains the dichotomy between the two moving-family cases treated in this paper.

In the gauge-trivial fixed-holonomy line-twist setting of \autoref{sec:O2_sf}, i.e. \([A_0]=0\) and the gauge choice \(A_0=0\), the ordinary spectral flow is even away from the self-paired sector, and the full invariant is the \(RO(O(2))\)-valued spectral flow; see Theorem \ref{thm:evenness_off_wall_section6} and Corollary \ref{cor:Z2_from_O2_sf}. However, a nontrivial holonomy deformation in the original line-twist model necessarily leaves that case. The representation-valued refinement is then lost, but the parity of the rank-one APS boundary crossing events survives. The present section isolates precisely this \(\mathbb Z_2\) datum.
\end{remark}

\subsection{\texorpdfstring{$\mathbb Z_2$}{Z2} crossing parity for holonomy deformations}
\label{subsec:Z2_flow_holonomy1}

\paragraph{Setup.}
Fix the Fourier lattice
\begin{equation}\label{eq:lattice_section8}
\mathcal K=\Z
\quad\text{(periodic spinors)},
\qquad
\mathcal K=\Z+\tfrac12
\quad\text{(anti-periodic spinors)}.
\end{equation}
Let $A:[0,1]\to\R $ be a smooth holonomy path, and for \(k\in\mathcal K\) we define
\begin{equation}\label{eq:m_sk_section8}
m(s,k)=k+A(s).
\end{equation}
The APS boundary data depend only on the sign of \(m(s,k)\) and can change only when
\begin{equation}
m(s,k)=0
\quad\Longleftrightarrow\quad
k+A(s)=0.
\end{equation}
For each \(s\in[0,1]\), let
\begin{equation}\label{eq:Ds_APS_section8}
D_s^{\APS}
\end{equation}
denote the APS operator corresponding to the holonomy value \(A(s)\).

\paragraph{Assumptions.}
Throughout this subsection, we assume:
\begin{equation}\label{eq:endpoint_invertibility_section8}
k+A(0)\neq 0,
\qquad
k+A(1)\neq 0
\qquad
\text{for all }k\in\mathcal K,
\end{equation}
and whenever
\begin{equation}\label{eq:wall_hit_section8}
k+A(s_*)=0,
\end{equation}
one has
\begin{equation}\label{eq:transversality_section8}
 A'(s_*)\neq 0.
\end{equation}
These hypotheses ensure that crossings are isolated and simple.

\paragraph{Convention on spectral flow in this subsection.}
Because the APS boundary condition changes at the crossing parameters \(k+A(s_*)=0\), the family \(s\mapsto D_s^{\APS}\) is in general only piecewise continuous as a family of self-adjoint Fredholm boundary problems. Therefore, throughout this subsection, the notation \(\operatorname{sf}(D_s^{\APS})\) refers to the \emph{crossing spectral flow} of the holonomy family, specifically the signed sum of the local simple crossings contributions established for this warped-cylinder APS model in \cite{KS}.
Away from the crossing parameters, this agrees with the usual spectral flow on each
continuity interval.

Moreover, since \(A([0,1])\subset\mathbb R\) is compact, only finitely many \(k\in\mathcal K\) can satisfy \(k+A(s)=0\) for some
\(s\in[0,1]\). Under the transversality assumption
\eqref{eq:transversality_section8}, all crossings are therefore finite in number and isolated.

To avoid confusion, this is not the ordinary spectral flow of a single continuous self-adjoint Fredholm path, but rather the boundary-crossing spectral flow defined by the modewise local crossing formula in \cite{KS}.

\begin{proposition}[Crossing formula for holonomy deformations]
\label{prop:wallcross}
Assume \eqref{eq:endpoint_invertibility_section8} and
\eqref{eq:transversality_section8}. Then the boundary crossing spectral flow of the APS holonomy family is
\begin{equation}\label{eq:integer_wallcross}
\operatorname{sf}\bigl(D_s^{\APS}\bigr)
=
\sum_{k\in\mathcal K}\ \sum_{s_*:\,k+A(s_*)=0}
\operatorname{sign}\bigl( A'(s_*)\bigr).
\end{equation}
Equivalently, each transverse crossing contributes \(\pm1\), with sign
determined by \(\operatorname{sign}(A'(s_*))\).
\end{proposition}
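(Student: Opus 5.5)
The crossing spectral flow is by definition the signed sum of local simple-crossing contributions from \cite{KS}, so the proof reduces to three things: localize to finitely many crossings, check that each crossing involves exactly one Fourier mode, and identify each local contribution with \(\operatorname{sign}(A'(s_*))\).

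\textbf{Step 1 (decomposition and finiteness).} By the Fourier decomposition of \autoref{sec:setup}, \(D_s^{\APS}=\bigoplus_{k\in\mathcal K}D_{s,k}^{\APS}\). On each mode, the APS condition is \eqref{eq:APS_mpos} or \eqref{eq:APS_mneg}, determined by \(\operatorname{sign}(k+A(s))\). By the compactness remark preceding the statement, only finitely many pairs \((k,s_*)\) satisfy \(k+A(s_*)=0\). By \eqref{eq:transversality_section8}, each such \(s_*\) is isolated. By \eqref{eq:endpoint_invertibility_section8}, none occurs at \(s=0\) or \(s=1\).

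\textbf{Step 2 (simplicity of crossings).} Fix a crossing time \(s_*\). The label \(k\) with \(k+A(s_*)=0\) is unique, since \(k=-A(s_*)\). So at \(s_*\) exactly one Fourier block changes its APS sign assignment, and every other block keeps a fixed boundary condition on a neighbourhood of \(s_*\). Away from the crossing parameters, the boundary condition on each block is constant on each continuity interval. By the convention stated above, there the crossing spectral flow agrees with the usual spectral flow, and no contribution arises beyond those recorded at the points \(s_*\). Thus the total is the sum over pairs \((k,s_*)\) of local contributions, each localized to the single block \(\mathcal H_k\).

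\textbf{Step 3 (local sign: the main step).} I will invoke the kernel-crossing mechanism of \cite{KS}. As \(m=k+A(s)\) passes through \(0\), a rank-one eigenvalue branch of \(D_{s,k}^{\APS}\) (the near-zero mode of the block) crosses \(0\) transversally. Its direction is the direction in which \(m\) moves, namely \(\operatorname{sign}(\partial_s m)=\operatorname{sign}(A'(s_*))\). This is the step I expect to be the main obstacle, for two reasons.
\begin{enumerate}[label=\textup{(\alph*)}]
\item The orientation must be matched carefully. The switch from \(u(0)=0,\ v(T)=0\) (for \(m>0\)) to \(v(0)=0,\ u(T)=0\) (for \(m<0\)) must produce a net \(+1\) when \(m\) increases, with the same global sign convention as \cite{KS}.
\item One must check that the contribution does not depend on the kernel completion at \(m=0\).
\end{enumerate}
I would argue as follows. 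At \(m=0\), the explicit kernel solution \(f^{-1/2}\) of \eqref{eq:mode_system_components} is a single normalizable zero mode. Its sign-change behaviour is exactly the rank-one crossing computed in \cite{KS}, so that computation can be cited directly rather than redone.

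Summing the local contributions over \(k\) and over the crossing times \(s_*\) gives \eqref{eq:integer_wallcross}.
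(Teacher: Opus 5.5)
Your Steps 1--2 and the final appeal to \cite{KS} reproduce the paper's proof, which consists of exactly this: Fourier reduction confines each crossing to the single mode $k=-A(s_*)$, transversality makes it simple, \cite{KS} supplies the local value $\operatorname{sign}(A'(s_*))$, and one sums. The problem is the mechanism you offer in Step 3 for that local value. In this model no eigenvalue branch of $D_{s,k}^{\APS}$ crosses $0$ at $s_*$.

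To see this, look at the crossing block. There $D^{(k)}=D^{(k)}\big|_{m=0}-\frac{m}{f}\sigma_2$, a bounded perturbation of norm $|m|/\min f$. On each side of $s_*$ the domain is fixed: $u(0)=0=v(T)$ for $m>0$, and $v(0)=0=u(T)$ for $m<0$. The unitary substitution $u=f^{-1/2}U$, $v=f^{-1/2}V$ turns the $m=0$ operator into $i\begin{pmatrix}0&\partial_t\\ \partial_t&0\end{pmatrix}$ on $L^2([0,T],dt;\C^2)$. With either pair of conditions its spectrum is $\{(n+\tfrac12)\pi/T:\ n\in\Z\}$. Hence for $0<|m|<\pi\min f/(2T)$ every eigenvalue of the crossing block satisfies $|\lambda|\ge \pi/(2T)-|m|/\min f>0$. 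The spectrum does not pass through $0$; it jumps because the domain jumps, and both one-sided limits at $s_*$ are invertible.

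Your zero-mode claim is also inaccurate. At $m=\lambda=0$ the system decouples into $u'+\frac{f'}{2f}u=0$ and $v'+\frac{f'}{2f}v=0$, so there are two candidates, $(f^{-1/2},0)$ and $(0,f^{-1/2})$. Neither satisfies either one-sided APS condition. Which of them, if any, survives at $s=s_*$ depends entirely on the kernel completion. So $\operatorname{sign}(A'(s_*))$ cannot be read off as the direction of a bulk eigenvalue, and your concerns (a) and (b) cannot be settled this way.

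The repair is to drop that heuristic. The paper \emph{defines} $\operatorname{sf}(D_s^{\APS})$ here as the signed sum of the local crossing contributions of \cite{KS}. These contributions are attached to the boundary event: in the single crossing mode the boundary eigenvalues, which are multiples of $m=k+A(s)$, change sign once and transversally. The assigned contribution is $\operatorname{sign}(\partial_s m)=\operatorname{sign}(A'(s_*))$, which is the paper's ``exactly one boundary eigenvalue crosses zero.'' Step 3 is then a citation, exactly as in the paper, and the kernel completion plays no role.

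Step 2 also tacitly uses that nothing is contributed inside a continuity interval. This holds blockwise, because each block is invertible for $m\neq 0$. For $m>0$, $A^-u=0$ forces $u\propto f^{-1/2}\exp\bigl(\int_0^t \tfrac{m}{f}\bigr)$, which $u(0)=0$ kills. Likewise $A^+v=0$ forces $v\propto f^{-1/2}\exp\bigl(-\int_0^t \tfrac{m}{f}\bigr)$, which $v(T)=0$ kills. The case $m<0$ is symmetric.
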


\begin{proof}
This is exactly the mode-by-mode boundary crossing formula established in
\cite{KS} for the warped-cylinder APS holonomy family. After Fourier
reduction, a crossing is confined to a single mode \(k\), in other words the one for
which \(m(s,k)=k+A(s)\) vanishes. Under the transversality hypothesis, that
crossing is simple, and its local contribution is
\(\operatorname{sign}(A'(s_*))\). Summing the local contributions over all
crossings and all Fourier modes yields \eqref{eq:integer_wallcross}.
\end{proof}

\begin{remark}[Modewise interpretation]
\label{rem:wallcross_summary}
Proposition~\ref{prop:wallcross} is the mode-by-mode boundary crossing formula: in the present
holonomy-deformation case, spectral flow is determined entirely by signed crossings of the self-paired sector \(k+A(s)=0\). When a full pointwise \(O(2)\)-equivariant structure is absent, this is
the correct replacement for an \(RO(O(2))\)-valued spectral-flow formula.
\end{remark}

\paragraph{Unsigned boundary crossing count and \texorpdfstring{$\mathbb Z_2$}{Z2} reduction.}
Define the total number of crossings by
\begin{equation}\label{eq:NA_def_section8}
N(A)
=
\#\bigl\{(s_*,k)\in(0,1)\times\mathcal K:\ k+A(s_*)=0\bigr\}.
\end{equation}

\begin{definition}[\texorpdfstring{$\mathbb Z_2$}{Z2} crossing parity]
\label{def:Z2_wall_parity}
The \(\mathbb Z_2\) boundary-parity invariant of the holonomy path \(A\) is
\begin{equation}\label{eq:SF_Z2_def_from_sf}
\operatorname{sf}_{\mathbb Z_2}(A)
=
\operatorname{sf}\bigl(D_s^{\APS}\bigr)\pmod 2
\in\mathbb Z_2.
\end{equation}
Equivalently,
\begin{equation}\label{eq:SF_Z2_def_from_N}
\operatorname{sf}_{\mathbb Z_2}(A)
=
N(A)\pmod 2.
\end{equation}
\end{definition}

\begin{theorem}[Parity formula]
\label{thm:Z2_parity_formula}
Under the standing assumptions,
\begin{equation}\label{eq:SF_Z2_wallcount}
\operatorname{sf}_{\mathbb Z_2}(A)
=
\sum_{k\in\mathcal K}
\#\bigl\{s_*\in(0,1):k+A(s_*)=0\bigr\}\pmod 2.
\end{equation}
Hence \(\operatorname{sf}_{\mathbb Z_2}(A)\) is simply the parity of the total number of boundary crossings.
\end{theorem}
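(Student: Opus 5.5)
The plan is to reduce the statement to Proposition~\ref{prop:wallcross} by reducing the signed crossing formula modulo two. First I will record that the set of crossing events is finite. The image $A([0,1])$ is compact, and $\mathcal K$ is a discrete lattice, so only finitely many $k\in\mathcal K$ satisfy $k+A(s)=0$ for some $s$. For each such $k$, the transversality hypothesis \eqref{eq:transversality_section8} makes every zero of $s\mapsto k+A(s)$ isolated. A set of isolated points in the compact interval $[0,1]$ is finite. By \eqref{eq:endpoint_invertibility_section8}, none of these zeros lies at $s=0$ or $s=1$, so all crossings lie in $(0,1)$. Hence the set counted by $N(A)$ in \eqref{eq:NA_def_section8} is finite, and it is the disjoint union over $k$ of the finite sets $\{s_*\in(0,1):k+A(s_*)=0\}$. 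This gives
\begin{equation*}
N(A)=\sum_{k\in\mathcal K}\#\bigl\{s_*\in(0,1):k+A(s_*)=0\bigr\},
\end{equation*}
where only finitely many summands are nonzero.

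Second, I will invoke Proposition~\ref{prop:wallcross}. It expresses $\operatorname{sf}(D_s^{\APS})$ as a finite sum of terms $\operatorname{sign}(A'(s_*))\in\{\pm1\}$, with exactly one term for each crossing pair $(s_*,k)$. Since $\pm1\equiv 1\pmod 2$, reducing this sum modulo two gives $\operatorname{sf}(D_s^{\APS})\equiv N(A)\pmod 2$. Combined with the first part of Definition~\ref{def:Z2_wall_parity}, namely \eqref{eq:SF_Z2_def_from_sf}, this yields \eqref{eq:SF_Z2_wallcount}. It also shows that the two descriptions \eqref{eq:SF_Z2_def_from_sf} and \eqref{eq:SF_Z2_def_from_N} in the definition agree.

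The main point requiring care is the finiteness step, because it is what makes both sums meaningful and the reduction modulo two legitimate. I would argue it directly: by transversality each crossing is a simple zero of $k+A$, hence isolated; a compact interval contains only finitely many isolated zeros; and the endpoint assumption rules out zeros at the boundary of $[0,1]$, where the counting in \eqref{eq:NA_def_section8} over $(0,1)$ would otherwise be ambiguous. Everything else is formal bookkeeping, since all the analysis is already contained in Proposition~\ref{prop:wallcross} via \cite{KS}.
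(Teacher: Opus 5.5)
Your proposal is correct and takes the same approach as the paper: the paper's proof simply reduces the signed crossing formula of Proposition~\ref{prop:wallcross} modulo $2$, so the signs $\operatorname{sign}(A'(s_*))$ disappear and only the crossing count remains. Your finiteness step adds care that the paper only records in its convention paragraph, but it should also note that each zero set $\{s: k+A(s)=0\}$ is closed by continuity, since an arbitrary set of isolated points in $[0,1]$ need not be finite.
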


\begin{proof}
Reduce the boundary crossing formula \eqref{eq:integer_wallcross} modulo \(2\). The signs
\(\operatorname{sign}(A'(s_*))\) disappear modulo \(2\), and only the total number of
crossings remains.
\end{proof}

\begin{proposition}[Local APS sign switch at a transverse boundary]
\label{prop:rank_one_jump_section8}
At a transverse crossing \(k+A(s_*)=0\), exactly one Fourier mode changes its
APS sign assignment, i.e. the mode \(k\) for which \(m(s_*,k)=0\).
Equivalently, in that mode, the boundary condition switches from the \(m>0\)
form \eqref{eq:APS_mpos} to the \(m<0\) form \eqref{eq:APS_mneg}, or vice
versa. Consequently, \(\operatorname{sf}_{\mathbb Z_2}(A)\) is the parity of
the number of simple APS sign-switch events along the path.
\end{proposition}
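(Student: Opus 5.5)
The APS condition for each Fourier mode is determined by the sign of \(m(s,k)=k+A(s)\) alone, through \eqref{eq:APS_mpos} and \eqref{eq:APS_mneg}. So the plan is to follow, for every \(k\in\mathcal K\), how that sign changes along the path. The argument is elementary and has three steps.

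\textbf{Step 1: only one mode reaches zero at a crossing time.} Fix a crossing time \(s_*\) and a label \(k\) with \(k+A(s_*)=0\). Any other label \(k'\in\mathcal K\) satisfies
\[
m(s_*,k')=k'-k\neq 0 .
\]
Both \(k\) and \(k'\) lie in the lattice \(\mathcal K\), and \(\mathcal K\) is either \(\Z\) or \(\Z+\tfrac12\). Hence
\[
|m(s_*,k')|\ge 1 .
\]

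\textbf{Step 2: the other modes keep their APS form near \(s_*\).} The map \(A\) is continuous. Choose a neighbourhood \(J\) of \(s_*\) on which \(|A(s)-A(s_*)|<\tfrac12\). For every \(k'\neq k\) and every \(s\in J\), the quantity \(m(s,k')\) stays nonzero and keeps a constant sign. This bound is uniform in \(k'\), which is what rules out infinitely many modes switching. Consequently the boundary condition on every block \(\mathcal H_{k'}\) with \(k'\neq k\) has the same form throughout \(J\).

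\textbf{Step 3: the distinguished mode changes sign.} Differentiating in \(s\) gives
\[
\partial_s m(s_*,k)=A'(s_*),
\]
which is nonzero by \eqref{eq:transversality_section8}. So \(m(s,k)\) changes sign strictly as \(s\) passes \(s_*\). On \(J\), after shrinking it if needed, the mode \(k\) satisfies \eqref{eq:APS_mpos} on one side of \(s_*\) and \eqref{eq:APS_mneg} on the other. The direction of the switch is fixed by \(\operatorname{sign}(A'(s_*))\).

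This also shows that every zero of \(m(\cdot,k)\) is isolated. Combined with compactness of \([0,1]\) and the finiteness remark preceding Proposition~\ref{prop:wallcross}, the pairs \((s_*,k)\) with \(k+A(s_*)=0\) are finite in number. They are in bijection with the APS sign-switch events: each switch occurs at a zero of \(m\), and by the above each zero produces exactly one switch. By the endpoint assumption \eqref{eq:endpoint_invertibility_section8}, no zero occurs at \(s=0\) or \(s=1\), so all of them lie in \((0,1)\).

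\textbf{Conclusion.} The number of sign-switch events therefore equals \(N(A)\) from \eqref{eq:NA_def_section8}. Definition~\ref{def:Z2_wall_parity} and Theorem~\ref{thm:Z2_parity_formula} then identify \(\operatorname{sf}_{\mathbb Z_2}(A)\) with \(N(A)\bmod 2\), which is the parity of the number of simple APS sign-switch events.

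The only point needing care is the one handled in Step 2: using the lattice spacing \(|k'-k|\ge 1\) to obtain a neighbourhood that works for all other modes at once. Everything else follows directly from the transversality and endpoint assumptions.
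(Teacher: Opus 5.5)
Your proposal is correct and follows the same route as the paper. In both, the APS condition on each block depends only on $\operatorname{sign} m(s,k)$, a transverse zero of $k+A(s)$ flips exactly one mode, and the parity statement comes from reducing the crossing count modulo $2$ via Theorem~\ref{thm:Z2_parity_formula}. Your Steps 1--2 and the finiteness/bijection argument just spell out what the paper's short proof leaves implicit: the lattice-spacing bound $|m(s_*,k')|\ge 1$ gives a single neighbourhood that works for all $k'\neq k$, and each switch event corresponds to exactly one zero of $m$.
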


\begin{proof}
Away from the boundary zeros, the APS boundary condition depends only on the sign of
\(m(s,k)=k+A(s)\). At a transverse zero of \(m(s,k)\), exactly one Fourier
mode changes sign, and hence exactly one mode changes its APS boundary
assignment. Modulo \(2\), counting such simple sign-switch events is exactly
the boundary crossing count of Theorem~\ref{thm:Z2_parity_formula}.
\end{proof}

\begin{corollary}[Monotone case]
\label{cor:monotone_wall_parity}
If \(A\) is strictly monotone, then each equation \(k+A(s)=0\) has at most one solution, and
\begin{equation}\label{eq:SF_Z2_monotone}
\operatorname{sf}_{\mathbb Z_2}(A)
\equiv
\#\bigl(\mathcal K\cap(-A(1),-A(0))\bigr)
\pmod 2,
\end{equation}
with the obvious interval modification if \(A(1)<A(0)\).
\end{corollary}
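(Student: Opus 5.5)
The plan is to reduce Corollary~\ref{cor:monotone_wall_parity} to Theorem~\ref{thm:Z2_parity_formula}. That theorem already expresses \(\operatorname{sf}_{\mathbb Z_2}(A)\) as the parity of the number of pairs \((s_*,k)\in(0,1)\times\mathcal K\) with \(k+A(s_*)=0\). So it is enough to count these pairs when \(A\) is strictly monotone. I will treat the increasing case \(A(0)<A(1)\) first. The decreasing case then follows by applying the same argument to the reversed path \(s\mapsto A(1-s)\), which has the same crossing set up to \(s\mapsto 1-s\). It also follows directly by swapping the endpoints of the interval; this is the ``obvious interval modification''.

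\textbf{Step 1: uniqueness of crossings per mode.} Since \(A\) is strictly monotone, it is injective. Hence for each fixed \(k\in\mathcal K\), the equation \(A(s)=-k\) has at most one solution \(s_*\in[0,1]\). This gives the first claim of the corollary. It also shows that each summand \(\#\{s_*\in(0,1):k+A(s_*)=0\}\) in \eqref{eq:SF_Z2_wallcount} is either \(0\) or \(1\).

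\textbf{Step 2: identification of the contributing modes.} Because \(A\) is continuous and strictly increasing, it maps \((0,1)\) bijectively onto the open interval \((A(0),A(1))\), by the intermediate value theorem. Thus a solution \(s_*\in(0,1)\) exists if and only if \(-k\in(A(0),A(1))\), that is, \(k\in(-A(1),-A(0))\).

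The endpoint assumption \eqref{eq:endpoint_invertibility_section8} ensures that no \(k\in\mathcal K\) equals \(-A(0)\) or \(-A(1)\). So it does not matter whether the interval is taken open or closed, and no crossing sits at \(s=0\) or \(s=1\). The interval is bounded, so only finitely many \(k\) contribute. Summing the \(0/1\) contributions from Step~1 then gives
\[
N(A)=\#\bigl(\mathcal K\cap(-A(1),-A(0))\bigr),
\]
and reducing modulo \(2\) yields \eqref{eq:SF_Z2_monotone}.

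\textbf{Main obstacle.} The only delicate point is the standing transversality hypothesis \eqref{eq:transversality_section8}. Strict monotonicity alone allows \(A'(s_*)=0\) at a crossing, as with \(s^3\). However, the parity count in Theorem~\ref{thm:Z2_parity_formula} only uses the cardinality of the crossing set, and Step~1 controls that cardinality by injectivity. So I would note that the count itself does not use transversality. The identification with \(\operatorname{sf}\) modulo \(2\) is taken under the standing assumptions, exactly as in the theorem.
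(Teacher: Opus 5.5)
Your proof is correct and follows the same route as the paper: injectivity of a strictly monotone $A$ gives at most one crossing per mode, the crossing modes are exactly the points of $\mathcal K$ lying between $-A(1)$ and $-A(0)$, and reducing this count modulo $2$ via Theorem~\ref{thm:Z2_parity_formula} gives the formula. Your extra details (the intermediate value theorem, the role of endpoint invertibility, the reversed path for the decreasing case, and the remark that the count itself does not need transversality) make explicit what the paper's two-line proof leaves implicit.
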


\begin{proof}
Under strict monotonicity, a boundary zero can be crossed at most once for each \(k\in\mathcal K\). The
crossing parity is therefore obtained by counting which lattice points of \(\mathcal K\) lie
between the two endpoint values \(-A(0)\) and \(-A(1)\).
\end{proof}

\begin{remark}[Why endpoint data do not define an \texorpdfstring{$RO(O(2))$}{RO(O(2))}-valued flow]
\label{rem:endpoints_not_enough_section8}
If the path \(s\mapsto D_s^{\APS}\) fails to be pointwise \(O(2)\)-equivariant for intermediate \(s\), then the \(RO(O(2))\)-valued equivariant spectral flow is not defined for the full path, even if the endpoints admit a reflection lift. Nevertheless,
\(\operatorname{sf}_{\mathbb Z_2}(A)\) is always defined under the standing assumptions and provides a robust invariant of the deformation.
\end{remark}

\subsection{\texorpdfstring{$\mathbb Z_2$}{Z2} crossing parity and sign-level determinant language}
\label{subsec:pfaffian_Z2_tight}

We now record a sign-level determinant/Pfaffian interpretation of the same parity invariant. We do not introduce a new invariant, and nor construct an actual determinant-line orientation transport for the present complex APS family. Rather, we interpret the same \(\mathbb Z_2\)-class in a form closer to the determinant/Pfaffian sign language, in analogy with the real Fredholm setting.

\paragraph{Sign-level replacement in the non-\texorpdfstring{$O(2)$}{O(2)}-equivariant case.}
For the complex APS family considered here, we do not introduce a determinant-line orientation transport as a primary invariant. The remaining sign-level invariant is instead the mod-two crossing parity from Definition~\ref{def:Z2_wall_parity}. In the present
holonomy-deformation model, this \(\mathbb Z_2\)-valued quantity is the correct sign-level replacement for the representation-valued spectral flow available in the pointwise \(O(2)\)-equivariant case; compare the parity viewpoint for real Fredholm paths \cite{DollSchulzBaldesWaterstraat2019} and the determinant-line/orientation-transport formalism in the real setting \cite{NicolaescuOT}.

\paragraph{Warped cylinder}
In the present holonomy-deformation model, crossings occur exactly when
\begin{equation}\label{eq:wall_hit_repeat_section8}
k+A(s_*)=0.
\end{equation}
By the mode-by-mode boundary crossing analysis from
\cite{KS}, a transverse crossing affects exactly one Fourier
mode, and exactly one boundary eigenvalue crosses zero. Equivalently, the associated APS family undergoes a simple one-dimensional kernel event.

\begin{corollary}[Crossing parity as the remaining global sign]
\label{cor:endpoint_sign_wall_parity}
If the endpoints admit a reflection lift, but the path is not pointwise
\(O(2)\)-equivariant, then the corresponding sign-level invariant of the deformation is
\(\operatorname{sf}_{\mathbb Z_2}(A)\). Equivalently, it is the parity of the total number
of rank-one APS boundary crossing events.
\end{corollary}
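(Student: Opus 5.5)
The plan is to assemble the corollary from three earlier results rather than from any new analysis.

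First, I will show that the \(RO(O(2))\)-valued invariant is not available along the path. If the endpoints admit a reflection lift, Theorem~\ref{thm:O2lift} gives \(2A(0)\in\Z\) and \(2A(1)\in\Z\). The hypothesis that the path is not pointwise \(O(2)\)-equivariant then places us in the case excluded by Proposition~\ref{prop:line_twist_bridge}. Indeed, by that proposition, together with Proposition~\ref{prop:sec4_constant_holonomy} and Lemma~\ref{lem:gauge_triviality_necessary}, pointwise equivariance would force \(A\) to be constant and gauge-trivial. So the path must leave the equivariant setting at some intermediate parameter, and Remark~\ref{rem:endpoints_not_enough_section8} then says \(\operatorname{sf}_{O(2)}\) is undefined for the full path. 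What remains available is the boundary-crossing spectral flow of Proposition~\ref{prop:wallcross}, under the standing assumptions \eqref{eq:endpoint_invertibility_section8} and \eqref{eq:transversality_section8}. Its reduction mod \(2\) is, by Definition~\ref{def:Z2_wall_parity}, exactly \(\operatorname{sf}_{\mathbb Z_2}(A)\).

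Second, I will identify this quantity with the parity of rank-one crossing events. Theorem~\ref{thm:Z2_parity_formula} gives \(\operatorname{sf}_{\mathbb Z_2}(A)=N(A)\bmod 2\). Proposition~\ref{prop:rank_one_jump_section8} shows that each event \(k+A(s_*)=0\) changes the APS assignment of exactly one Fourier mode. By the crossing analysis of \cite{KS} recalled just above, each such event is a simple one-dimensional kernel event. The count of crossings and the count of rank-one events therefore coincide, which proves the stated equivalence.

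The main obstacle is that the statement calls this \emph{the} sign-level invariant, which suggests some canonicity or homotopy-invariance claim. I would interpret it in the weak sense the section already adopts: among the structures surviving the loss of equivariance, the only one recorded is the mod-two reduction of the crossing flow.

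Where invariance is needed, I will argue as follows. Under homotopies of \(A\) that keep the endpoints fixed and avoid endpoint crossings, \(N(A)\bmod 2\) is unchanged. Crossings are created or annihilated in pairs at tangencies, and this is visible in Corollary~\ref{cor:monotone_wall_parity}: the parity depends only on the lattice points of \(\mathcal K\) lying between \(-A(0)\) and \(-A(1)\). This endpoint-determined parity would serve as the robustness statement justifying the word ``invariant.''
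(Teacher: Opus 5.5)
Your proposal is correct and follows the paper's proof. The paper likewise uses Proposition~\ref{prop:line_twist_bridge} to rule out the \(RO(O(2))\)-valued flow along the path, then Theorem~\ref{thm:Z2_parity_formula} together with Proposition~\ref{prop:rank_one_jump_section8} to identify the surviving datum with the parity of simple APS sign-switch events; your extra citations of Theorem~\ref{thm:O2lift}, Lemma~\ref{lem:gauge_triviality_necessary} and Remark~\ref{rem:endpoints_not_enough_section8} are harmless but not needed. Your robustness paragraph goes beyond what the paper proves, and it should not rest on the monotone Corollary~\ref{cor:monotone_wall_parity}: the clean argument is that, for each \(k\), the number of transverse zeros of \(s\mapsto k+A(s)\) on \((0,1)\) is odd exactly when \(k+A(0)\) and \(k+A(1)\) have opposite signs, which shows for every admissible path, monotone or not, that \(\operatorname{sf}_{\mathbb Z_2}(A)\) (indeed even the signed integer flow) depends only on \(A(0)\) and \(A(1)\).
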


\begin{proof}
By Proposition~\ref{prop:line_twist_bridge}, a non-constant holonomy path in the line-twist
model cannot remain pointwise \(O(2)\)-equivariant, so the representation-valued invariant
of \autoref{sec:O2_sf} is unavailable along the full path. By Theorem~\ref{thm:Z2_parity_formula} and Proposition~\ref{prop:rank_one_jump_section8}, the remaining sign-level data of the deformation is exactly the crossing parity
\(\operatorname{sf}_{\mathbb Z_2}(A)\), i.e., the parity of the simple APS
sign-switch events along the path.
\end{proof}

\subsubsection*{Example (periodic spin structure)}
Assume \(\mathcal K=\mathbb Z\), and consider the path
\begin{equation}\label{eq:OT_example_path_section8}
A(s)=\frac12+s,
\qquad
s\in[0,1].
\end{equation}
Then the endpoint invertibility condition
\eqref{eq:endpoint_invertibility_section8} holds, and there is exactly one boundary crossing, at \(s_*=\tfrac12\) for the mode \(k=-1\). Hence,
\begin{equation}\label{eq:OT_example_values_section8}
N(A)=1,
\qquad
\operatorname{sf}_{\mathbb Z_2}(A)=1.
\end{equation}
Thus, the path has odd crossing parity: exactly one simple APS sign-switch event
occurs along the deformation.

\begin{remark}[Domain-wall interpretation of the mod-two parity]
\label{rem:domain_wall_mod2_bridge}
The parity invariant \(\operatorname{sf}_{\mathbb Z_2}(A)\) admits a natural
conceptual interpretation in the light of the mod-two APS/domain-wall
correspondence. In the formulation of
\cite{FukayaFurutaMatsukiMatsuoOnogiYamaguchiYamashita2022}, the mod-two APS index of a real Dirac problem on a manifold with boundary is reformulated on a
closed manifold by replacing the boundary with a sign-changing mass term, i.e.
by a domain wall. In that picture, the mod-two index is read as a fermionic sign
associated with the wall system, and the bulk and edge contributions are
naturally separated.

The present warped-cylinder holonomy deformation is not itself rewritten here as
a domain-wall fermion problem, so we do \emph{not} claim a direct identification
theorem. Nevertheless, the role of
\(\operatorname{sf}_{\mathbb Z_2}(A)\) is fully consistent with that
perspective: when pointwise \(O(2)\)-equivariance is lost and the full
\(RO(O(2))\)-valued spectral flow of \autoref{sec:O2_sf} is unavailable, the deformation still carries a robust sign-level invariant, namely the parity
of the simple APS boundary crossing events. Thus, \(\operatorname{sf}_{\mathbb Z_2}(A)\) may
be viewed as the natural mod-two shadow of the more refined symmetry-resolved
spectral-flow data.
\end{remark}

\paragraph{Pfaffian and domain-wall language.}
For real skew-adjoint or Majorana-type formulations, one often packages the same
mod-two information as a Pfaffian sign: a simple eigenvalue crossing flips the
Pfaffian sign once. From this viewpoint, the parity class
\(\operatorname{sf}_{\mathbb Z_2}(A)\) may be regarded as the global fermionic
sign obstruction attached to the holonomy path. This interpretation is
compatible with the domain-wall reformulation of the mod-two APS index in
\cite{FukayaFurutaMatsukiMatsuoOnogiYamaguchiYamashita2022}, where the
boundary problem is replaced by a sign-changing mass term on a closed manifold.
We emphasize again that in the present paper, this is an interpretive analogy,
not an additional equivalence theorem.

\appendix
\section{Numerical illustration of the single-mode APS spectrum}
\label{Appendix:A}

We briefly record the numerical procedure used to generate the single-mode spectrum plot in the
reflection-compatible fixed-holonomy case. In the sample computation, we take
\begin{equation}
T=3,\qquad \alpha=1,\qquad A_0=\frac12,\qquad k=1,
\end{equation}
so that
\begin{equation}
m=k+A_0=\frac32>0.
\end{equation}
Thus, we are in a non-boundary-zero mode of the reflection-compatible sector
\begin{equation}
A=A_0,\qquad 2A_0\in\mathbb Z.
\end{equation}
We work with the warped factor
\begin{equation}
f(t)=e^t+\alpha e^{-t}=e^t+e^{-t},
\qquad t\in[0,T]=[0,3].
\end{equation}

Starting from the mode system
\begin{equation}
A^+v=-i\lambda u,\qquad A^-u=-i\lambda v,
\qquad
A^\pm=\partial_t+\frac{f'(t)}{2f(t)}\pm\frac{m}{f(t)},
\end{equation}
it is convenient to write
\begin{equation}
p(t)=\frac{f'(t)}{2f(t)},
\qquad
q(t)=\frac{m}{f(t)}.
\end{equation}
Then
\begin{equation}
A^+=\partial_t+p+q,
\qquad
A^-=\partial_t+p-q.
\end{equation}

Applying \(A^+\) to the equation \(A^-u=-i\lambda v\) gives
\begin{equation}
A^+A^-u=-\lambda^2 u,
\end{equation}
while applying \(A^-\) to \(A^+v=-i\lambda u\) gives
\begin{equation}
A^-A^+v=-\lambda^2 v.
\end{equation}
Hence \(u\) and \(v\) each satisfy a scalar second-order equation. To remove the first derivative
term, we set
\begin{equation}
u(t)=f(t)^{-1/2}U(t),
\qquad
v(t)=f(t)^{-1/2}V(t).
\end{equation}
A direct computation then yields the decoupled real equations
\begin{subequations}
\begin{equation}
U''+\bigl(\lambda^2-q'(t)-q(t)^2\bigr)U=0,
\end{equation}
and
\begin{equation}
V''+\bigl(\lambda^2+q'(t)-q(t)^2\bigr)V=0.
\end{equation}
\end{subequations}
Equivalently, since \(q(t)=m/f(t)\),
\begin{equation}
U''+\left(\lambda^2+\frac{m\,f'(t)}{f(t)^2}-\frac{m^2}{f(t)^2}\right)U=0,
\end{equation}
and
\begin{equation}
V''+\left(\lambda^2-\frac{m\,f'(t)}{f(t)^2}-\frac{m^2}{f(t)^2}\right)V=0.
\end{equation}

The APS boundary condition becomes a scalar boundary-value problem, and it is useful to record both sign cases separately.

\medskip
\noindent
\textbf{Case \(m>0\).}
In this case, the APS condition is
\begin{equation}
u(0)=0,
\qquad
v(T)=0.
\end{equation}
Using
\begin{equation}
A^-u=-i\lambda v
\end{equation}
at \(t=T\), the condition \(v(T)=0\) is equivalent to
\begin{equation}
A^-u(T)=0.
\end{equation}
Since
\begin{equation}
A^-\!\bigl(f^{-1/2}U\bigr)=f^{-1/2}(U'-qU),
\end{equation}
the scalar \(U\)-problem becomes
\begin{equation}
U(0)=0,
\qquad
U'(T)-q(T)U(T)=0.
\end{equation}
Thus, one may solve the \(U\)-equation with the normalized initial data
\begin{equation}
U(0)=0,
\qquad
U'(0)=1,
\end{equation}
and define the shooting function
\begin{equation}
S_U(\lambda)=U'(T;\lambda)-q(T)U(T;\lambda).
\end{equation}
Its zeros are exactly the APS eigenvalues of the chosen mode.

Equivalently, one may work with the \(V\)-equation. Using
\begin{equation}
A^+v=-i\lambda u
\end{equation}
at \(t=0\), the condition \(u(0)=0\) is equivalent to
\begin{equation}
A^+v(0)=0.
\end{equation}
Since
\begin{equation}
A^+\!\bigl(f^{-1/2}V\bigr)=f^{-1/2}(V'+qV),
\end{equation}
the scalar \(V\)-problem becomes
\begin{equation}
V'(0)+q(0)V(0)=0,
\qquad
V(T)=0.
\end{equation}
Thus, one may instead solve the \(V\)-equation with the normalized initial data
\begin{equation}
V(0)=1,
\qquad
V'(0)=-q(0),
\end{equation}
and define
\begin{equation}
S_V(\lambda)=V(T;\lambda).
\end{equation}
Again, its zeros are exactly the APS eigenvalues.

\medskip
\noindent
\textbf{Case \(m<0\).}
In this case, the APS condition is
\begin{equation}
v(0)=0,
\qquad
u(T)=0.
\end{equation}
Using
\begin{equation}
A^-u=-i\lambda v
\end{equation}
at \(t=0\), the condition \(v(0)=0\) is equivalent to
\begin{equation}
A^-u(0)=0.
\end{equation}
Hence the scalar \(U\)-problem becomes
\begin{equation}
U'(0)-q(0)U(0)=0,
\qquad
U(T)=0.
\end{equation}
A convenient normalization is
\begin{equation}
U(0)=1,
\qquad
U'(0)=q(0),
\end{equation}
and the corresponding shooting function is
\begin{equation}
S_U(\lambda)=U(T;\lambda).
\end{equation}

Equivalently, using
\begin{equation}
A^+v=-i\lambda u
\end{equation}
at \(t=T\), the condition \(u(T)=0\) is equivalent to
\begin{equation}
A^+v(T)=0,
\end{equation}
so the scalar \(V\)-problem becomes
\begin{equation}
V(0)=0,
\qquad
V'(T)+q(T)V(T)=0.
\end{equation}
Thus, one may solve the \(V\)-equation with the normalized initial data
\begin{equation}
V(0)=0,
\qquad
V'(0)=1,
\end{equation}
and define
\begin{equation}
S_V(\lambda)=V'(T;\lambda)+q(T)V(T;\lambda).
\end{equation}

In either sign case, the zeros of the corresponding scalar shooting function are precisely the APS
eigenvalues of the chosen Fourier block. For the numerical plot in the main text, we use the sample
values
\begin{equation}
T=3,\qquad \alpha=1,\qquad A_0=\frac12,\qquad k=1,
\qquad m=\frac32>0,
\end{equation}
so the displayed graph is the shooting function
\begin{equation}
S(\lambda)=S_U(\lambda)=U'(T;\lambda)-q(T)U(T;\lambda)
\end{equation}
associated with the \(U\)-equation. \autoref{fig:maslov_example3} is included only as a numerical illustration of the
single-block spectrum in the warped fixed-holonomy model; none of the proofs in the main text
depend on this computation.

\begin{figure}[t]
  \centering
  \includegraphics[width=.85\linewidth]{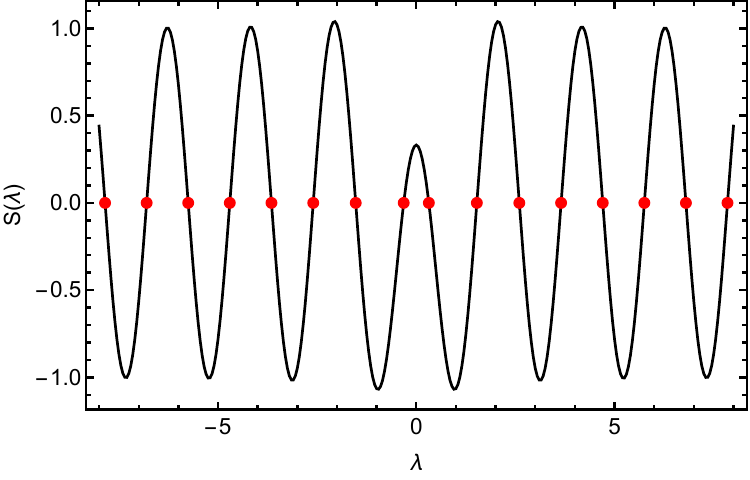}
\caption{Numerical illustration of the APS spectrum for the warped fixed-holonomy model with
\(T=3\), \(\alpha=1\), \(A_0=\frac12\), and \(k=1\), so that \(m=\frac32\). The warp factor is
\(f(t)=e^t+e^{-t}\). The plot shows the scalar function \(S(\lambda)\) obtained from the decoupled second-order equation. Its zeros are precisely the APS eigenvalues of the chosen non-self-paired mode, and the red circles indicate these eigenvalues.}
\label{fig:maslov_example3}
\end{figure}

\bibliographystyle{utphys}
\bibliography{ref}

\end{document}